\newif\ifcomment
\newcommand{\cra}[1]{\ifcomment \textcolor{blue}{Charul: #1} \fi}
\begin{document}

\newtheorem{theorem}{Theorem}
\newtheorem{definition}{Definition}
\newtheorem{corollary}{Corollary}
\newtheorem{lemma}{Lemma}
\newtheorem{example}{Example}
\newtheorem{remark}{Remark}
\newtheorem{construction}{Construction}


\title{Function-Correcting Partition Codes}

\author{\IEEEauthorblockN{ Charul Rajput\IEEEauthorrefmark{1}, B. Sundar Rajan\IEEEauthorrefmark{2}, Ragnar Freij-Hollanti\IEEEauthorrefmark{1}, Camilla Hollanti\IEEEauthorrefmark{1}}

\IEEEauthorblockA{\IEEEauthorrefmark{1}Department of Mathematics and Systems Analysis, Aalto University, Finland
    \\charul.rajput9@gmail.com, \{ragnar.freij, camilla.hollanti\}@aalto.fi}\\
    \IEEEauthorblockA{\IEEEauthorrefmark{2}Department of Electrical Communication Engineering, Indian Institute of Science, Bengaluru, India
    \\\ bsrajan@iisc.ac.in}}

\maketitle

\begin{abstract}
We introduce \emph{function-correcting partition codes (FCPCs)}, which are a natural generalization of function-correcting codes (FCCs). 
An FCPC is defined directly on a partition of the message space, rather than on a specific target function. 
We show that any $t$-error correcting code for a function $f$ is exactly an FCPC with respect to the domain partition induced by $f$, which makes these codes a natural generalization of FCCs.
We use the \emph{join} of domain partitions to construct a single code that protects multiple functions simultaneously, with application for scenarios where a receiver needs to protect more than one function value or in a broadcast scenario where the different receivers need protection for different function values. We define the notions of \emph{partition gains} to measure the bandwidth saved by using a single FCPC for multiple functions instead of constructing separate FCCs for each function. We derive general lower and upper bounds on the redundancy of such FCPCs and illustrate the achievable gains through examples. We specialize this concept of using single code for protecting multiple functions to linear functions via coset partition of the intersection of their kernels. We also present explicit FCPC constructions for locally bounded partitions and grouped weight partitions.
 Then, we associate a partition graph with any given partition of $\mathbb{F}_q^k$, and show that the existence of a suitable clique in this graph yields a set of representative information vectors that achieves the optimal redundancy. Using the existence of a full-size clique in the \emph{weight partition} and \emph{support partition}, we obtain lower and upper bounds on the optimal redundancy of FCPCs for these partitions. We introduce the notion of a \emph{block-preserving contraction} for a partition, which helps reduce the problem size of finding optimal redundancy for an FCPC. We further show that such a contraction exists for all weight-based partitions. Finally, we observe that FCPCs naturally provide a form of partial privacy in the sense that only the domain partition of the function needs to be revealed to the transmitter.

\end{abstract}
%

%
\begin{IEEEkeywords}
Clique, Error-correction, Function-correcting codes,  Graphs, Partitions, Redundancy
\end{IEEEkeywords}
  \section{Introduction}
\label{intro}

The theory of function-correcting codes (FCCs) has so far been developed primarily for the standard one-to-one communication setting, where a single receiver is interested in obtaining the value of a particular function of the transmitted message vector. In this work, we move beyond that setting and consider a more general and practically relevant scenario. A transmitter broadcasts a message vector over a shared and error-prone link to multiple receivers, and each receiver is interested in a different function or attribute of the same message vector. Since these receivers may require protection against errors for their respective functions, the natural question is whether one can design a single code that simultaneously protects all the desired functions from up to $t$ errors.

A straightforward approach is to encode the entire message vector using a standard error-correcting code (ECC), thereby enabling complete recovery at all receivers. This trivially protects all functions but may introduce significantly more redundancy than necessary. Given that FCCs are typically studied in systematic form, another equally straightforward approach is to take the redundancies required for the different FCCs corresponding to the individual functions, append all of them to the message vector, and transmit the resulting longer vector. In this way, each receiver can simply use the redundancy meant for its own function to decode its desired value. However, this solution also incurs redundancy equal to the sum of all the individual FCC redundancies, which may be unnecessarily large. If, instead, one can find a single unified FCC that protects all the functions with strictly smaller redundancy, it leads to substantial bandwidth savings compared to both of the above approaches: (1) using an ECC that protects the entire message vector, and (2) using the sum of the redundancies of the separate FCCs designed for each function individually.

The aim is to construct codes that perform this task as required. Instead of working with the function values directly, we define codes based on partitions of the domain and refer to them as {\it function-correcting partition codes (FCPCs)}. The setup is illustrated in Fig. \ref{fig:FCPC_setup}. Since every function naturally partitions its domain into preimage sets, these FCPCs can be used as FCCs. In fact, each function induces a natural partition of the domain into its preimage sets, and an FCC is exactly the FCPC defined with respect to this induced partition. This makes FCPCs a direct generalization of FCCs.


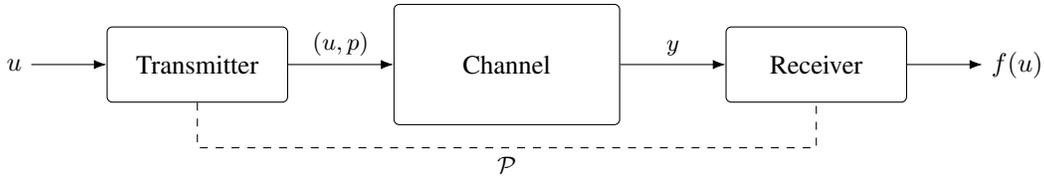
\begin{figure}[t]
\centering
\begin{tikzpicture}[
    >=Latex,
    node distance=18mm and 18mm,
    box/.style={draw, rounded corners=2pt, minimum height=10mm, minimum width=24mm, align=center},
    bigbox/.style={draw, rounded corners=2pt, minimum height=16mm, minimum width=30mm, align=center},
    lab/.style={font=\small}
]
\node (u) {$u$};
\node[box, right=10mm of u] (tx) {Transmitter};
\node[bigbox, right=14mm of tx] (ch) {Channel};
\node[box, right=14mm of ch] (rx) {Receiver};
\node[right=10mm of rx] (fu) {$f(u)$};

\draw[->] (u) -- node[above,lab] { } (tx.west);

\draw[->] (tx.east) -- node[above,lab] {$(u,p)$} (ch.west);
\draw[->] (ch.east) -- node[above,lab] {$y$} (rx.west);
\draw[->] (rx.east) -- (fu);

\draw[-, dashed]
  (tx.south) -- ++(0,-6mm) --
  node[below,lab] {$\mathcal{P}$}
  ++($(rx.south)-(tx.south)$) --
  (rx.south);

\end{tikzpicture}

\caption{Illustration of the function-correcting partition code (FCPC) setup.
A message $u$ is transmitted with redundancy $p$ over a noisy channel.
The transmitter has access only to a partition of the message domain and does not know the complete function desired by the receiver.
Using the received word $y$ and the knowledge of the function $f$, the receiver correctly recovers the function value $f(u)$.}
\label{fig:FCPC_setup}
\end{figure}

This line of study is also directly useful in scenarios where there is only a single receiver but the receiver is interested in protecting the values of multiple functions of the same message vector. Even though the communication is one-to-one, the requirement of safeguarding several functions simultaneously leads to the same core problem: either one protects the entire message vector, or one appends the individual FCC redundancies for each function. As before, both approaches may introduce more redundancy than necessary. Hence, designing a unified FCPC that jointly protects all the required functions can yield significant savings. The framework developed in this work therefore applies both to multiple-receiver settings and to single-receiver, multi-function settings, as illustrated in Fig.~\ref{fig:MFS}.

Further, we refer to the savings, arising from using a single code for multiple functions instead of using separate FCCs for each function, as the {\it partition redundancy gain} and {\it relative rate increment}.

There is another benefit of using FCPCs in a network setting, as they
provide a degree of partial privacy regarding the function being computed,
which we refer to as \emph{function class privacy}. Here, a function class
consists of all functions on the domain that induce the same domain partition.
In FCCs, both the receiver and the transmitter must know the exact function. In
contrast, an FCPC requires the transmitter to know only the domain
partition induced by the function that the receiver wishes to compute. Since
many different functions can belong to the same function class and induce the
same domain partition, the transmitter cannot infer the exact function from the partition alone. Further discussion of this notion is given in
Section~\ref{sec:privacy}.

\begin{figure}[ht]
\centering
\begin{tikzpicture}[>=stealth]


\node[draw, rectangle] (S1) at (0,0) {Transmitter};
\coordinate (B1) at (0,-1.2);        

\draw[very thick] (S1) -- (B1);

\node[draw, rectangle] (R1) at (-2,-3) {$R_1$};
\node[draw, rectangle] (R2) at (0,-3) {$R_2$};
\node at (1.2,-3) (dots) {$\cdots$};
\node[draw, rectangle] (RK) at (2.4,-3) {$R_K$};

\draw (B1) -- (R1);
\draw (B1) -- (R2);
\draw (B1) -- (RK);

\node at (-2,-4.0) {$f^{(1)}$};
\node at ( 0,-4.0) {$f^{(2)}$};
\node at ( 2.4,-4.0) {$f^{(K)}$};

\node at (-4.0,-3.0) {\textbf{Receivers}};
\node at (-4.0,-4.0) {\textbf{Functions}};

\node at (0,-5.2) {\small Fig.~2.1: Multiple receivers.};
\end{tikzpicture}
\hspace{2cm}
\begin{tikzpicture}[>=stealth]


\node[draw, rectangle] (S2) at (0,0) {Transmitter};
\node[draw, rectangle] (R)  at (0,-3) {Receiver};

\draw (S2) -- (R);

\node at (0,-4.0) {$f^{(1)},\, f^{(2)},\, \ldots,\, f^{(K)}$};

\node at (0,-5.2) {\small Fig.~2.2: Single receiver multi-function.};

\end{tikzpicture}

\caption{Multiple receivers and single–receiver multi-function settings.}
\label{fig:MFS}
\end{figure}

\subsection{Related works}
The notion of FCCs was formally introduced in \cite{LBWY2023}, which laid the foundational framework for studying error correction when the objective is to reliably recover a function of the transmitted message rather than the message itself. In this work, the authors defined the key parameters of FCCs, established fundamental redundancy bounds, and investigated structural properties of such codes. They also provided explicit constructions for several important classes of functions, including locally binary functions, the Hamming weight function, and the Hamming weight distribution function. For some of these cases, the proposed constructions were shown to be optimal by achieving the corresponding lower bounds on redundancy. The initial development focused on communication over binary symmetric channels.

Subsequent research has extended this framework to more general channel models. In \cite{XLC2024}, FCCs were studied in the context of symbol-pair read channels over binary fields, and this line of work was further generalized to $b$-symbol read channels in \cite{SSY2025}. 
In \cite{PR2025}, the authors leveraged the fact that the kernel of a linear function forms a subspace of $\mathbb{F}_q^k$, and used the associated coset partition to construct FCCs for linear functions. They also derived a general lower bound on the redundancy of FCCs, which specializes to a classical lower bound for ECCs in the case of bijective functions, since FCCs reduce to ECCs in this setting. The tightness of this bound was established for certain parameters. The work \cite{GXZZ2025} further refined the study of FCCs for the Hamming weight and Hamming weight distribution functions, providing improved bounds and constructions tailored to these specific functions.

In a related direction, broader classes of functions have been considered. In \cite{RRHH2025ITW25}, the notion of locally binary functions was generalized to locally $(\rho,\lambda)$-bounded functions, and an upper bound on the redundancy of FCCs for this class was derived. Moreover, the authors showed that any function on $\mathbb{F}_2^k$ can be viewed as locally $(\rho,\lambda)$-bounded for appropriate choices of parameters, thereby extending the applicability of their results to a general class of functions. An extension of this framework to the $b$-symbol read channel was subsequently presented in \cite{VSS2025}. Additionally, \cite{LS2025} established an upper bound on FCC redundancy that is within a logarithmic factor of the lower bound in \cite{LBWY2023}, and proved that this lower bound is tight for sufficiently large field sizes.

Another line of work studies FCCs under alternative distance measures and channel models. In \cite{VS2025,HUR2025}, FCCs are investigated in the Lee metric, where Plotkin-type bounds are derived for general functions and explicit constructions are provided for several function classes, with optimality established in certain cases. The work \cite{LL2025} considers FCCs with homogeneous distance, developing redundancy bounds and constructions adapted to this setting. In addition, Plotkin-type bounds for FCCs over $b$-symbol read channels are studied in \cite{SR2025}.

More recently, a related direction was explored in \cite{RRHH2025}, where the authors study FCCs with data protection, aiming to protect both the computed function value and the underlying data. The work introduces a general framework for this joint protection requirement and derives corresponding redundancy bounds, adding another variant to the study of FCCs.

\subsection{Contributions and organization of the paper}

The main contributions of this work are summarized below.
\begin{itemize}
    \item We introduce $(\mathcal{P},t)$-encodings, called function-correcting partition codes, defined directly on a partition $\mathcal{P}$ of $\mathbb{F}_q^k$. We show that any FCC is exactly an FCPC with respect to the domain partition induced by $f$, which makes FCPCs a natural generalization of FCCs.
    
    \item We use the join of domain partitions to construct a single code that protects multiple functions simultaneously, and we specialize this to linear functions via coset partition of the intersection of their kernels.
    
 \item We define the notions of \emph{partition redundancy gain} and \emph{relative rate increment} to measure the bandwidth saved by using a single FCPC for multiple functions instead of constructing separate FCCs for each function. We derive general lower and upper bounds on the redundancy of such FCPCs and illustrate the achievable gains through examples.

\item We present an explicit optimal FCPC construction for \emph{locally bounded partitions}, obtained as a reformulation of the FCC construction for locally bounded functions introduced in \cite{LBWY2023}. We also propose FCPC constructions for \emph{grouped weight partitions} by employing Gray codes and extending the approach of \cite{GXZZ2025}. Furthermore, for the subclass of \emph{consecutive grouped weight partitions}, we identify conditions under which the proposed construction achieves optimal redundancy.

\item We associate a partition graph with any given partition of $\mathbb{F}_q^k$, and show that the existence of an appropriately chosen clique in this graph allows the problem of determining the optimal redundancy to be reduced to analyzing only the vectors corresponding to the vertices of that clique. Then we show the existence of such cliques for three 
classes of partitions: coset partitions, Hamming-weight partitions, and support partitions.

\item Using the existence of a full-size clique in the weight partition and the support partition of $\mathbb{F}_q^k$, we obtain lower and upper bounds on the optimal redundancy of FCPCs for these partitions.

\item We introduce the notion of a \emph{block-preserving contraction} for a partition, which helps reduce the problem size of finding optimal redundancy for an FCPC. We further show that such a contraction exists for all weight-based partitions. We also give a condition for cosets under which the partition graph of the corresponding coset partition admits a block-preserving contraction.

\item FCPCs inherently provide \emph{function class privacy} by revealing only the domain partition of the function to the transmitter. This leads to ambiguity among functions within the same function class, without affecting the existence or characterization of optimal FCPCs.
\end{itemize}

The rest of the paper is organized as follows. Section \ref{preliminaries} recalls the basic notions on partitions, graphs, and FCCs. In Section \ref{par_code}, we introduce FCPCs, study their relation to FCCs, consider their use for protecting multiple functions, and define the partition gains together with general redundancy bounds, including the case of linear functions. Section~\ref{par_graph} introduces the partition graph and then focuses on coset partitions, weight-based partitions and support partition. Section~\ref{sec:BPC} presents the notion of block-preserving contraction of partitions and establishes its connection to the optimal redundancy of the corresponding FCPC. Section \ref{sec:privacy} discusses the notion of structural privacy inherent to FCPCs and illustrates the resulting ambiguity
among functions inducing the same domain partition. Finally, Section~\ref{conclusion} concludes the paper.

\subsection{Notations}

Throughout the paper, $\mathbb{F}_q$ denotes the finite field with $q$ elements, and $\mathbb{F}_q^k$ denotes the $k$-dimensional vector space over it. For vectors $u,v \in \mathbb{F}_q^k$, $\mathrm{wt}(u)$ and $d(u,v)$ denote Hamming weight and Hamming distance, respectively. For a function $f$, notation $\mathrm{Im}(f)$ denotes its image. For a linear function $f:\mathbb{F}_q^k \to \mathbb{F}_q^{\ell}$, its kernel is denoted by $\ker(f)$. We use $[n]$ to denote the index set $\{1,2,\ldots,n\}$. A subspace of $\mathbb{F}_q^k$ is denoted by $U \le \mathbb{F}_q^k$. For a vector $u=(u_1,\ldots,u_k)\in\mathbb{F}_q^k$, its support is defined as $\mathrm{supp}(u) = \{\, i\in[k] : u_i \ne 0 \,\}.$ For two sets $A$ and $B$, their symmetric difference is $A \Delta B = (A\setminus B) \cup (B \setminus A).$
In several examples, we write vectors such as $(0,1,1)$ in concatenated form as $011$.


\section{Preliminaries}
\label{preliminaries}
This section contains some basic definitions related to partitions \cite{DF2004, LN}, graphs \cite{BM2008} and FCCs \cite{LBWY2023}.

\begin{definition}[Partition]
A partition $\mathcal{P}$ of a finite set $S$ is a set of pairwise disjoint subsets of $S$ whose union is $S$. The elements of $\mathcal{P}$ are called its blocks.
\end{definition}

\begin{definition}[Refinement of a partition]
Let $\mathcal{P}$ be a partition of a set $S$. Another partition $\mathcal{Q}$ of the set $S$ is called a refinement of $P$ if for each block $Q$ in $\mathcal{Q}$, there exists a block $P$ in $\mathcal{P}$ such that $Q \subseteq P$. We say that the partition $\mathcal{Q}$  is finer than $\mathcal{P}$ or, equivalently, $\mathcal{P}$ is coarser than $\mathcal{Q}$.
\end{definition}

%

\begin{definition}[Coarsest common refinement or join]
Given two partitions $\mathcal{P}$ and $\mathcal{Q}$ of a set $S$, their join, denoted by $\mathcal{P} \vee \mathcal{Q}$, is the coarsest partition that is a refinement of both $\mathcal{P}$ and $\mathcal{Q}$. This means that each block in the join is the intersection of a block from $\mathcal{P}$ and a block from $\mathcal{Q}$.
\end{definition}

\begin{example}
Let $S=\{1,2,3,4,5,6,7\}$, and $\mathcal{P}=\{\{1,3,4\},\{2,5\},\{6,7\}\}$ and $\mathcal{Q}=\{\{1,2,5\},\{3,4\},\{6\},$ $\{7\}\}$ be two partitions of $S$. Then the coarsest common refinement of $\mathcal{P}$ and $\mathcal{Q}$ is
$$\mathcal{P} \vee \mathcal{Q}=\{\{1\},\{2,5\},\{3,4\},\{6\},\{7\}\}.$$
\end{example}

\begin{definition}[Graph]
A \emph{graph} is an ordered pair $G=(V,E)$, where $V$ is a finite nonempty set of vertices, and 
$
E \subseteq \{\,(u,v) : u,v \in V,\; u \neq v \,\}
$
is a set of edges. 
\end{definition}

Two distinct vertices $u,v \in V$ are said to be \emph{adjacent} in a graph $G=(V,E)$ if $(u,v) \in E$, and $G$ is called \emph{complete} if every pair of distinct vertices is adjacent.  

\begin{definition}[Clique]
A \emph{clique} in a graph $G=(V,E)$ is a set of vertices $C \subseteq V$ such that every two distinct vertices in $C$ are adjacent, i.e.,
$
(u,v) \in E$ for all distinct $u,v \in C.
$
\end{definition}

\begin{definition}[$t$-partite Graph]
A graph $G=(V,E)$ is said to be \emph{$t$-partite} if its vertex set admits a partition
\[
V = V_1 \cup V_2 \cup \cdots \cup V_t,
\qquad 
V_i \neq \emptyset,\;\; V_i \cap V_j = \emptyset\ (i \neq j),
\]
such that no two vertices within the same part are adjacent, i.e.,
$(u,v) \notin E$ whenever $u,v \in V_i.
$
\end{definition}

\begin{definition}[Function-correcting codes]
Consider a function $f: \mathbb{F}_q^k \rightarrow S$. A systematic encoding $\mathcal{C}: \mathbb{F}_q^k \rightarrow \mathbb{F}_q^{k+r}$ is defined as an $(f, t)$-function correcting code (FCC) if, for any $u_1, u_2 \in \mathbb{F}_q^k$ such that $f(u_1) \neq f(u_2)$, the following condition holds: 
$$d(\mathcal{C}(u_1), \mathcal{C}(u_2)) \geq 2t+1,$$
where $d(x, y)$ denotes the Hamming distance between vectors $x$ and $y$.
\end{definition}

The \emph{optimal redundancy} $r_f(k, t)$ is defined as the minimum of $r$ for which there exists an $(f, t)$-FCC with an encoding function $\mathcal{C}: \mathbb{F}_2^k \rightarrow \mathbb{F}_2^{k+r}$.

\begin{definition}[Distance Requirement Matrix (DRM)]
Let $u_1, u_2,$ $\ldots, u_{M} \in \mathbb{F}_q^k$. The distance requirement matrix (DRM) $\mathcal{D}_f(t, u_1, u_2,\ldots, u_M)$ for an $(f, t)$-FCC is a $M \times M$ matrix with entries
$$[\mathcal{D}_f(t, u_1, \ldots, u_M)]_{i, j} = \begin{cases}  \max(2t+1-d(u_i, u_j), 0), & \text{if} \ f(u_i) \neq f(u_j), \\
0 & \text{otherwise},
\end{cases}$$
where $i, j \in \{1,2, \ldots, M\}$.
\end{definition}

When $M=2^k,$ we get the DRM for $\mathcal{C}:\mathbb{F}_q^k \rightarrow \mathbb{F}^{k+r}.$

\begin{definition}[Irregular-distance code or $\mathcal{D}$-code]
Let $\mathcal{D} \in \mathbb{N}^{M\times M}$. Then $\mathcal{P}=\{p_1, p_2, \ldots, p_M\}$, where $p_i \in \mathbb{F}_{q}^r, ~i=1,2,\ldots, M$, is said to be an irregular-distance code or $\mathcal{D}$-code if there is an ordering of $P$ such that $d(p_i, p_j) \geq [\mathcal{D}]_{i,j}$ for all $i, j \in \{1, 2, \ldots, M\}$. Further, $N_q(\mathcal{D})$ is defined as the smallest integer $r$ such that there exists a $\mathcal{D}$-code of length $r$. If $[\mathcal{D}]_{i, j} = D$ for all $i, j \in \{1,2,\ldots, M\}, i\neq j$, then $N_q(\mathcal{D})$ is denoted as $N_q(M, D)$.
\end{definition}


For a function $f: \mathbb{F}_q^k \mapsto Im(f)$, the distance between $f_i, f_j \in Im(f)$ is defined as
$$d(f_i, f_j) = \min_{u_1, u_2 \in \mathbb{F}_q^k} \{d(u_1, u_2) |  f(u_1)=f_i, f(u_2) = f_j\}.$$

\begin{definition}[Function Distance Matrix (FDM)]
Consider a function $f: \mathbb{F}_q^k \mapsto Im(f)$ and $E=|Im(f)|$. Then the  $E \times E$ matrix $\mathcal{D}_f(t, f_1, f_2,\ldots, f_E)$ with entries given as
$$[\mathcal{D}_f(t, f_1, f_2,\ldots, f_E)]_{i, j} = \begin{cases}  \max(2t+1 -d(f_i, f_j), 0), & \text{if} \ i \neq j, \\
0 & \text{otherwise},
\end{cases}$$
is called the function distance matrix.
\end{definition}

\begin{corollary}[{\cite[Corollary 1]{LBWY2023}}]\label{thm1}
For any function $f: \mathbb{F}_q^k \mapsto S$ and $\{u_1, u_2, \ldots, u_m\}\subseteq \mathbb{F}_q^k$,
$$r_f(k, t) \geq N_q(\mathcal{D}_f(t, u_1, u_2, \ldots, u_m)),$$
and for $|Im(f)|\geq 2$, $r_f (k, t) \geq 2t$.
\end{corollary}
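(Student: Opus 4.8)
The plan is to exploit the systematic structure of an $(f,t)$-FCC. Write any such encoding as $\mathcal{C}(u) = (u, p(u))$, where $p(u) \in \mathbb{F}_q^r$ is the redundancy part. Since the first $k$ coordinates of $\mathcal{C}(u)$ coincide with $u$, the Hamming distance splits as $d(\mathcal{C}(u_1), \mathcal{C}(u_2)) = d(u_1, u_2) + d(p(u_1), p(u_2))$. Hence the defining inequality $d(\mathcal{C}(u_1), \mathcal{C}(u_2)) \ge 2t+1$ for all pairs with $f(u_1) \ne f(u_2)$ is equivalent to $d(p(u_1), p(u_2)) \ge 2t+1 - d(u_1, u_2)$ for all such pairs, and since distances are nonnegative this is the same as $d(p(u_1), p(u_2)) \ge \max(2t+1 - d(u_1,u_2), 0)$.

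First I would fix an arbitrary subset $\{u_1, \dots, u_m\} \subseteq \mathbb{F}_q^k$. The observation above shows that the tuple $(p(u_1), \dots, p(u_m))$ of length-$r$ redundancy vectors satisfies $d(p(u_i), p(u_j)) \ge [\mathcal{D}_f(t, u_1, \dots, u_m)]_{i,j}$ for all $i,j$ (the matrix entry being $0$, hence the constraint vacuous, exactly when $f(u_i) = f(u_j)$). Therefore $\{p(u_1), \dots, p(u_m)\}$ is a $\mathcal{D}_f(t, u_1, \dots, u_m)$-code of length $r$, and by the definition of $N(\cdot)$ we conclude $r \ge N(\mathcal{D}_f(t, u_1, \dots, u_m))$. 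Specializing to an optimal $(f,t)$-FCC, i.e.\ one with $r = r_f(k,t)$, yields the first claimed bound.

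For the second claim, suppose $|\mathrm{Im}(f)| \ge 2$, so $f$ is non-constant. I would invoke connectivity of the Hamming graph on $\mathbb{F}_q^k$ (vertices $\mathbb{F}_q^k$, edges joining vectors at Hamming distance $1$): a non-constant function cannot be constant on a connected graph, so there exist $u_1, u_2$ with $d(u_1, u_2) = 1$ and $f(u_1) \ne f(u_2)$. For this pair the relevant DRM entry equals $\max(2t+1-1, 0) = 2t$, so any $\mathcal{D}_f(t, u_1, u_2)$-code must contain two vectors at Hamming distance at least $2t$; since two vectors in $\mathbb{F}_q^r$ are at distance at most $r$, this forces $r \ge 2t$. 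Thus $N(\mathcal{D}_f(t, u_1, u_2)) \ge 2t$, and combining with the first part gives $r_f(k,t) \ge 2t$.

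There is no serious obstacle here; the statement is essentially a repackaging of the systematic-code distance decomposition together with the definition of $N(\mathcal{D})$. The only point requiring a little care is the second part, where one must exhibit a pair of message vectors that are simultaneously Hamming-adjacent and separated by $f$ — this is precisely where connectivity of the Hamming graph is used, rather than an arbitrary pair with distinct function values.
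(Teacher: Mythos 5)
Your proof is correct and is essentially the standard argument: the paper itself states this corollary as an imported result from \cite{LBWY2023} without reproving it, and your derivation (the systematic distance decomposition $d(\mathcal{C}(u_i),\mathcal{C}(u_j))=d(u_i,u_j)+d(p(u_i),p(u_j))$ turning the redundancy vectors into a $\mathcal{D}_f$-code, plus connectivity of the Hamming graph to find an adjacent pair separated by $f$ for the $2t$ bound) matches the original proof. The only point worth flagging is that the redundancy vectors should be treated as an ordered tuple or multiset rather than a set, since they need not be distinct; you already handle this by writing them as a tuple.
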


\begin{theorem}[{\cite[Theorem 2]{LBWY2023}}]\label{thm2}
For any function $f: \mathbb{F}_q^k \mapsto Im(f)=\{f_1, f_2, \ldots, f_E\}$,
$$r_f(k, t) \leq N_q(\mathcal{D}_f(t, f_1, f_2, \ldots, f_E)),$$
where $\mathcal{D}_f(t, f_1, f_2, \ldots, f_E)$ is a FDM.
\end{theorem}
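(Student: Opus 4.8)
The plan is to prove the upper bound $r_f(k,t) \le N(\mathcal{D}_f(t,f_1,\ldots,f_E))$ by constructing an explicit $(f,t)$-FCC of redundancy $r := N(\mathcal{D}_f(t,f_1,\ldots,f_E))$. First I would invoke the definition of $N(\cdot)$: since $\mathcal{D}_f(t,f_1,\ldots,f_E)$ is an $E\times E$ nonnegative integer matrix, there exists a $\mathcal{D}_f$-code $\{p_1,p_2,\ldots,p_E\}\subseteq \mathbb{F}_q^{r}$ (one codeword per image value) with $d(p_i,p_j)\ge [\mathcal{D}_f]_{i,j} = \max(2t+1-d(f_i,f_j),0)$ for all $i,j$. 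The idea is then to attach $p_i$ as the redundancy for every information vector $u$ with $f(u)=f_i$.

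The key steps, in order, are as follows. Define the systematic encoding $\mathcal{C}:\mathbb{F}_q^k\to\mathbb{F}_q^{k+r}$ by $\mathcal{C}(u) = (u,\, p_{j})$ where $j$ is the index with $f(u)=f_j$; this is well-defined since the image values partition the domain. Next, take any $u_1,u_2\in\mathbb{F}_q^k$ with $f(u_1)\ne f(u_2)$, say $f(u_1)=f_i$ and $f(u_2)=f_j$ with $i\ne j$. Then $d(\mathcal{C}(u_1),\mathcal{C}(u_2)) = d(u_1,u_2) + d(p_i,p_j)$. By the definition of the distance $d(f_i,f_j) = \min\{d(u_1,u_2): f(u_1)=f_i, f(u_2)=f_j\}$, we have $d(u_1,u_2)\ge d(f_i,f_j)$. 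Combining with the $\mathcal{D}_f$-code property, $d(\mathcal{C}(u_1),\mathcal{C}(u_2)) \ge d(f_i,f_j) + \max(2t+1-d(f_i,f_j),0) \ge 2t+1$, where the last inequality holds whether or not the max is achieved by $2t+1-d(f_i,f_j)$ or by $0$ (in the latter case $d(f_i,f_j)\ge 2t+1$ already). Hence $\mathcal{C}$ is an $(f,t)$-FCC with redundancy $r$, which proves the claimed inequality.

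The main obstacle — though it is minor — is ensuring the two-case analysis of the $\max$ is handled cleanly: when $d(f_i,f_j) < 2t+1$ the entry is $2t+1-d(f_i,f_j)$ and the bound is exactly $2t+1$; when $d(f_i,f_j)\ge 2t+1$ the entry is $0$ and we rely solely on the already-large code distance between $u_1$ and $u_2$. One should also be careful that $d(f_i,f_j)$ is realized as a minimum over the actual preimages, so the inequality $d(u_1,u_2)\ge d(f_i,f_j)$ is legitimate for every pair of representatives, not just a distinguished one. No deeper difficulty arises; the statement is essentially a direct packaging of the $\mathcal{D}_f$-code into a systematic encoding.
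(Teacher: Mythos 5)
Your proposal is correct and follows essentially the same approach as the source: the paper states Theorem~2 as a citation from the original function-correcting-codes paper without reproducing the proof, and the argument there is exactly your construction, namely taking a $\mathcal{D}_f(t,f_1,\ldots,f_E)$-code $\{p_1,\ldots,p_E\}$ of length $N(\mathcal{D}_f(t,f_1,\ldots,f_E))$ and appending $p_i$ as the redundancy of every $u\in f^{-1}(f_i)$, then verifying $d(u_1,u_2)+d(p_i,p_j)\ge 2t+1$ via the two cases of the $\max$. Your handling of both cases, and the observation that $d(u_1,u_2)\ge d(f_i,f_j)$ holds for every pair of preimages because $d(f_i,f_j)$ is defined as a minimum, is exactly the required care and is handled correctly.
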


\begin{corollary}[{\cite[Corollary 2]{LBWY2023}}]\label{col1}
If there exists a set of representative information vectors $u_1, u_2, \ldots, u_{E}$ with $\{f(u_1), f(u_2),$ $\ldots,$ $f(u_E)\} = Im(f)$ and $\mathcal{D}_f(t, u_1, u_2, \ldots, u_E)=\mathcal{D}_f(t, f_1, f_2, \ldots, f_E)$, then
$$r_f(k, t) = N_q(\mathcal{D}_f(t, f_1, f_2, \ldots, f_E)).$$
\end{corollary}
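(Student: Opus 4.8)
The plan is to squeeze the optimal redundancy $r_f(k,t)$ between the general lower bound of Corollary~\ref{thm1} and the general upper bound of Theorem~\ref{thm2}, and then use the hypothesis to force the two bounds to coincide.

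For the upper bound I would apply Theorem~\ref{thm2} verbatim, obtaining $r_f(k,t) \le N(\mathcal{D}_f(t, f_1, \ldots, f_E))$. For the lower bound I would apply Corollary~\ref{thm1} to the specific $E$-element set $\{u_1, \ldots, u_E\}$ of representatives supplied by the hypothesis. These vectors are pairwise distinct: since $|\mathrm{Im}(f)| = E$ and the values $f(u_1), \ldots, f(u_E)$ exhaust $\mathrm{Im}(f)$, they are pairwise distinct, hence so are the $u_i$, and $\{u_1, \ldots, u_E\}$ is a legitimate subset of $\mathbb{F}_q^k$ to which Corollary~\ref{thm1} applies. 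This gives $r_f(k,t) \ge N(\mathcal{D}_f(t, u_1, \ldots, u_E))$. The hypothesis $\mathcal{D}_f(t, u_1, \ldots, u_E) = \mathcal{D}_f(t, f_1, \ldots, f_E)$, together with the fact that $N(\cdot)$ depends only on its matrix argument, turns this into $r_f(k,t) \ge N(\mathcal{D}_f(t, f_1, \ldots, f_E))$. Combining the two inequalities yields
\[
N(\mathcal{D}_f(t, f_1, \ldots, f_E)) \le r_f(k,t) \le N(\mathcal{D}_f(t, f_1, \ldots, f_E)),
\]
i.e. the claimed equality.

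There is no genuine obstacle: the corollary is a two-line sandwich of results already established. The only points deserving a sentence of care are the distinctness check above and the remark that, in general, $d(u_i, u_j) \ge d(f(u_i), f(u_j))$ forces $\mathcal{D}_f(t, u_1, \ldots, u_E)$ to be entrywise no larger than the FDM $\mathcal{D}_f(t, f_1, \ldots, f_E)$ (under the matching ordering), so that $N$ of a DRM is always at most $N$ of the FDM; the hypothesis is precisely the condition that this slack vanishes, which is what allows the Corollary~\ref{thm1} lower bound to meet the Theorem~\ref{thm2} upper bound.
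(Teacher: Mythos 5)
Your proof is correct and is exactly the intended derivation: the paper states this corollary as a citation of \cite{LBWY2023} without reproving it, and the result is precisely the sandwich of the lower bound of Corollary~\ref{thm1} (applied to the representative set $\{u_1,\ldots,u_E\}$) against the upper bound of Theorem~\ref{thm2}, with the hypothesis $\mathcal{D}_f(t,u_1,\ldots,u_E)=\mathcal{D}_f(t,f_1,\ldots,f_E)$ closing the gap. Your side remarks on distinctness of the $u_i$ and on the entrywise domination of the DRM by the FDM are accurate and correctly identify the hypothesis as the condition under which the slack vanishes.
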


A generalization of the Plotkin bound over $\mathbb{F}_q$ is given in \cite{RRHH2025} as follows.
\begin{lemma}[{\cite[Lemma 13]{RRHH2025}}]\label{lem:RRHH}
	For any distance matrix $D\in \mathbb{N}^{M\times M}$ and for irregular
	distance codes over $\mathbb{F}_q$, we have
	$$
	N_q(D) \ge
	\frac{2q}{M^2(q-1)-a(q-a)}
	\sum_{1 \le i < j \le M} [D]_{i,j},
	$$
	where $a = M \bmod q$.
\end{lemma}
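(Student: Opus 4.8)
The plan is to run the classical double-counting argument underlying the Plotkin bound, adapted to irregular distances over $\mathbb{F}_q$. Let $r = N(D)$ and fix a $\mathcal{D}$-code $\{p_1, \dots, p_M\} \subseteq \mathbb{F}_q^{r}$ attaining this length, together with an ordering for which $d(p_i, p_j) \ge [D]_{i,j}$ for all $i,j$. I would bound the total pairwise distance $T := \sum_{1 \le i < j \le M} d(p_i, p_j)$ from both sides. The lower bound is immediate from the defining property of a $\mathcal{D}$-code: $T \ge \sum_{1 \le i < j \le M} [D]_{i,j}$.

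For the upper bound I would count contributions coordinate by coordinate. For a position $\ell \in [r]$ and a symbol $s \in \mathbb{F}_q$, let $n_s^{(\ell)}$ denote the number of indices $i$ with $(p_i)_\ell = s$, so $\sum_{s \in \mathbb{F}_q} n_s^{(\ell)} = M$. The number of pairs $\{i,j\}$ that differ in coordinate $\ell$ is $\binom{M}{2} - \sum_{s} \binom{n_s^{(\ell)}}{2} = \tfrac12\big(M^2 - \sum_{s}(n_s^{(\ell)})^2\big)$, and since $d(p_i,p_j)$ is exactly the number of coordinates in which $p_i$ and $p_j$ disagree, summing over $\ell$ gives $T = \tfrac12 \sum_{\ell=1}^{r}\big(M^2 - \sum_{s}(n_s^{(\ell)})^2\big)$.

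The crux is then a self-contained arithmetic optimization: among nonnegative integer tuples $(n_s)_{s \in \mathbb{F}_q}$ with $\sum_s n_s = M$, the quantity $\sum_s n_s^2$ is minimized (equivalently, $M^2 - \sum_s n_s^2$ is maximized) precisely when the $n_s$ are as balanced as possible, i.e.\ when $a = M \bmod q$ of them equal $\lceil M/q \rceil$ and the remaining $q-a$ equal $\lfloor M/q\rfloor$; this is the usual smoothing argument, since replacing a pair $(n_s,n_{s'})$ satisfying $n_s \ge n_{s'}+2$ by $(n_s-1,n_{s'}+1)$ strictly decreases $\sum_s n_s^2$. Writing $M = qm + a$ and substituting the balanced tuple, a direct computation shows $M^2 - \sum_s n_s^2 \le (q-1)(qm^2 + 2am) + a(a-1) = \dfrac{M^2(q-1) - a(q-a)}{q}$ for every coordinate $\ell$. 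Plugging this into the coordinatewise count yields $T \le \dfrac{r}{2}\cdot\dfrac{M^2(q-1)-a(q-a)}{q}$.

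Combining the two bounds, $\sum_{1 \le i < j \le M}[D]_{i,j} \le T \le \dfrac{r\big(M^2(q-1)-a(q-a)\big)}{2q}$, and solving for $r = N(D)$ gives the claimed inequality. The only delicate point is the arithmetic identity matching the balanced minimum of $\sum_s n_s^2$ with $\frac{M^2(q-1) - a(q-a)}{q}$ (keeping careful track of the residue $a$ and the floor/ceiling split); everything else is routine double counting, so I expect that bookkeeping to be the main — and fairly minor — obstacle.
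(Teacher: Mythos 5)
Your argument is correct: the coordinate-wise double counting, the convexity/smoothing step showing $\sum_s n_s^2$ is minimized by the balanced distribution with $a$ classes of size $\lceil M/q\rceil$ and $q-a$ of size $\lfloor M/q\rfloor$, and the resulting identity $M^2-\sum_s n_s^2 \le \frac{M^2(q-1)-a(q-a)}{q}$ all check out, and combining the two bounds on $T$ gives exactly the stated inequality. Note that this paper does not prove the lemma itself — it is quoted from the cited reference — but your proof is the standard $q$-ary Plotkin-type argument that such a result rests on, so there is nothing to flag.
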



\section{Function-correcting partition codes}\label{par_code}

In this section, we first define the $(\mathcal{P},t)$-encoding for an arbitrary partition $\mathcal{P}$ of $\mathbb{F}_q^k$, and then illustrate how this framework applies to multiple functions and linear functions.

\begin{definition}[Function domain partition]
Consider a function $f: \mathbb{F}_q^k \rightarrow S$ such that $\mathrm{Im}(f)=\{f_1, f_2, \ldots,$ $f_E\}$, where $|\mathrm{Im}(f)|=E$. A partition $\mathcal{P}_f=\{P_1, P_2, \ldots, P_E\}$ of $\mathbb{F}_q^k$ defined as
 $$P_i=\{u\in \mathbb{F}_q^k \mid f(u)=f_i\},$$
is called the domain partition of $f$. In other words, $\mathcal{P}_f=\{f^{-1}(f_i) \mid i \in [E]\}$.
\end{definition}


The properties of an $(f,t)$-FCC depend only on the partition of $\mathbb{F}_q^k$ induced by $f$, and not on the specific values of the function.  
Hence, different functions with the same domain partition admit the same FCC.
Motivated by this, we introduce a more general notion of encoding based purely
on a partition of $\mathbb{F}_q^k$.

\begin{definition}[Function-correcting partition codes]
Let $\mathcal{P}=\{P_1,P_2,\ldots,P_E\}$ be a partition of $\mathbb{F}_q^k$.
A systematic mapping 
$\mathcal{C}_{\mathcal{P}} : \mathbb{F}_q^k \rightarrow \mathbb{F}_q^{k+r}$
is called a \emph{$(\mathcal{P},t)$-encoding} if for all $u\in P_i$ and
$v\in P_j$ with $i\neq j$,
$$
d\big(\mathcal{C}_{\mathcal{P}}(u), \mathcal{C}_{\mathcal{P}}(v)\big)
\ge 2t+1.
$$
\end{definition}

\begin{definition}[Optimal redundancy]
The \emph{optimal redundancy} $r_{\mathcal{P}}(k, t)$ is defined as the minimum of $r$ for which there exists a $(\mathcal{P}, t)$-encoding $\mathcal{C}: \mathbb{F}_2^k \rightarrow \mathbb{F}_2^{k+r}$.
\end{definition}

The notions of DRM and $\mathcal{D}$-code extend naturally from functions to arbitrary partitions.

\begin{definition}[Partition Distance Requirement Matrix (PDRM)]
Let $\mathcal{P} = \{P_1,P_2,\ldots,P_E\}$ be a partition of $\mathbb{F}_q^k$ and let $u_1,u_2,\ldots,u_M \in \mathbb{F}_q^k$. The partition distance requirement matrix
$\mathcal{D}_{\mathcal{P}}(t,u_1,\ldots,u_M)$ for a $(\mathcal{P},t)$-encoding is a $M \times M$ matrix with entries
\[
[\mathcal{D}_{\mathcal{P}}(t,u_1,\ldots,u_M)]_{i,j}
=
\begin{cases}
\max(2t+1 - d(u_i,u_j), 0), & \text{if } u_i \in P_a,\, u_j \in P_b \text{ for some } a \neq b,\\[2mm]
0, & \text{otherwise},
\end{cases}
\]
for all $i,j \in \{1,2,\ldots,M\}$.
\end{definition}

For $\mathcal{D}=\mathcal{D}_\mathcal{P}(t, u_1, u_2,\ldots, u_{q^k})$, if we have a $\mathcal{D}$-code $Z=\{z_1, z_2, \ldots, z_{q^k}\}$, then it can be used to construct a $(\mathcal{P}, t)$-encoding $\mathcal{C}(u_i) = (u_i, z_i)$ for all $i \in \{1, 2, \ldots, q^k\}$.

\begin{definition}[Block distance]
Let $\mathcal{P}=\{P_1,P_2,\ldots,P_E\}$ be a partition of $\mathbb{F}_q^k$. For two distinct blocks $P_i$ and $P_j$, their distance is defined as
$$  d(P_i,P_j)= \min_{u \in P_i,\; v \in P_j} d(u,v).
$$
For $i=j$, we set $d(P_i,P_j)=0$.
\end{definition}

\begin{definition}[Partition Distance Matrix (PDM)]
Let $\mathcal{P}=\{P_1,P_2,\ldots,P_E\}$ be a partition of $\mathbb{F}_q^k$. The $E\times E$ matrix
$   \mathcal{D}_{\mathcal{P}}(t;P_1,\ldots,P_E)$
with entries
$$
    [\mathcal{D}_{\mathcal{P}}(t;P_1,\ldots,P_E)]_{i,j}
    =
    \begin{cases}
        \max\big( 2t+1 - d(P_i,P_j),\, 0 \big), & i\neq j, \\
        0, & i=j,
    \end{cases}
$$
is called the \emph{partition distance matrix} (PDM).
\end{definition}
Clearly, for a function $f:\mathbb{F}_q^k \to \mathrm{Im}(f)$ with domain partition $\mathcal{P}_f$, we have $\mathcal{D}_{f}(t, u_1,\ldots,u_{q^k})$ $= \mathcal{D}_{\mathcal{P}_f}$ $(t;P_1,\ldots,P_E)$.

The following bounds are immediate extensions of the corresponding results 
(Theorem~\ref{thm2}, Corollary~\ref{thm1}, and Corollary~\ref{col1}) from \cite{LBWY2023}. 
Since the proofs in \cite{LBWY2023} rely only on the induced function domain partition 
(and not on the specific values of the function), 
the same reasoning follows identically in the partition setting, and we state the results directly.

\begin{theorem}\label{thm:LUB_partition}
Let $\mathcal{P}=\{P_1,P_2,\ldots,P_E\}$ be a partition of $\mathbb{F}_q^k$ and 
let $\{u_1,\ldots,u_m\}\subseteq\mathbb{F}_q^k$. Then
$$
    N\left(\mathcal{D}_{\mathcal{P}}(t,u_1,\ldots,u_m)
    \right) \le r_{\mathcal{P}}(k,t)
    \le
    N\left(\mathcal{D}_{\mathcal{P}}(t;P_1,\ldots,P_E)
    \right).
$$
In particular, if $|\mathcal{P}|\ge 2$, then $r_{\mathcal{P}}(k,t)\ge 2t$.
\end{theorem}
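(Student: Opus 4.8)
The statement is an immediate translation of the corresponding function-correcting results to the partition setting, so the plan is essentially to observe that every object used in the original proofs depends on $f$ only through its domain partition $\mathcal{P}_f$, and then run the same arguments verbatim with $\mathcal{P}_f$ replaced by an arbitrary partition $\mathcal{P}$. Concretely, I would first establish the lower bound $N\left(\mathcal{D}_{\mathcal{P}}(t,u_1,\ldots,u_m)\right)\le r_{\mathcal{P}}(k,t)$: take any $(\mathcal{P},t)$-encoding $\mathcal{C}_{\mathcal{P}}\colon\mathbb{F}_q^k\to\mathbb{F}_q^{k+r}$ achieving the optimal redundancy $r=r_{\mathcal{P}}(k,t)$, write $\mathcal{C}_{\mathcal{P}}(u)=(u,z(u))$ by systematicity, and note that for any chosen $u_1,\ldots,u_m$ the redundancy vectors $z(u_1),\ldots,z(u_m)$ form a $\mathcal{D}_{\mathcal{P}}(t,u_1,\ldots,u_m)$-code: indeed if $u_i\in P_a$, $u_j\in P_b$ with $a\neq b$, the encoding condition gives $d\big(\mathcal{C}_{\mathcal{P}}(u_i),\mathcal{C}_{\mathcal{P}}(u_j)\big)\ge 2t+1$, and since $d\big(\mathcal{C}_{\mathcal{P}}(u_i),\mathcal{C}_{\mathcal{P}}(u_j)\big)=d(u_i,u_j)+d(z(u_i),z(u_j))$, we get $d(z(u_i),z(u_j))\ge 2t+1-d(u_i,u_j)$, and it is trivially $\ge 0$; if $u_i,u_j$ lie in the same block the required entry is $0$. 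Hence a $\mathcal{D}_{\mathcal{P}}(t,u_1,\ldots,u_m)$-code of length $r$ exists, so $N\big(\mathcal{D}_{\mathcal{P}}(t,u_1,\ldots,u_m)\big)\le r = r_{\mathcal{P}}(k,t)$.

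For the upper bound $r_{\mathcal{P}}(k,t)\le N\left(\mathcal{D}_{\mathcal{P}}(t;P_1,\ldots,P_E)\right)$, I would mimic the proof of Theorem~\ref{thm2}. Let $r=N\left(\mathcal{D}_{\mathcal{P}}(t;P_1,\ldots,P_E)\right)$ and pick a $\mathcal{D}_{\mathcal{P}}(t;P_1,\ldots,P_E)$-code $\{w_1,\ldots,w_E\}\subseteq\mathbb{F}_q^{r}$, indexed so that $d(w_a,w_b)\ge [\mathcal{D}_{\mathcal{P}}(t;P_1,\ldots,P_E)]_{a,b}$ for all $a,b$. Define $\mathcal{C}_{\mathcal{P}}(u)=(u,w_a)$ whenever $u\in P_a$. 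This is systematic by construction; it remains to check the distance requirement. For $u\in P_a$, $v\in P_b$ with $a\neq b$, we have $d\big(\mathcal{C}_{\mathcal{P}}(u),\mathcal{C}_{\mathcal{P}}(v)\big)=d(u,v)+d(w_a,w_b)\ge d(u,v)+\max\big(2t+1-d(P_a,P_b),0\big)\ge d(u,v)+2t+1-d(P_a,P_b)$; since $d(P_a,P_b)=\min_{u'\in P_a,\,v'\in P_b}d(u',v')\le d(u,v)$, this is $\ge 2t+1$. Thus $\mathcal{C}_{\mathcal{P}}$ is a $(\mathcal{P},t)$-encoding of redundancy $r$, giving $r_{\mathcal{P}}(k,t)\le r$.

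Finally, for the ``in particular'' clause, suppose $|\mathcal{P}|\ge 2$, so there exist two distinct blocks $P_a,P_b$ and hence vectors $u\in P_a$, $v\in P_b$. Applying the lower bound with $\{u_1,u_2\}=\{u,v\}$ chosen so that $d(u,v)$ is as small as possible — or simply noting that for a $(\mathcal{P},t)$-encoding $d\big(\mathcal{C}_{\mathcal{P}}(u),\mathcal{C}_{\mathcal{P}}(v)\big)\ge 2t+1$ while systematicity forces the first $k$ coordinates of $\mathcal{C}_{\mathcal{P}}(u)$ and $\mathcal{C}_{\mathcal{P}}(v)$ to be $u$ and $v$ — we pick $u,v$ in distinct blocks with $d(u,v)$ minimal; one can always choose such a pair with $d(u,v)\ge 1$, but to force $r\ge 2t$ I would instead select, within the blocks, a pair at distance as small as possible and observe that among all cross-block pairs there is one at Hamming distance $1$ unless the blocks are ``far apart'', in which case the lower bound already dominates. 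The cleanest route, matching Corollary~\ref{thm1}, is: there exist $u\in P_a$ and $v\in P_b$ with $d(u,v)=d(P_a,P_b)\le k$; but more simply, pick any block $P_a$ with $|P_a|<q^k$ and a vector $v\notin P_a$ at distance $1$ from some $u\in P_a$ (possible since blocks are not all of $\mathbb{F}_q^k$ and the Hamming graph is connected), so the $2\times 2$ PDRM has off-diagonal entry $2t+1-1=2t$, and $N$ of the all-$2t$ $2\times 2$ matrix is $2t$, yielding $r_{\mathcal{P}}(k,t)\ge 2t$.

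The content here is entirely bookkeeping; the only mild subtlety — and the step I would be most careful about — is the last one, namely justifying that when $|\mathcal{P}|\ge 2$ one can find two information vectors in different blocks at Hamming distance exactly $1$ (so that the induced $2\times 2$ distance-requirement matrix has entry $2t$). This follows because the Hamming graph on $\mathbb{F}_q^k$ is connected, so any partition into at least two nonempty blocks has an edge of this graph crossing between two blocks; the endpoints of that edge are the desired pair, and $N$ of the corresponding matrix equals $2t$ by the definition of a $\mathcal{D}$-code. Everything else is a direct recapitulation of the arguments in \cite{LBWY2023}, now phrased purely in terms of the partition $\mathcal{P}$.
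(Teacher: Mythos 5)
Your proposal is correct and follows exactly the route the paper intends: the paper omits the proof entirely, stating that the arguments of Theorem~2, Corollary~1, and Corollary~2 of \cite{LBWY2023} carry over verbatim since they depend on $f$ only through its domain partition, and your write-up is precisely that transplanted argument (redundancy vectors of an optimal encoding form a $\mathcal{D}$-code for the lower bound; constant redundancy per block built from a PDM-code for the upper bound; a cross-block pair at Hamming distance $1$, guaranteed by connectivity of the Hamming graph, for the $2t$ bound). The only stylistic remark is that the ``in particular'' paragraph wanders through several discarded alternatives before landing on the correct connectivity argument; the final version is sound and should be stated directly.
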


\begin{corollary}\label{col1_p}
Suppose, there exist representatives 
$u_i \in P_i$ for $i\in [E]$ such that $\mathcal{D}_{\mathcal{P}}(t,u_1,\ldots,u_E) = \mathcal{D}_{\mathcal{P}}(t;P_1,\ldots,P_E).$
Then
$$r_{\mathcal{P}}(k,t) = N\left(
        \mathcal{D}_{\mathcal{P}}(t,u_1,\ldots,u_E)
    \right).
$$
\end{corollary}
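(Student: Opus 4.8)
The plan is to deduce Corollary~\ref{col1_p} directly from the two-sided bound in Theorem~\ref{thm:LUB_partition} by the usual sandwich argument. Recall that for \emph{any} choice of vectors $\{u_1,\ldots,u_m\}\subseteq\mathbb{F}_q^k$ the theorem gives
$$
N\!\left(\mathcal{D}_{\mathcal{P}}(t,u_1,\ldots,u_m)\right)\le r_{\mathcal{P}}(k,t)\le N\!\left(\mathcal{D}_{\mathcal{P}}(t;P_1,\ldots,P_E)\right).
$$
The idea is to instantiate the lower bound at the particular representatives $u_1,\ldots,u_E$ guaranteed by the hypothesis, i.e. with $m=E$, so that the left-hand side reads $N\!\left(\mathcal{D}_{\mathcal{P}}(t,u_1,\ldots,u_E)\right)$. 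By the assumed equality of matrices, $\mathcal{D}_{\mathcal{P}}(t,u_1,\ldots,u_E)=\mathcal{D}_{\mathcal{P}}(t;P_1,\ldots,P_E)$, these two matrices are literally the same object, hence $N$ of one equals $N$ of the other. Plugging this into the chain of inequalities pins $r_{\mathcal{P}}(k,t)$ between a quantity and itself, forcing equality throughout.

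In more detail, the key steps in order are: (i) apply the lower bound of Theorem~\ref{thm:LUB_partition} with the chosen set $\{u_1,\ldots,u_E\}$ to obtain $r_{\mathcal{P}}(k,t)\ge N\!\left(\mathcal{D}_{\mathcal{P}}(t,u_1,\ldots,u_E)\right)$; (ii) apply the upper bound of the same theorem to obtain $r_{\mathcal{P}}(k,t)\le N\!\left(\mathcal{D}_{\mathcal{P}}(t;P_1,\ldots,P_E)\right)$; (iii) invoke the hypothesis $\mathcal{D}_{\mathcal{P}}(t,u_1,\ldots,u_E)=\mathcal{D}_{\mathcal{P}}(t;P_1,\ldots,P_E)$, which by definition of $N$ as a function of the distance matrix gives $N\!\left(\mathcal{D}_{\mathcal{P}}(t,u_1,\ldots,u_E)\right)=N\!\left(\mathcal{D}_{\mathcal{P}}(t;P_1,\ldots,P_E)\right)$; (iv) combine (i)–(iii) to conclude $r_{\mathcal{P}}(k,t)=N\!\left(\mathcal{D}_{\mathcal{P}}(t,u_1,\ldots,u_E)\right)$.

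There is essentially no hard part here: the statement is a formal consequence of a result already established, and the only thing one must be slightly careful about is that $N(\cdot)$ depends only on the distance matrix (which is exactly how the $\mathcal{D}$-code and $N(\mathcal{D})$ were defined earlier), so equality of matrices immediately yields equality of the associated $N$-values. The one genuine content point, which the hypothesis supplies rather than the proof, is the existence of representatives $u_i\in P_i$ realizing all the pairwise block distances simultaneously; the corollary simply records what this buys us. If anything counts as an obstacle, it is purely bookkeeping: making sure the lower bound is applied to the $E$-element representative set (not to the full $q^k$-element set or to the abstract block list), so that the same matrix $\mathcal{D}_{\mathcal{P}}(t,u_1,\ldots,u_E)$ appears on both ends of the sandwich.
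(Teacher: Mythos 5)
Your proof is correct and is precisely the argument the paper intends: the paper states this corollary without a separate proof, noting it follows immediately from Theorem~\ref{thm:LUB_partition} (the partition analogue of the bounds in \cite{LBWY2023}), and your sandwich argument — instantiating the lower bound at the $E$ representatives, invoking the upper bound, and using the hypothesized matrix equality to collapse the chain — is exactly that deduction.
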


Next, we examine how the relationship between two partitions, with one being a refinement of the other, affects their FCPCs' redundancy.

\begin{lemma}\label{lem1}
    Let $\mathcal{P}$ and $\mathcal{Q}$ be two partitions of $\mathbb{F}_q^k$. If $\mathcal{Q}$ is a refinement of $\mathcal
    P$, then any $(\mathcal{Q}, t)$-encoding is also a $(\mathcal{P}, t)$-encoding, and
    $$r_{\mathcal{P}}(k, t) \leq r_{\mathcal Q}(k,t).$$
\end{lemma}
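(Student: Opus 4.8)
The plan is to show that the refinement relation forces any code that separates $\mathcal{Q}$-blocks to also separate $\mathcal{P}$-blocks, from which the redundancy inequality follows immediately. First I would fix a systematic encoding $\mathcal{C} : \mathbb{F}_q^k \to \mathbb{F}_q^{k+r}$ that is a $(\mathcal{Q},t)$-encoding, and take two vectors $u,v \in \mathbb{F}_q^k$ lying in distinct blocks of $\mathcal{P}$, say $u \in P_a$ and $v \in P_b$ with $a \neq b$. The key observation is that since $\mathcal{Q}$ refines $\mathcal{P}$, each $\mathcal{Q}$-block is contained in a single $\mathcal{P}$-block; hence the $\mathcal{Q}$-block containing $u$ lies inside $P_a$ and the $\mathcal{Q}$-block containing $v$ lies inside $P_b$, and these two $\mathcal{Q}$-blocks must therefore be distinct (otherwise a single $\mathcal{Q}$-block would meet both $P_a$ and $P_b$). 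Consequently $u$ and $v$ lie in distinct blocks of $\mathcal{Q}$, so the $(\mathcal{Q},t)$-encoding property gives $d(\mathcal{C}(u),\mathcal{C}(v)) \ge 2t+1$. Since $u,v$ were arbitrary vectors in distinct $\mathcal{P}$-blocks, $\mathcal{C}$ is a $(\mathcal{P},t)$-encoding.

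For the redundancy bound, I would argue that any value of $r$ for which a $(\mathcal{Q},t)$-encoding $\mathcal{C}:\mathbb{F}_q^k \to \mathbb{F}_q^{k+r}$ exists is also a value for which a $(\mathcal{P},t)$-encoding exists, namely the very same $\mathcal{C}$. Taking $r = r_{\mathcal{Q}}(k,t)$, which by definition is the minimum admissible redundancy for $\mathcal{Q}$, yields a $(\mathcal{P},t)$-encoding with that redundancy; hence the minimum admissible redundancy for $\mathcal{P}$ is at most $r_{\mathcal{Q}}(k,t)$, i.e. $r_{\mathcal{P}}(k,t) \le r_{\mathcal{Q}}(k,t)$.

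I do not expect any serious obstacle here; the proof is essentially a one-line set-theoretic observation about refinements followed by a definitional comparison of optimal redundancies. The only point requiring a small amount of care is the claim that $u$ and $v$ land in \emph{distinct} $\mathcal{Q}$-blocks: this uses that distinct blocks of a partition are disjoint together with the refinement hypothesis, so that a $\mathcal{Q}$-block cannot simultaneously be a subset of $P_a$ and of $P_b$ when $a \neq b$. Everything else is routine.
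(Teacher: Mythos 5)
Your proof is correct and follows essentially the same route as the paper's: pass from distinct $\mathcal{P}$-blocks to distinct containing $\mathcal{Q}$-blocks via the refinement property, invoke the $(\mathcal{Q},t)$-encoding distance condition, and then compare optimal redundancies. You even make explicit the small disjointness argument for why the two $\mathcal{Q}$-blocks are distinct, which the paper states without elaboration.
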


\begin{proof}
   Let $\mathcal{Q}$ be a refinement of $\mathcal{P}$, and let $\mathcal{C}:\mathbb{F}_q^k \to \mathbb{F}_q^{k+r}$ be a $(\mathcal{Q}, t)$-encoding. 
   Let $u,v \in\mathbb{F}_q^k$ such that $u \in P_1$ and $v\in P_2$ for some distinct $P_1,P_2 \in \mathcal{P}$.  Since $\mathcal{Q}$ is a refinement of $\mathcal{P}$, there exists $Q_1,Q_2 \in \mathcal{Q}, Q_1 \neq Q_2$ such that $u\in Q_1 \subseteq P_1$ and $v\in Q_2 \subseteq P_2$. Now, by the definition of a $(\mathcal{Q}, t)$-encoding $\mathcal{C}$,  we have 
$$d(\mathcal{C}(u), \mathcal{C}(v)) \geq 2t+1.$$
Therefore, $\mathcal{C}$ is also a $(\mathcal{P}, t)$-encoding, and the optimal redundancy of it is bounded as $r_{\mathcal{P}}(k, t) \leq r_{\mathcal Q}(k,t).$
\end{proof}

An $(n, q^k,2t+1)$ ECC is a $(\mathcal{Q},t)$-encoding for the finest partition $\mathcal{Q}=\big\{\{x\} : x \in \mathbb{F}_q^k\big\}$. Since this finest partition is a refinement of any partition of $\mathbb{F}_q^k$, such a code is a $(\mathcal{P},t)$-encoding for every partition $\mathcal{P}$.

\begin{lemma}  \label{lem2}
A $(\mathcal{P}, t)$-encoding defined above for a partition $\mathcal{P}$ of $\mathbb{F}_q^k$,  is an $(f, t)$-FCC for any function $f: \mathbb{F}_q^k \rightarrow S$ with domain partition $\mathcal{P}_f=\mathcal{P}$. 
\end{lemma}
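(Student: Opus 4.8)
The plan is to show directly that a $(\mathcal{P},t)$-encoding satisfies the defining distance condition of an $(f,t)$-FCC whenever $\mathcal{P}_f = \mathcal{P}$. This is essentially an unwinding of definitions: the only content is translating "different function values" into "different blocks of the domain partition."

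First I would fix a function $f:\mathbb{F}_q^k \to S$ whose domain partition $\mathcal{P}_f$ equals $\mathcal{P} = \{P_1,\ldots,P_E\}$, and let $\mathcal{C}_{\mathcal{P}}:\mathbb{F}_q^k \to \mathbb{F}_q^{k+r}$ be a $(\mathcal{P},t)$-encoding. To verify the FCC condition, take any $u_1,u_2 \in \mathbb{F}_q^k$ with $f(u_1) \neq f(u_2)$. Say $f(u_1) = f_a$ and $f(u_2) = f_b$ with $a \neq b$; then by the definition of the domain partition, $u_1 \in f^{-1}(f_a) = P_a$ and $u_2 \in f^{-1}(f_b) = P_b$, and since $f_a \neq f_b$ we have $a \neq b$, so $u_1$ and $u_2$ lie in distinct blocks of $\mathcal{P}$. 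The $(\mathcal{P},t)$-encoding property then gives $d(\mathcal{C}_{\mathcal{P}}(u_1),\mathcal{C}_{\mathcal{P}}(u_2)) \ge 2t+1$, which is exactly the requirement for $\mathcal{C}_{\mathcal{P}}$ to be an $(f,t)$-FCC. Systematicity is inherited from the definition of $(\mathcal{P},t)$-encoding, so no extra argument is needed there.

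There is essentially no obstacle here; the statement is a definitional bridge, and the mild subtlety — if one can even call it that — is simply recording that $P_i \mapsto f_i$ is a well-defined bijection between the blocks of $\mathcal{P}_f$ and $\mathrm{Im}(f)$, so that "distinct blocks" and "distinct function values" are interchangeable. It may be worth remarking in the proof (or immediately after) that the converse also holds: any $(f,t)$-FCC is a $(\mathcal{P}_f,t)$-encoding by the same reasoning read backwards, which is precisely the sense in which FCPCs generalize FCCs, and which justifies the claims made in the introduction and the abstract. I would keep the proof to a few lines and not belabor it.
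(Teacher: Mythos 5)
Your proposal is correct and follows exactly the same route as the paper's proof: fix $u_1,u_2$ with $f(u_1)\neq f(u_2)$, note they lie in distinct blocks of $\mathcal{P}_f=\mathcal{P}$, and invoke the defining distance condition of the $(\mathcal{P},t)$-encoding. The extra remarks on systematicity and the converse direction are fine but not needed; nothing is missing.
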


\begin{proof}
 Consider a function $f: \mathbb{F}_q^k \rightarrow S$ with $\mathcal{P}_f=\{P_1, P_2, \ldots, P_E\}$, and a $(\mathcal{P}_f, t)$-encoding $\mathcal{C}_{\mathcal{P}_f}$. Let $u, v \in \mathbb{F}_q^k$ be such that $f(u) \neq f(v)$. Then  $u\in P_i$ and $v\in P_j$ for some $i,j \in [E], i\neq j$, therefore, by the definition of $\mathcal{C}_{\mathcal{P}_f}$,  we have 
$d(\mathcal{C}_{\mathcal{P}_f}(u), \mathcal{C}_{\mathcal{P}_f}(v)) \geq 2t+1.$
\end{proof} 

Thus, for every function $f$, we have
$r_f(k,t) = r_{\mathcal{P}_f}(k,t).$
The following lemma shows how redundancy behaves under functional composition.
\begin{lemma}\label{lem:composition}
Let $h:\mathbb{F}_q^k \to S$ be a function, and let 
$g:\mathrm{Im}(h) \to S'$ be a function defined on the image of $h$.  
Consider $f:\mathbb{F}_q^k \to S'$ defined by
$f(x) = g(h(x))$ for all $x \in \mathbb{F}_q^k.$
Then, for every integer $t \ge 0$, we have
$$ r_f(k,t) \le r_h(k,t). $$
Moreover, if $g$ is a bijection on $\mathrm{Im}(h)$, then
$
    r_f(k,t) = r_h(k,t).
$
\end{lemma}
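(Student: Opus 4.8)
The plan is to reduce everything to the refinement lemma (Lemma~\ref{lem1}) via the identity $r_f(k,t) = r_{\mathcal{P}_f}(k,t)$ established just above. The key observation is that the domain partition $\mathcal{P}_h$ refines the domain partition $\mathcal{P}_f$. Indeed, if $u,v \in \mathbb{F}_q^k$ lie in the same block of $\mathcal{P}_h$, i.e.\ $h(u) = h(v)$, then $f(u) = g(h(u)) = g(h(v)) = f(v)$, so $u,v$ lie in the same block of $\mathcal{P}_f$. Hence every block of $\mathcal{P}_h$ is contained in some block of $\mathcal{P}_f$, which is exactly the definition of $\mathcal{P}_h$ being a refinement of $\mathcal{P}_f$.

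First I would invoke Lemma~\ref{lem1} with $\mathcal{P} = \mathcal{P}_f$ and $\mathcal{Q} = \mathcal{P}_h$ to conclude $r_{\mathcal{P}_f}(k,t) \le r_{\mathcal{P}_h}(k,t)$. Then, using $r_f(k,t) = r_{\mathcal{P}_f}(k,t)$ and $r_h(k,t) = r_{\mathcal{P}_h}(k,t)$ (the consequence of Lemma~\ref{lem2}), the desired inequality $r_f(k,t) \le r_h(k,t)$ follows immediately.

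For the equality statement, suppose $g$ is a bijection on $\mathrm{Im}(h)$. Then the reverse implication also holds: $f(u) = f(v)$ means $g(h(u)) = g(h(v))$, and injectivity of $g$ forces $h(u) = h(v)$. Combined with the forward implication above, we get $h(u) = h(v) \iff f(u) = f(v)$, so $\mathcal{P}_f = \mathcal{P}_h$ as partitions of $\mathbb{F}_q^k$, and therefore $r_f(k,t) = r_{\mathcal{P}_f}(k,t) = r_{\mathcal{P}_h}(k,t) = r_h(k,t)$.

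There is essentially no hard step here; the only thing to be careful about is that $g$ is defined precisely on $\mathrm{Im}(h)$, so the composition $f = g \circ h$ is well defined, and that "bijection on $\mathrm{Im}(h)$" is used only through its injectivity (surjectivity onto $S'$ is irrelevant for the partition argument). I would keep the proof to a few lines, citing Lemmas~\ref{lem1} and~\ref{lem2} rather than re-deriving anything.
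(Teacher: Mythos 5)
Your proposal is correct and follows essentially the same route as the paper: both establish that $\mathcal{P}_h$ refines $\mathcal{P}_f$ (the paper by writing each block of $\mathcal{P}_f$ as $\bigcup_{s\in g^{-1}(s')}P_s$, you by the equivalent elementwise implication $h(u)=h(v)\Rightarrow f(u)=f(v)$) and then invoke Lemma~\ref{lem1} together with $r_f=r_{\mathcal{P}_f}$, with the bijective case handled identically by showing the two partitions coincide.
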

\begin{proof}
Let $\mathcal{P}_h$ and $\mathcal{P}_f$ denote the domain partitions of $h$ and $f$, respectively.  If $x, y\in\mathbb{F}_q^k$ are in the same block of $\mathcal{P}_h$, then $h(x)=h(y)$, so $f(x)=g(h(x))=g(h(y))=f(y)$, which means $x$ and $y$ are also in the same block of $\mathcal{P}_f$. Therefore, $\mathcal{P}_h$ is a refinement of $\mathcal{P}_f$, so by Lemma~\ref{lem1} we get
$
r_{\mathcal{P}_f}(k,t) \le r_{\mathcal{P}_h}(k,t),
$
and equivalently,
$r_f(k,t) \le  r_h(k,t) $ as claimed.
Furthermore, if $g$ is bijective on $\mathrm{Im}(h)$ then we can also write $h=g^{-1}\circ f$, so the partitions $\mathcal{P}_h$ and $\mathcal{P}_f$ are refinements of each other, and therefore equal as partitions. It follows that $
r_f(k,t) = r_{\mathcal{P}_f}(k,t)
= r_{\mathcal{P}_h}(k,t)
= r_h(k,t).
$
\end{proof}

\subsection{FCPCs for protecting multiple functions}

There can be multiple functions that share the same domain partition.
Therefore, one FCPC can be used for multiple functions simultaneously. If $f^{(1)}, f^{(2)}, \ldots, f^{(K)}$ are the functions on $\mathbb{F}_q^k$ such that $\mathcal{P}_{f^{(1)}}=\mathcal{P}_{f^{(2)}}=\cdots=\mathcal{P}_{f^{(K)}}=\mathcal{P}$, then the encoding $\mathcal{C_{\mathcal{P}}}$ is an $(f^{(i)}, t)$-FCC for all $i\in [K]$. The following example illustrates this.

\begin{example}\label{ex2}
Consider the following three functions from $\mathbb{F}_2^4$ to $\mathbb{F}_2^2$ defined as
\begin{align*}
f^{(1)}: \qquad &x \mapsto \begin{bmatrix}  1 & 1& 1& 0 \\ 0 & 1 & 1 & 0  \end{bmatrix} x \\
f^{(2)}: \qquad &x \mapsto \begin{bmatrix}  1 & 1& 1& 0 \\ 1 & 0 & 0 & 0  \end{bmatrix} x \\
f^{(3)}: \qquad &x \mapsto \begin{bmatrix}  1 & 0& 0& 0 \\ 0 & 1 & 1 & 0  \end{bmatrix} x 
\end{align*}
All these functions have the same function domain partition, i.e., $\mathcal{P}_{f^{(1)}}=\mathcal{P}_{f^{(2)}}=\mathcal{P}_{f^{(3)}}=\mathcal{P}$, where
\begin{align*}
\mathcal{P}=\{&P_1=\{ 0000, 0001, 0110, 0111 \},
                    P_2=\{ 0010, 0011, 0100, 0101 \}, \\
                     &P_3=\{ 1000, 1001, 1110, 1111 \}, 
                      P_4=\{1100, 1101, 1010, 1011\}\}.
\end{align*}
For $t=2$, we have a $(\mathcal{P}, t)$-encoding $\mathcal{C}_{\mathcal{P}}: \mathbb{F}_2^4 \rightarrow \mathbb{F}_2^{10}$ defined as
$$\mathcal{C}_{\mathcal{P}}(u)=(u, u_p), \ \text{where} \ u_p=\begin{cases}
000000 & \ \text{if} \ u \in P_1 \\
111100 & \ \text{if} \ u \in P_2 \\
001111 & \ \text{if} \ u \in P_3 \\
110011 & \ \text{if} \ u \in P_4 \\
\end{cases}.$$
This encoding $\mathcal{C}_{\mathcal{P}}$ is an $(f^{(i)},t)$-FCC for all $i\in [3]$.
\end{example}

 Since the join of partitions is a refinement of each individual partition, 
the following theorem follows directly from Lemma~\ref{lem1}.

\begin{theorem}\label{thm_join}
Let $f^{(1)}, f^{(2)}, \ldots, f^{(K)}$ be functions on $\mathbb{F}_q^k$ with 
domain partitions 
$\mathcal{P}_{f^{(1)}}, \mathcal{P}_{f^{(2)}}, \ldots, \mathcal{P}_{f^{(K)}}$, 
respectively. 
If
$$
\mathcal{P}=\mathcal{P}_{f^{(1)}} \vee 
\mathcal{P}_{f^{(2)}} \vee \cdots \vee 
\mathcal{P}_{f^{(K)}},
$$
then any $(\mathcal{P}, t)$-encoding is an $(f^{(i)}, t)$-FCC for every 
$i\in[K]$.
\end{theorem}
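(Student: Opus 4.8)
The plan is to reduce the statement to Lemma~\ref{lem1} via two observations about the join operation. First, I would recall from the definition of the join (coarsest common refinement) that $\mathcal{P}=\mathcal{P}_{f^{(1)}} \vee \mathcal{P}_{f^{(2)}} \vee \cdots \vee \mathcal{P}_{f^{(K)}}$ is a refinement of each $\mathcal{P}_{f^{(i)}}$ individually; this is immediate since each block of $\mathcal{P}$ is an intersection of one block from each $\mathcal{P}_{f^{(j)}}$, and in particular is contained in a block of $\mathcal{P}_{f^{(i)}}$.

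Next, fix $i\in[K]$ and let $\mathcal{C}$ be an arbitrary $(\mathcal{P},t)$-encoding. Since $\mathcal{P}$ refines $\mathcal{P}_{f^{(i)}}$, Lemma~\ref{lem1} applies directly and tells us that $\mathcal{C}$ is also a $(\mathcal{P}_{f^{(i)}},t)$-encoding. Finally, by Lemma~\ref{lem2}, a $(\mathcal{P}_{f^{(i)}},t)$-encoding is an $(f^{(i)},t)$-FCC for any function whose domain partition equals $\mathcal{P}_{f^{(i)}}$, which $f^{(i)}$ does by construction. Chaining these three steps gives that $\mathcal{C}$ is an $(f^{(i)},t)$-FCC, and since $i$ was arbitrary, the claim follows. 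One could also phrase the redundancy consequence: $r_{f^{(i)}}(k,t)=r_{\mathcal{P}_{f^{(i)}}}(k,t)\le r_{\mathcal{P}}(k,t)$ for every $i$, though the theorem as stated only asserts the encoding property.

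There is essentially no obstacle here; the content is entirely in the preceding lemmas, and the only thing to verify carefully is the first observation that the join refines each of its arguments, which is really just unwinding the definition. The proof is a two- or three-line chase: join refines each $\mathcal{P}_{f^{(i)}}$, then invoke Lemma~\ref{lem1}, then invoke Lemma~\ref{lem2}. If anything, the subtlety worth a sentence is making explicit that the argument is uniform in $i$, so that a \emph{single} code $\mathcal{C}_{\mathcal{P}}$ simultaneously serves as an FCC for all $K$ functions, which is the whole point of using the join rather than treating the functions separately.
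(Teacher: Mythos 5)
Your proof is correct and matches the paper's own argument exactly: the paper states the theorem "follows directly from Lemma~\ref{lem1}" because the join refines each $\mathcal{P}_{f^{(i)}}$, with Lemma~\ref{lem2} supplying the passage from $(\mathcal{P}_{f^{(i)}},t)$-encoding to $(f^{(i)},t)$-FCC, just as you chain them.
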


Theorem~\ref{thm_join} shows that a single FCC can be used for multiple functions simultaneously, even when their domain partitions are not same.
The next example illustrates this.

\begin{example}\label{ex1}
Consider a scenario with one transmitter and two receivers, where each receiver requires protection for one Hamming weight distribution function: the first receiver for $\Delta_6$ and the second receiver for $\Delta_9$, where  
$$\Delta_T(u)=\left \lfloor \frac{\text{wt}(u)}{T} \right \rfloor \quad \text{for all } \ u \in \mathbb{F}_2^k .$$
For $k=35$, the corresponding domain partitions are
\begin{align*}
\mathcal{P}_{\Delta_6}&=\{\{u \in \mathbb{F}_2^{35} \mid 0\le \text{wt}(u) \le 5\}, \{u \in \mathbb{F}_2^{35} \mid 6\le \text{wt}(u) \le 11\}, \{u \in \mathbb{F}_2^{35} \mid 12\le \text{wt}(u) \le 17\}, \\
&\{u \in \mathbb{F}_2^{35} \mid 18\le \text{wt}(u) \le 23\}, \{u \in \mathbb{F}_2^{35} \mid 24\le \text{wt}(u) \le 29\}, \{u \in \mathbb{F}_2^{35} \mid 30\le \text{wt}(u) \le 35\}\},
\end{align*}
and
\begin{align*}
\mathcal{P}_{\Delta_9}&=\{\{u \in \mathbb{F}_2^{35} \mid 0\le \text{wt}(u) \le 8\}, \{u \in \mathbb{F}_2^{35} \mid 9\le \text{wt}(u) \le 17\}, \\
&\{u \in \mathbb{F}_2^{35} \mid 18\le \text{wt}(u) \le 26\},  \{u \in \mathbb{F}_2^{35} \mid 27\le \text{wt}(u) \le 35\}\}.
\end{align*}
 As proved in \cite{GXZZ2025}, there exists an optimal $(\Delta_T, t)$-FCC with redundancy $2t$ for a Hamming weight distribution function $\Delta_T$ if $T\ge t+1$.  Therefore, for $t=2$, we have the following solutions.

\textbf{Solution 1:} Use an ECC for all $k=35$ symbols that can correct up to $t=2$ error for which the minimum required length is $46$ \cite{CodeTable}.

\textbf{Solution 2:} Append the redundancy of $(\Delta_6, t)$-FCC and $(\Delta_9, t)$-FCC with the message and send it on the shared link. This way each receiver will get protection for its desired function and the total length of the transmission will be $k+2t+2t=35+4+4=43.$

\textbf{Solution 3:} Use Theorem \ref{thm_join} and find the coarsest common refinement of $\mathcal{P}_{\Delta_6}$ and $\mathcal{P}_{\Delta_9}$ given as
\begin{align*}
\mathcal{P}_{\Delta_6} \vee \mathcal{P}_{\Delta_9} &=\{\{u \in \mathbb{F}_2^{35} \mid 0\le \text{wt}(u) \le 2\}, \{u \in \mathbb{F}_2^{35} \mid 3\le \text{wt}(u) \le 5\}, \{u \in \mathbb{F}_2^{35} \mid 6\le \text{wt}(u) \le 8\}, \\
&\{u \in \mathbb{F}_2^{35} \mid 9\le \text{wt}(u) \le 11\}, \{u \in \mathbb{F}_2^{35} \mid 12\le \text{wt}(u) \le 14\}, \{u \in \mathbb{F}_2^{35} \mid 15\le \text{wt}(u) \le 17\}, \\
&\{u \in \mathbb{F}_2^{18} \mid 0\le \text{wt}(u) \le 20\}, \{u \in \mathbb{F}_2^{35} \mid 21\le \text{wt}(u) \le 23\}, \{u \in \mathbb{F}_2^{35} \mid 24\le \text{wt}(u) \le 26\}, \\
&\{u \in \mathbb{F}_2^{35} \mid 27\le \text{wt}(u) \le 29\}, \{u \in \mathbb{F}_2^{35} \mid 30\le \text{wt}(u) \le 32\}, \{u \in \mathbb{F}_2^{35} \mid 33\le \text{wt}(u) \le 35\}\}, \\
& =\mathcal{P}_{\Delta_3}. 
\end{align*}

Since $3\ge t+1$ for $t=2$, we have an optimal $(\Delta_3, t)$-FCC with redundancy $2t=4$ which is a $(\mathcal{P}_{\Delta_3}, t)$-encoding. Therefore, we can use this encoding by which the desired functions of both receivers will be protected from two errors, and  the length will be $k+2t=35+4=39.$


\end{example}

The savings achieved by employing a single code to protect multiple partitions (or functions) can be quantified in two ways: (i) the saved redundancy per function, referred to as the partition redundancy gain; and (ii) the relative improvement in the code rate, referred to as the relative rate increment.

\begin{definition}[Partition gains]
Let $\mathcal{P}_{1}, \mathcal{P}_{2}, \ldots, \mathcal{P}_{K}$ be partitions of $\mathbb{F}_q^k$, and let
$\mathcal{P} = \mathcal{P}_1 \vee \mathcal{P}_2 \vee \cdots \vee \mathcal{P}_K$.
The \emph{partition redundancy gain} of a $(\mathcal{P},t)$-encoding with redundancy $r$ is defined as
$$
\frac{1}{K}\left(\sum_{i=1}^{K} r_{\mathcal{P}_i}(k,t) - r\right).
$$
The \emph{relative rate increment} is defined as
$$
\frac{R_{\mathrm{new}} - R_{\mathrm{old}}}{R_{\mathrm{old}}}
= \frac{\sum_{i=1}^{K} r_{\mathcal{P}_i}(k,t) - r}{k+r},
$$
where $R_{\mathrm{new}}=\frac{k}{k+r}$ and $R_{\mathrm{old}}=\frac{k}{k+\sum_{i=1}^{K} r_{\mathcal{P}_i}(k,t)}.$
\end{definition}

These partition gains can be seen in terms of functions also. 
   Let $f^{(1)}, f^{(2)}, \ldots, f^{(K)}$ be functions on $\mathbb{F}_q^k$ with domain partitions $\mathcal{P}_{f^{(1)}}, \mathcal{P}_{f^{(2)}},$ $\ldots, $ $\mathcal{P}_{f^{(K)}}$, respectively, and $\mathcal{P} = \mathcal{P}_{f^{(1)}} \vee \mathcal{P}_{f^{(2)}} \vee \cdots \vee \mathcal{P}_{f^{(K)}}$. Then the partition redundancy gain of a $(\mathcal{P},t)$-encoding with redundancy $r$ is  
    $$\frac{1}{K}\left(\sum_{i=1}^{K} r_{\mathcal{P}_{f^{(i)}}}(k,t) - r\right) = \frac{1}{K} \left(\sum_{i=1}^{K} r_{f^{(i)}}(k,t) - r\right).$$
    In a similar manner, the relative rate increment is $\frac{\sum_{i=1}^{K} r_{f^{(i)}}(k,t) -r}{k+r}$.

For the code given in solution 3 of Example \ref{ex1}, the partition redundancy gain is 
$\frac{1}{2}((4+4)-4)=2,$ and the relative rate  increment is $\frac{4}{39},$
which are the maximum possible gains as $(\mathcal{P}_{\Delta_3}, 2)$-encoding used here is optimal with $r_{\mathcal{P}_{\Delta_3}}(35, 2)=4$.

The bounds on the redundancy of the join of partitions are given in the following theorem.
\begin{theorem}\label{thm:UB}
    Let $\mathcal{P}_{1}, \mathcal{P}_{2},$ $\ldots, $ $\mathcal{P}_{K}$ be partitions of $\mathbb{F}_q^k$. Then 
    $$ \max_{i\in [K]}(r_{\mathcal{P}_i}(k,t)) \leq r_{\mathcal{P}_1 \vee \mathcal{P}_2 \vee \cdots \vee \mathcal{P}_K}(k,t) \leq \min \left( N_q(q^k,2t+1)-k, \sum_{i=1}^{K} r_{\mathcal{P}_i}(k,t) \right),$$
    where $N_q(M,d)$ denotes the minimum length of an ECC with $M$ codewords and minimum distance $d$.
\end{theorem}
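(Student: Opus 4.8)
The plan is to establish the three-term inequality by treating the lower bound and the two components of the upper bound separately, using the refinement structure of the join together with the already-stated bounds.

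For the lower bound, $\max_{i\in[K]} r_{\mathcal{P}_i}(k,t) \le r_{\mathcal{P}_1\vee\cdots\vee\mathcal{P}_K}(k,t)$, I would argue as follows. The join $\mathcal{P} = \mathcal{P}_1\vee\cdots\vee\mathcal{P}_K$ is, by definition, a refinement of each $\mathcal{P}_i$. Hence Lemma~\ref{lem1} applies with $\mathcal{Q}=\mathcal{P}$ and the coarser partition being $\mathcal{P}_i$, yielding $r_{\mathcal{P}_i}(k,t)\le r_{\mathcal{P}}(k,t)$ for every $i\in[K]$. Taking the maximum over $i$ gives the claimed lower bound. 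This step is essentially immediate from Lemma~\ref{lem1} and the remark (already in the text) that the join refines each partition.

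For the first component of the upper bound, $r_{\mathcal{P}}(k,t)\le N(q^k,2t+1)-k$, I would invoke the observation made right after Lemma~\ref{lem1}: an $(n,q^k,2t+1)$ error-correcting code is a $(\mathcal{Q},t)$-encoding for the finest partition into singletons, which is a refinement of every partition of $\mathbb{F}_q^k$, and in particular of $\mathcal{P}$. Taking a shortest such code, of length $N(q^k,2t+1)$, and viewing it in systematic form gives an $(\mathcal{P},t)$-encoding with redundancy $N(q^k,2t+1)-k$; hence $r_{\mathcal{P}}(k,t)\le N(q^k,2t+1)-k$. (One should note that a minimum-distance code can always be put in systematic form up to a coordinate permutation, which does not change distances, so this is harmless.)

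The main work is the second component of the upper bound, $r_{\mathcal{P}}(k,t)\le \sum_{i=1}^K r_{\mathcal{P}_i}(k,t)$. The idea is concatenation of redundancies: for each $i$, let $\mathcal{C}_i:\mathbb{F}_q^k\to\mathbb{F}_q^{k+r_i}$ be an optimal $(\mathcal{P}_i,t)$-encoding with $r_i=r_{\mathcal{P}_i}(k,t)$, written systematically as $\mathcal{C}_i(u)=(u,p_i(u))$ with $p_i(u)\in\mathbb{F}_q^{r_i}$. Define $\mathcal{C}(u)=(u,\,p_1(u),\,p_2(u),\,\ldots,\,p_K(u))\in\mathbb{F}_q^{k+\sum_i r_i}$, which is systematic. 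To check it is a $(\mathcal{P},t)$-encoding, take $u,v$ lying in distinct blocks of $\mathcal{P}$. Since $\mathcal{P}$ refines each $\mathcal{P}_i$, $u$ and $v$ lie in (possibly equal) blocks of each $\mathcal{P}_i$; the key point is that because the blocks of $\mathcal{P}$ are exactly the nonempty intersections $\bigcap_i B_i$ with $B_i\in\mathcal{P}_i$, if $u$ and $v$ were in the same block of $\mathcal{P}_i$ for \emph{every} $i$ they would be in the same block of $\mathcal{P}$ — contradiction. So there is at least one index $i_0$ with $u,v$ in distinct blocks of $\mathcal{P}_{i_0}$, and then $d\big(\mathcal{C}_{i_0}(u),\mathcal{C}_{i_0}(v)\big)\ge 2t+1$. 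Writing the distance of $\mathcal{C}(u),\mathcal{C}(v)$ as the sum of the distance on the shared systematic part plus the distances on each redundancy block, we get
$$
d\big(\mathcal{C}(u),\mathcal{C}(v)\big) \;=\; d(u,v) + \sum_{i=1}^K d\big(p_i(u),p_i(v)\big) \;\ge\; d(u,v) + d\big(p_{i_0}(u),p_{i_0}(v)\big) \;=\; d\big(\mathcal{C}_{i_0}(u),\mathcal{C}_{i_0}(v)\big) \;\ge\; 2t+1,
$$
so $\mathcal{C}$ is a $(\mathcal{P},t)$-encoding with redundancy $\sum_i r_i$, giving $r_{\mathcal{P}}(k,t)\le\sum_i r_{\mathcal{P}_i}(k,t)$. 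Combining the two components of the upper bound with the lower bound completes the proof.

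The part requiring the most care is the concatenation argument: one must justify that distinct blocks of the join force a disagreement in at least one of the $\mathcal{P}_i$ (a direct consequence of the "intersection of blocks" characterization of the join, already recorded in the Definition of join), and one must handle the systematic form correctly so that the shared information coordinates contribute $d(u,v)$ exactly once rather than being counted per code. Everything else reduces to invoking Lemma~\ref{lem1} and the error-correcting-code remark following it.
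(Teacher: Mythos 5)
Your proposal is correct and follows essentially the same route as the paper's proof: the lower bound via Lemma~\ref{lem1} applied to the join as a refinement of each $\mathcal{P}_i$, the first upper bound via the singleton-partition/ECC observation, and the second upper bound via concatenating the individual redundancy vectors and noting that distinct blocks of the join force distinct blocks in at least one $\mathcal{P}_{i_0}$. The only cosmetic difference is that you expand $d(\mathcal{C}(u),\mathcal{C}(v))$ as an explicit sum over redundancy segments, whereas the paper simply lower-bounds it by $d\bigl((u,r_{i_0}(u)),(v,r_{i_0}(v))\bigr)$; these are the same argument.
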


\begin{proof}
Since the join of partitions is a refinement of each individual partition, Lemma~\ref{lem1} implies that
$$
r_{\mathcal{P}_i}(k,t) \;\le\; r_{\mathcal{P}_1 \vee \mathcal{P}_2 \vee \cdots \vee \mathcal{P}_K}(k,t)
\qquad \text{for all } i \in [K].
$$
Taking the maximum over $i \in [K]$ gives the desired lower bound.

The first part of the upper bound is quite straightforward, since any $(n,q^k,2t+1)$
ECC is an $(\mathcal{Q},t)$-encoding for the finest 
partition $\mathcal{Q}=\{\{x\}: x\in\mathbb{F}_q^k\}$, which is a refinement 
of any partition. Hence, by Lemma~\ref{lem1},
$$
r_{\mathcal{P}_1 \vee \mathcal{P}_2 \vee \cdots \vee \mathcal{P}_K}(k,t)
\le N_q(q^k,2t+1)-k.
$$

 For the other part, let $C:\mathbb{F}_q^k \to \mathbb{F}_q^{k+\sum_{i=1}^{K}r_{\mathcal{P}_i}(k,t)}$ be an encoding defined as 
    $$C(u)=(u,r_1(u), r_2(u), \ldots, r_K(u)),$$
    where $r_i(u)$ is the redundancy vector of a $(\mathcal{P}_i, t)$-encoding of length $r_{\mathcal{P}_i}(k,t)$.

    Let $u\in U$ and $v \in V$, where $U$ and $V$ are two different sets in $\mathcal{P}=\mathcal{P}_1 \vee \mathcal{P}_2 \vee \cdots \vee \mathcal{P}_K$. Then
    \begin{equation}\label{dist1}
        d(C(u), C(v)) \geq d((u,r_i(u)), (v,r_i(v))), \quad \forall i \in [K].
    \end{equation}
    By the definition of join of partitions, we have $U=U_1 \cap U_2 \cap \cdots \cap U_K$ and $V=V_1 \cap V_2 \cap \cdots \cap V_K$ for some $U_i, V_i \in \mathcal{P}_i, i \in [K]$. That means $u \in U_i$ and $v \in V_i$ for all $i \in [K]$. Since sets $U$ and $V$ are different, there must be one $j \in [K]$ for which $U_j \neq V_j$. Then by the definition of $(\mathcal{P}_j, t)$-encoding,
    $$d((u, r_j(u)), (v, r_j(v))) \geq 2t+1,$$
    and from \eqref{dist1}, we have $d(C(u), C(v)) \geq 2t+1$. Therefore, $C$ is a $(\mathcal{P}, t)$-encoding, and hence
$$
    r_{\mathcal{P}}(k,t) \leq \sum_{i=1}^{K} r_{\mathcal{P}_i}(k,t).
$$
\end{proof}

\subsection{FCPCs for linear functions}
If  $f:\mathbb{F}_q^k \rightarrow \mathbb{F}_q^{\ell}$ is a linear function, then the domain partition of this function is the coset partition of its kernal, as the kernal of a linear map is a subspace of the domain. Therefore, we have the following result. 

\begin{lemma}
Let $U$ be a subspace of $\mathbb{F}_q^k$ and $\mathcal{U}$ denote the coset partition of $U$. Then for a positive integer $t$, an $(\mathcal{U}, t)$-encoding is an $(f,t)$-FCC for any linear function $f:\mathbb{F}_q^k \rightarrow \mathbb{F}_q^{\ell}$ with Ker$(f) = U$.
\end{lemma}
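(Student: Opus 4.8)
The plan is to reduce this statement to Lemma~\ref{lem2} by showing that the domain partition $\mathcal{P}_f$ of any linear function $f:\mathbb{F}_q^k\to\mathbb{F}_q^\ell$ with $\ker(f)=U$ coincides with the coset partition $\mathcal{U}$ of $U$. Once this identification is established, Lemma~\ref{lem2} applies directly and the claim follows with no further work.

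First I would recall the standard fact that for a linear map $f$ and any $u,v\in\mathbb{F}_q^k$, we have $f(u)=f(v)$ if and only if $f(u-v)=0$, i.e.\ if and only if $u-v\in\ker(f)=U$, which is exactly the condition that $u$ and $v$ lie in the same coset of $U$. This uses only linearity of $f$ (additivity suffices). Consequently, the preimage set $f^{-1}(f_i)$ of each value $f_i\in\mathrm{Im}(f)$ is precisely one coset of $U$, and distinct values give distinct cosets; conversely every coset of $U$ is the preimage of some value in $\mathrm{Im}(f)$. Hence the blocks of $\mathcal{P}_f$ are exactly the cosets of $U$, so $\mathcal{P}_f=\mathcal{U}$.

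Having shown $\mathcal{P}_f=\mathcal{U}$, I would then simply invoke Lemma~\ref{lem2}: a $(\mathcal{P},t)$-encoding for a partition $\mathcal{P}$ of $\mathbb{F}_q^k$ is an $(f,t)$-FCC for every function $f$ with $\mathcal{P}_f=\mathcal{P}$. Applying this with $\mathcal{P}=\mathcal{U}$ yields that every $(\mathcal{U},t)$-encoding is an $(f,t)$-FCC, as desired.

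There is essentially no technical obstacle here; the only point requiring care is the routine verification that the fibers of a linear map are exactly the cosets of its kernel, and that this correspondence is a bijection between $\mathrm{Im}(f)$ and $\mathbb{F}_q^k/U$ (so that no cosets are ``missed'' and the partition indices match up). Everything else is an immediate appeal to Lemma~\ref{lem2}.
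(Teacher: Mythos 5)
Your proposal is correct and matches the paper's reasoning: the paper likewise observes that the domain partition of a linear function is the coset partition of its kernel and then (implicitly) applies Lemma~\ref{lem2}, stating the result without further proof. Your explicit verification that the fibers of $f$ are exactly the cosets of $U$ simply fills in the routine detail the paper leaves unstated.
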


From Theorem \ref{thm_join}, we can get the following result.

\begin{theorem}\label{thm_join_l}
 If $f^{(1)}, f^{(2)}, \ldots, f^{(K)}$ are linear functions from $\mathbb{F}_q^k$ to $\mathbb{F}_q^{\ell}$.  
Then a $(\mathcal{U}, t)$-encoding  is an $(f^{(i)}, t)$-FCC for all $i\in [K]$, if $\mathcal{U}$ is the coset partition of the subspace $U=\ker(f^{(1)}) \cap\ker(f^{(2)})  \cap \cdots \cap \ker(f^{(K)})$.
\end{theorem}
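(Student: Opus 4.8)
The plan is to derive this directly from Theorem~\ref{thm_join}, for which it suffices to show that $\mathcal{U}$ is precisely the join $\mathcal{P}_{f^{(1)}} \vee \cdots \vee \mathcal{P}_{f^{(K)}}$ of the domain partitions. By the preceding lemma (together with the fact that $\ker(f^{(i)})$ is a subspace of $\mathbb{F}_q^k$), the domain partition $\mathcal{P}_{f^{(i)}}$ is exactly the coset partition $\mathcal{U}_i$ of $\ker(f^{(i)})$. So the claim reduces to the purely combinatorial statement: the join of the coset partitions of subspaces $U_1,\ldots,U_K \le \mathbb{F}_q^k$ equals the coset partition of $U = \bigcap_{i\in[K]} U_i$.

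To see this, I would recall that, by the definition of the join, every block of $\mathcal{U}_1 \vee \cdots \vee \mathcal{U}_K$ is a nonempty intersection of the form $(x_1 + U_1) \cap \cdots \cap (x_K + U_K)$. Fixing such a block and picking any point $y$ in it, we have $x_i + U_i = y + U_i$ for every $i$, so the block equals $\bigcap_{i\in[K]} (y + U_i)$. A vector $z$ lies in this set iff $z - y \in U_i$ for all $i$, i.e. iff $z - y \in \bigcap_{i\in[K]} U_i = U$, i.e. iff $z \in y + U$; hence every block of the join is a coset of $U$. Conversely, given any coset $y + U$, we have $y + U \subseteq y + U_i$ for each $i$ since $U \le U_i$, so $y + U$ is contained in the join-block $\bigcap_{i\in[K]}(y+U_i)$, which we have just shown equals $y + U$; thus $y + U$ itself is a block of the join. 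Therefore $\mathcal{U}_1 \vee \cdots \vee \mathcal{U}_K = \mathcal{U}$.

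With this identification in hand, Theorem~\ref{thm_join} applies verbatim: since $\mathcal{U} = \mathcal{P}_{f^{(1)}} \vee \cdots \vee \mathcal{P}_{f^{(K)}}$, any $(\mathcal{U},t)$-encoding is an $(f^{(i)},t)$-FCC for every $i\in[K]$. As an alternative route one can bypass the join entirely: since $U \le \ker(f^{(i)})$, each coset of $U$ sits inside a coset of $\ker(f^{(i)})$, so $\mathcal{U}$ is a refinement of $\mathcal{P}_{f^{(i)}}$; then Lemma~\ref{lem1} shows a $(\mathcal{U},t)$-encoding is a $(\mathcal{P}_{f^{(i)}},t)$-encoding, and Lemma~\ref{lem2} identifies it as an $(f^{(i)},t)$-FCC. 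The only genuine work is the coset computation above, which is elementary; the one point to be careful about is verifying \emph{both} inclusions, so as to conclude that the join equals $\mathcal{U}$ rather than merely refining it or being refined by it.
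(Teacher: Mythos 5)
Your proposal is correct and follows essentially the same route as the paper: both proofs establish that the coset partition of $U=\bigcap_{i\in[K]}\ker(f^{(i)})$ equals the join $\mathcal{P}_{f^{(1)}}\vee\cdots\vee\mathcal{P}_{f^{(K)}}$ by verifying both inclusions via the observation that a nonempty block $\bigcap_i (x_i+\ker(f^{(i)}))$ containing $y$ equals $\bigcap_i(y+\ker(f^{(i)}))=y+U$, and then invoke Theorem~\ref{thm_join}. Your alternative route through Lemma~\ref{lem1} and Lemma~\ref{lem2} (using only that $\mathcal{U}$ refines each $\mathcal{P}_{f^{(i)}}$) is also valid and slightly more economical, but the main argument matches the paper's.
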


\begin{proof}
   Let $\mathcal{P}_i$ be the coset partition of $\ker (f^{(i)})$, for all $i\in [K]$. We need to show that
   $$\mathcal{U}=\mathcal{P}_1 \vee \mathcal{P}_2 \vee \cdots \vee \mathcal{P}_K,$$
   where $\mathcal{U}$ is the coset partition of $U=\ker(f^{(1)}) \cap \ker(f^{(2)}) \cap \cdots \cap \ker(f^{(K)}).$

   Let $X \in \mathcal{P}_1 \vee \mathcal{P}_2 \vee \cdots \vee \mathcal{P}_K$ and $X\neq \emptyset$. Then 
   \begin{equation}\label{eq_X}
       X=\left( x_1 + \ker(f^{(1)}) \right) \cap \left( x_2 + \ker(f^{(2)}) \right) \cap \cdots \cap \left( x_K + \ker(f^{(K)}) \right),
   \end{equation}
   for some $x_i \in \mathbb{F}_q^k, i \in [K]$. Since $X\neq \emptyset$, we have $x \in X$. This means
   $$x \in x_i+ \ker(f^{(i)})\quad \forall i \in [K].$$
   Thus, $x_i +\ker(f^{(i)}) = x + \ker(f^{(i)})$ for all $i \in [K]$. Then from \eqref{eq_X}, we have $$X= \cap_{i=1}^{K} \left( x + \ker(f^{(i)}) \right) = x+ \cap_{i=1}^{K} \ker(f^{(i)}) = x+U \in \mathcal{U}.$$
   On the other hand, let $y + U \in \mathcal{U}$. Then 
   $$y + U = y + \cap_{i=1}^{K}\ker(f^{(i)}) = \cap_{i=1}^{K} \left( y + \ker(f^{(i)}) \right).$$
   Since each $y + \ker (f^{(i)})$ is a block in $\mathcal{P}_i$ for $i\in [K]$, we have
   $$y + U \in \mathcal{P}_1 \vee \mathcal{P}_2 \vee \cdots \vee \mathcal{P}_K.$$
   Therefore, $\mathcal{U} = \mathcal{P}_1 \vee \mathcal{P}_2 \vee \cdots \vee \mathcal{P}_K$. From Theorem \ref{thm_join}, we get that $(\mathcal{U}, t)$-encoding is also an $(f^{(i)}, t)$-FCC.
\end{proof}

The following example demonstrates a construction of single code for two linear functions.
\begin{example}\label{ex4}
    Consider the two functions $f_1, f_2: \mathbb{F}_2^3 \to \{ 0, 1\}$ defined as
    $$f_1(x_1, x_2, x_3)=x_1 \quad \text{and} \quad f_2(x_1, x_2, x_3)= x_2.$$
    Clearly, $\ker(f_1)=\{ 000, 010, 001, 011\}$ and $\ker(f_2)=\{ 000, 100, 001, 101\}$. We have $U= \ker(f_1) \cap \ker(f_2)= \{ 000,001 \}$, and its coset partition is $\mathcal{U} = \{U_1=\{000, 001\}, U_2=\{100, 101\}, U_3=\{010, 011\}, U_4=\{110,111\}\}$. For $t=1$, we have the following $(\mathcal{U}, t)$-encoding with $3$ redundancy.
   \[
\begin{array}{c|c|c|c}
\text{Block in }\mathcal{U} & \text{Message vector} & \text{Redundancy} & \text{Codeword} \\
\hline
U_1 & 000 & 000 & 000000 \\
U_1 & 001 & 000 & 001000 \\
U_2 & 100 & 110 & 100110 \\
U_2 & 101 & 110 & 101110 \\
U_3 & 010 & 101 & 010101 \\
U_3 & 011 & 101 & 011101 \\
U_4 & 110 & 011 & 110011 \\
U_4 & 111 & 011 & 111011
\end{array}
\]
This systematic code works as both $(f_1, 1)$-FCC and $(f_2, 1)$-FCC. If we construct an FCC only for $f_1$, then we need only $2$ redundancy symbols: assign $00$ to all vectors in $\ker(f_1)$ and $11$ to all vectors in $100+\ker(f_1)$. This $(f_1, 1)$-FCC is optimal, i.e., $r_{f_1}(3,1)=2$. Similarly, we have $r_{f_2}(3,1)=2$. Therefore, the partition redundancy gain of $(\mathcal{U}, t)$-encoding given above is $\frac{1}{2}(2+2-3)=\frac{1}{2}$, and the relative rate increment is $\frac{1}{6}$.
\end{example}



\subsection{FCPC construction for locally bounded partitions}

The main advantage of FCPCs is that the encoding depends only on the
domain partition and not on a specific function. Consequently, a single encoding can
be used for multiple functions that induce the same partition on the domain. From
this viewpoint, existing constructions of FCCs in the
literature can also be interpreted as constructions for the partitions induced by the
corresponding functions. When formulated in terms of partitions, such constructions
have the additional flexibility that the same encoding works simultaneously for
multiple functions attaining that partition, or for functions whose domain partitions
have that partition as their join.

In this subsection, we illustrate this perspective through a known FCC construction
for locally binary functions presented in \cite{LBWY2023}. We show that the same idea
can be reformulated as an FCPC construction by replacing function values
with a fixed ordering of partition blocks. Thus, the construction presented in this
subsection can be viewed as a reformulation of the FCC construction in
\cite{LBWY2023} in terms of partitions rather than functions.

\begin{definition}[Locally $(\rho, \lambda)$-bounded partition]
Let $\mathcal{P}=\{P_1,P_2,\ldots,P_E\}$
be a partition of $\mathbb{F}_q^k$.  
 $\mathcal{P}$ is said to be a \emph{locally $(\rho,\lambda)$-bounded partition} if for every
$u\in\mathbb{F}_q^k$,
$$
\bigl|\{ j\in[E] : B(u,\rho)\cap P_j \neq \emptyset \}\bigr| \le \lambda,
$$
where $B(u,\rho)$ denotes the Hamming ball of radius $\rho$ centered at $u$.
\end{definition}

For $\lambda=2$, every Hamming ball of radius $\rho$ intersects at most two blocks of
the partition $\mathcal{P}$. 
We now specialize to $(2t,2)$-locally bounded partitions and present a $(\mathcal{P},t)$-encoding for it. 
\begin{construction}[$(\mathcal{P},t)$-encoding for locally $(2t,2)$-bounded partitions]
\label{cons:Pt-encoding}
Let $\mathcal{P}=\{P_1,P_2,\ldots,P_E\}$ be a $(2t,2)$-locally bounded partition of
$\mathbb{F}_q^k$. Fix an ordering of the blocks of $\mathcal{P}$.
Define the encoding map $\mathcal{C} : \mathbb{F}_q^k \rightarrow \mathbb{F}_q^{k+2t}
$ as follows. For $u\in P_i \subseteq \mathbb{F}_q^k$, set
$$
\mathcal{C}(u)=(u,u_p), \quad \text{where} \quad 
u_p=
\begin{cases}
(\underbrace{0,0,\ldots,0}_{2t \ \text{times}}), & \text{if } i=\min B_{\mathcal{P}}(u,2t),\\[2mm]
(\underbrace{1,1,\ldots,1}_{2t \ \text{times}}), & \text{otherwise},
\end{cases}
$$
where $B_{\mathcal{P}}(u,2t)=\{ j\in[E] : B(u,2t)\cap P_j \neq \emptyset \}.$
\end{construction}

\begin{theorem}\label{thm:Pt-encoding-correctness}
Let $\mathcal{P}$ be a $(2t,2)$-locally bounded partition of $\mathbb{F}_q^k$, and
let $\mathcal{C}$ be the encoding map defined above in Construction~\ref{cons:Pt-encoding}.
Then $\mathcal{C}$ is a $(\mathcal{P},t)$-encoding.
\end{theorem}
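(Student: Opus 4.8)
The goal is to show that for any two vectors $u \in P_i$ and $v \in P_j$ with $i \neq j$, the encoded words satisfy $d(\mathcal{C}(u),\mathcal{C}(v)) \ge 2t+1$. I would split into cases based on $d(u,v)$, the distance of the message parts. Since $\mathcal{C}(u) = (u,u_p)$ and $\mathcal{C}(v) = (v,v_p)$ with $u_p,v_p \in \mathbb{F}_q^{2t}$, we always have $d(\mathcal{C}(u),\mathcal{C}(v)) = d(u,v) + d(u_p,v_p)$. So when $d(u,v) \ge 2t+1$ there is nothing to prove. The interesting regime is $d(u,v) \le 2t$, and here the local-boundedness will do the work.

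The key claim is: if $d(u,v) \le 2t$ and $u,v$ lie in distinct blocks $P_i \neq P_j$, then $u_p \neq v_p$, and moreover $d(u_p,v_p) = 2t$ — i.e., exactly one of $u_p, v_p$ is the all-zeros vector and the other is the all-ones vector. To see this, note that since $d(u,v) \le 2t$, we have $v \in B(u,2t)$ and $u \in B(v,2t)$, so $P_j$ meets $B(u,2t)$ and $P_i$ meets $B(v,2t)$. Combined with $P_i \ni u$ meeting $B(u,2t)$ (trivially) and likewise $P_j \ni v$ meeting $B(v,2t)$, we get $\{i,j\} \subseteq B_{\mathcal{P}}(u,2t)$ and $\{i,j\} \subseteq B_{\mathcal{P}}(v,2t)$. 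By the $(2t,2)$-locally bounded hypothesis, $|B_{\mathcal{P}}(u,2t)| \le 2$ and $|B_{\mathcal{P}}(v,2t)| \le 2$, so in fact $B_{\mathcal{P}}(u,2t) = B_{\mathcal{P}}(v,2t) = \{i,j\}$. Hence $\min B_{\mathcal{P}}(u,2t) = \min B_{\mathcal{P}}(v,2t) = \min\{i,j\}$; call this value $m$. Exactly one of $i,j$ equals $m$. Say $i = m < j$ (the other case is symmetric). Then $u \in P_i$ with $i = \min B_{\mathcal{P}}(u,2t)$, so $u_p = (0,\dots,0)$; and $v \in P_j$ with $j \neq \min B_{\mathcal{P}}(v,2t) = m = i$, so $v_p = (1,\dots,1)$. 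Therefore $d(u_p,v_p) = 2t$ and $d(\mathcal{C}(u),\mathcal{C}(v)) = d(u,v) + 2t \ge 1 + 2t = 2t+1$, using $d(u,v) \ge 1$ since $u \neq v$ (they lie in different blocks).

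I would then just collect the two cases: if $d(u,v) \ge 2t+1$, done immediately; if $1 \le d(u,v) \le 2t$, the argument above gives $d(\mathcal{C}(u),\mathcal{C}(v)) \ge 2t+1$. In both cases the distance requirement holds, so $\mathcal{C}$ is a $(\mathcal{P},t)$-encoding. The one subtlety worth stating carefully — and the only place the hypothesis is genuinely used — is the step deducing $B_{\mathcal{P}}(u,2t) = B_{\mathcal{P}}(v,2t)$, which relies critically on both sets having size at most $2$ while already being forced to contain the two distinct indices $i,j$; this pins them down completely and makes the two "$\min$" values agree. There is no real computational obstacle here; the proof is short once this observation is isolated.
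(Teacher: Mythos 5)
Your proof is correct and follows essentially the same route as the paper's: the same two-case split on $d(u,v)$, and in the case $d(u,v)\le 2t$ the same key deduction that $B_{\mathcal{P}}(u,2t)=B_{\mathcal{P}}(v,2t)=\{i,j\}$ forces $d(u_p,v_p)=2t$. Your explicit remark that the two sets of intersected blocks coincide (so their minima agree) is a slightly more careful phrasing of the step the paper handles by a without-loss-of-generality assumption, but the argument is the same.
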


\begin{proof}
Fix $u\in P_i$ and $v\in P_j$ with $i\neq j$. We have following two cases.

\noindent
\textbf{Case 1: $d(u,v)\ge 2t+1$.}
Since $\mathcal{C}(u)=(u,u_p)$ and $\mathcal{C}(v)=(v,v_p)$, we have
$$
d\bigl(\mathcal{C}(u),\mathcal{C}(v)\bigr)=d(u,v)+d(u_p, v_p)\ge d(u,v)\ge 2t+1.
$$

\noindent
\textbf{Case 2: $d(u,v)\le 2t$.}
Then $v\in B(u,2t)$ and $u\in B(v,2t)$, so
$$
B(u,2t)\cap P_i\neq\emptyset,\qquad B(u,2t)\cap P_j\neq\emptyset.
$$
Thus $\{i,j\}\subseteq B_{\mathcal{P}}(u,2t)$. Since $\mathcal{P}$ is a locally $(2t,2)$-bounded partition, we have $|B_{\mathcal{P}}(u,2t)|\le 2$, and therefore
$
B_{\mathcal{P}}(u,2t)=\{i,j\}.
$
Similarly, we have
$B_{\mathcal{P}}(v,2t)=\{i,j\}$. Without loss of generality, assume $i<j$. Then, by the definition of $\mathcal{C}$, we have $u_p=(0,0,\ldots,0)\in \mathbb{F}_q^{2t}$ and $v_p=(1,1,\ldots,1)\in \mathbb{F}_q^{2t}$.
Therefore, $d(u_p,v_p)=2t$ and
$$
d\bigl(\mathcal{C}(u),\mathcal{C}(v)\bigr)
= d(u,v)+d(u_p,v_p)
\ge  2t+1,
$$
as $d(u,v)\ge 1$ since $u\neq v$.

Therefore, $\mathcal{C}$ is a $(\mathcal{P}, t)$-encoding.
\end{proof}

\subsection{FCPC construction for weight based partitions}

In this subsection, we consider the structured partitions of $\mathbb{F}_q^k$ which are based on the Hamming weights of the vectors. The \emph{weight partition} is denoted by $\mathcal{W}=\{W_0,W_1,\ldots,W_k\}$, where each block $W_i$ (called the weight block) consists of all the vectors of weight $i$. We next consider a family of coarser partitions called \emph{grouped weight partitions} obtained by merging weight blocks, and defined as follows.

\begin{definition}[Grouped weight partition]
Let $\mathcal{W}=\{W_0,\ldots,W_k\}$ be the weight partition of $\mathbb{F}_q^k$.  
A partition $\mathcal{G}=\{G_1,\ldots,G_m\}$ of $\mathbb{F}_q^k$ is called a
\emph{grouped weight partition} if each block $G_j$ is of the form
$$
G_j = \bigcup_{i \in S_j} W_i,
$$
where $S_1,\ldots,S_m \subseteq \{0,\ldots,k\}$ form a partition of $\{0,\ldots,k\}$.
\end{definition}

A grouped weight partition $\mathcal{G}=\{G_1,\ldots,G_m\}$ is called a
\emph{consecutive grouped weight partition} if each index set $S_j$ forms an integer interval, $S_j = \{a_j, a_j+1, \ldots, b_j\},$ for some $0 \le a_j \le b_j \le k$.
Equivalently, each $G_j$ is a union of consecutive weight blocks.
The weight partition is the finest grouped weight partition.


A construction of FCCs for the Hamming weight distribution functions over $\mathbb{F}_2^k$ using Gray codes \cite{S1997, S2004} was proposed in \cite[Subsection~IV.D]{GXZZ2025}. This construction improves the previously known upper bounds on the redundancy for these functions, as compared to the bounds reported in \cite{LBWY2023}. Moreover, \cite{GXZZ2025} establishes that their $(f,t)$-FCC construction is optimal for the Hamming weight distribution functions $\Delta_T$ whenever $T \ge t+1$.

In this subsection, we follow the same idea of employing Gray codes, but generalize the framework to construct FCPCs for grouped weight partitions. This generalization is in two respects. First, the class of grouped weight partitions strictly includes the domain partitions induced by Hamming weight distribution functions, and hence our results include the earlier constructions given in \cite{GXZZ2025} as special cases. Second, while the prior works \cite{LBWY2023,GXZZ2025} consider binary fields, our constructions are valid over general finite fields $\mathbb{F}_q$.

Furthermore, in Corollary \ref{col:HWDFO}, we extend the optimality condition from $T \ge t+1$ in the binary case to $T \ge \left \lceil \frac{2t+1}{q} \right \rceil$ for $\mathbb{F}_q^k$.

\begin{definition}[Gray codes]
A $q$-ary Gray code of dimension $k$ is an ordering
$(g_0,g_1,\ldots,g_{q^k-1})$ of all vectors in $\mathbb{F}_q^k$
such that consecutive vectors differ in exactly one coordinate,
i.e.,
$$
d(g_i,g_{i+1})=1,\qquad \text{for all } \ 0\le i\le q^k-2.
$$
\end{definition}
If the first and last vectors in the ordering also
differ in exactly one coordinate, the Gray code is said to be
\emph{cyclic}.
Gray codes and their cyclic variants have been extensively studied; see \cite{S1997,M2023} for surveys and general background.
General algorithmic constructions of cyclic $q$-ary Gray codes have been proposed in \cite{S2004}, with explicit examples including a cyclic Gray code on $\mathbb{F}_3^3$.

\begin{example}
A cyclic Gray code on $\mathbb{F}_3^3$
constructed using the generalized Gray code algorithm
in \cite{S2004} is given by the following ordering of the
$27$ distinct vectors:
\begin{align*}
&000,001,002,102,100,101,201,202,200,
210,211,212,012,010,011,111,112,110,120,121,122, 222,220,\\
& 221,021,022,020.
\end{align*}
Each pair of consecutive vectors differs in exactly one
coordinate. Moreover, the last vector $020$ differs from the
first vector $000$ in exactly one coordinate, and hence the
ordering forms a cyclic Gray code on $\mathbb{F}_3^3$.
\end{example}

\begin{construction}[Construction of FCPCs for Grouped Weight Partitions]\label{const2}
Let $\mathcal{G}$ denote the grouped weight partition of $\mathbb{F}_q^k$. 
Let $k'$ be the smallest positive integer such that $ q^{k'} \ge \min (k+1, 2t+1).$
Let $\mathcal{E}$ be a systematic $q$-ary ECC of dimension $k'$ and minimum distance $2t+1$, with minimum possible block length $
N = N_q(q^{k'},2t+1).$
Denote by 
$$\mathcal{E} = \{(v_i,p_i)\}_{i=0}^{q^{k'}-1}$$ the set of codewords of $\mathcal{E}$ in the systematic form, i.e., $v_i \in \mathbb{F}_q^{k'}$ denotes the message part and $p_i \in \mathbb{F}_q^{N-k'}$ denotes the parity part.
Arrange the elements of $\mathbb{F}_q^{k'}$ in cyclic Gray code order and index the codewords accordingly, so that
$$
c_i = (v_i,p_i), \qquad i = 0,1,\ldots,q^{k'}-1,
$$
with
$$
d(v_i,v_{i+1})=1 \quad \text{for } 0\le i\le q^{k'}-2,
\qquad \text{and} \qquad
d(v_{q^{k'}-1},v_0)=1.
$$
Now define the encoding map $\mathfrak{E} : \mathbb{F}_q^k \to \mathbb{F}_q^{k+(N-k')}$
as follows.
    For any $u \in \mathbb{F}_q^k$, define
    $$
    \mathfrak{E}(u) = (u,p_{\mathrm{wt}(u)\bmod (2t+1)}).
    $$
\end{construction}

\begin{proof}[Proof of correctness]
Let $u,v \in \mathbb{F}_q^k$ be such that $u$ and $v$ belong to different blocks in $\mathcal{G}$, which means $\mathrm{wt}(u) \neq \mathrm{wt}(v)$. Then
\begin{align*}
d(\mathfrak{E}(u),\mathfrak{E}(v)) &= d\big((u,p_{\mathrm{wt}(u)\bmod (2t+1)}),(v,p_{\mathrm{wt}(v)\bmod (2t+1)})\big).\\
&= d(u,v) + d(p_{\mathrm{wt}(u)\bmod (2t+1)},p_{\mathrm{wt}(v)\bmod (2t+1)}).
\end{align*}
Since $\mathcal{E}$ has minimum distance $2t+1$, assuming $\mathrm{wt}(u)< \mathrm{wt}(v)$, we have
$$d(c_{\mathrm{wt}(u)},c_{\mathrm{wt}(v)}) 
= d(v_{\mathrm{wt}(u)},v_{\mathrm{wt}(v)}) 
+ d(p_{\mathrm{wt}(u)},p_{\mathrm{wt}(v)}) 
\ge 2t+1.$$
Hence,
\begin{align*}
    d(p_{\mathrm{wt}(u)\bmod (2t+1)},p_{\mathrm{wt}(v)\bmod (2t+1)}) &\geq 2t+1-  d(v_{\mathrm{wt}(u)\bmod (2t+1)},v_{\mathrm{wt}(v)\bmod (2t+1)})\\
    & \geq 2t+1-\sum_{s=\mathrm{wt}(u)}^{\mathrm{wt}(v)-1}  d(v_{s\bmod (2t+1)},v_{s+1\bmod (2t+1)})\\
    & =2t+1 - (\mathrm{wt}(v)-\mathrm{wt}(u)).
\end{align*}
Also, we have $d(u, v) \geq \mathrm{wt}(v) - \mathrm{wt}(u).$ 
Therefore, we obtain
$$
d(\mathfrak{E}(u),\mathfrak{E}(v)) 
\ge (\mathrm{wt}(v)-\mathrm{wt}(u)) + \big(2t+1-(\mathrm{wt}(v)-\mathrm{wt}(u))\big)
= 2t+1.
$$

Therefore, encoding $\mathfrak{E}$ guarantees a minimum distance of at least $2t+1$ between codewords corresponding to vectors from different blocks of $\mathcal{G}$. Hence, the proposed construction yields a valid $t$-error-correcting FCPC for any grouped weight partition.
\end{proof}

Further, following the general framework for the Hamming weight distribution function in \cite[Subsection IV.D]{GXZZ2025}, we construct FCPCs for consecutive grouped weight partitions over $\mathbb{F}_q$.

\begin{construction}[Construction of FCPCs for consecutive grouped weight partitions]
   \label{const3}
 Let 
$\mathcal G=\{G_1,\ldots,G_m\}$ be a consecutive grouped weight partition of $\mathbb{F}_q^k$ with $
G_j = \bigcup_{i \in S_j} W_i,
$ and
$S_j=\{a_j,\ldots,b_j\}$, and $|S_j|\ge T$ for all $j\in\{2,\ldots,m-1\}$.
Define a collection of vectors
$p_0,p_1,\ldots,p_{T-1}\in\mathbb{F}_q^{T-1}$
as follows. For $0\le i\le T-1$, 
$$
p_i = (\underbrace{1,1,\ldots,1}_{i\ \text{times}},\underbrace{0,0,\ldots,0}_{(T-1-i)\ \text{times}}).
$$
Next, consider vectors $q_0,q_1,\ldots,q_{m-1}\in\mathbb{F}_q^{s}$ with condition
$$
d(q_i,q_j)\ge 2t+1 - T|i-j| \qquad \text{for all } i\neq j.
$$
For any vector $u\in\mathbb{F}_q^k$, let $i_{\mathcal{G}}(u)$ denote the index of the group of
$\mathcal{G}$ to which $u$ belongs, i.e., $u\in G_{i_{\mathcal{G}}(u)}.$
We now define the encoding map
$\mathfrak{E}:\mathbb{F}_q^k \rightarrow \mathbb{F}_q^{k+(T-1)+s}$
by
$$
\mathfrak{E}(u)=\big(u, p(u), q_{i_{\mathcal{G}}(u)}\big), \qquad u\in\mathbb{F}_q^k,
$$
where $p(u)=p_{\pi(u)}$, and 
for $u\in G_i$ with $S_i=\{a_i,\ldots,b_i\}$, $\pi(u)$ is defined as
$$
\pi(u)=
\begin{cases}
(\mathrm{wt}(u)-a_m)\bmod T, 
& i=m,\\[0.6em]
(\mathrm{wt}(u)-a_i)\bmod T, 
& 1\le i\le m-1,\ 
\mathrm{wt}(u)\le a_i+T\left\lfloor \dfrac{|S_i|}{T}\right\rfloor-1,\\[0.8em]
T-1-(b_i-\mathrm{wt}(u)), 
& 1\le i\le m-1,\ \text{otherwise}.
\end{cases}
$$

\end{construction}
\begin{proof}[Proof of correctness:]
Fix a block $G_i$ with weight interval $S_i=\{a_i,\ldots,b_i\}$ and define
$$
L_i = T\left\lfloor \frac{|S_i|}{T}\right\rfloor,
\qquad
R_i = |S_i|-L_i \in \{0,1,\ldots,T-1\}.
$$
For blocks $G_i$ with $1\le i\le m-1$, parity vectors are assigned in two stages.
For the first $L_i$ weights in $S_i$, taken in increasing order,
parity vectors $p_0,p_1,\ldots,p_{T-1}$ are assigned repeatedly in cyclic order,
that is, the weight $a_i+t$ is assigned $p_{t\bmod T}$ for $0\le t\le L_i-1$.
If $R_i>0$, the remaining $R_i$ weights are assigned 
backward, by weight, using $p_{T-1}, p_{T-2},\ldots, p_{T-R_i}$, in particular, all the vectors with the largest
weight $b_i$ are assigned $p_{T-1}$, and the vectors with  weight $b_i-1$ are assigned $p_{T-2}$, and so on.

For the last block $G_m$, parity vectors are assigned cyclically in increasing order
$p_0,p_1,\ldots,p_{T-1}$ throughout the entire block.

Consider $u\in G_{\alpha}$ and $v\in G_{\beta}$ with $1\leq \alpha < \beta\leq m$. We first prove that 
$$d(u,v)+d(p(u), p(v)) \geq T(\beta-\alpha).$$

For all $r,s\in\{0,\ldots,T-1\}$, we have $d(p_r,p_s)=|r-s|.$
Also, for all $x,y\in\mathbb{F}_q^k$, we have $d(x,y)\ge |\mathrm{wt}(x)-\mathrm{wt}(y)|.$
Therefore,
$$
d(u,v)+d\big(p(u),p(v)\big)
\ge |\mathrm{wt}(v)-\mathrm{wt}(u)|+|\pi(v)-\pi(u)|
\ge \Big|\,(\mathrm{wt}(v)-\mathrm{wt}(u))-(\pi(v)-\pi(u))\,\Big|,
$$
where the last inequality is $|A|+|B|\ge |A-B|$.
Define a function $\phi$ as
$$
\phi(x)=\mathrm{wt}(x)-\pi(x),\qquad x\in\mathbb{F}_q^k.
$$
Then the above inequality becomes
\begin{equation}\label{eq:reducephi}
d(u,v)+d\big(p(u),p(v)\big)\ \ge\ |\phi(v)-\phi(u)|.
\end{equation}

Next, we show that $\phi$ increases by at least $T$ when moving from one block to the
next. Fix $i\in\{1,\ldots,m-1\}$. 
For every vector $x$ with weight $\mathrm{wt}(x)\in\{a_i,\ldots,a_i+L_i-1\}$, we have
$$
\pi(x) = (\mathrm{wt}(x)-a_i)\bmod T
      = \mathrm{wt}(x)-a_i - T \left\lfloor \frac{\mathrm{wt}(x)-a_i}{T} \right\rfloor,
$$
and hence
$$
\phi(x)=\mathrm{wt}(x)-\pi(x)
      = a_i+T \left\lfloor \frac{\mathrm{wt}(x)-a_i}{T} \right\rfloor
      \le a_i + T\left\lfloor\frac{L_i-1}{T}\right\rfloor.
$$
Since $L_i$ is a multiple of $T$, we have $\left\lfloor\frac{L_i-1}{T}\right\rfloor = \frac{L_i}{T}-1$ and $\phi(x) \leq a_i+T\left(\frac{L_i}{T}-1 \right)=a_i+L_i-T$.
For the vectors with remaining weights $\mathrm{wt}(x)\in\{a_i+L_i,\ldots,b_i\}$ (the last incomplete round),
we have $\pi(x)=T-1-(b_i-\mathrm{wt}(x))$, and thus
$$
\phi(x) = \mathrm{wt}(x)-\bigl(T-1-(b_i-\mathrm{wt}(x))\bigr)=b_i-(T-1).
$$
Since $b_i=a_i+|S_i|-1$ and $R_i=|S_i|-L_i\in\{0,\ldots,T-1\}$, we have
\[
b_i-(T-1)=a_i+L_i-T+R_i \ \ge\ a_i+L_i-T.
\]
Therefore, for all $x\in G_i$ we have $\phi(x)\le b_i-(T-1)$, and hence
\begin{equation}\label{eq:maxphiGi}
\max_{x\in G_i}\phi(x) \leq b_i-(T-1).
\end{equation}

Now consider the next block $G_{i+1}$, whose smallest weight is $a_{i+1}=b_i+1$.
For $y\in G_{i+1}$, we have $\phi(y)=\mathrm{wt}(y)-\pi(y)$.
If $\mathrm{wt}(y)\le a_{i+1}+L_{i+1}-1$, then $\pi(y)=(\mathrm{wt}(y)-a_{i+1})\bmod T$ and hence
$$
\phi(y)= a_{i+1}+T\left\lfloor\frac{\mathrm{wt}(y)-a_{i+1}}{T}\right\rfloor \ge a_{i+1}.
$$
Otherwise, $\pi(y)=T-1-(b_{i+1}-\mathrm{wt}(y))$ (which can occur when $i+1<m$), and $\phi(y)=b_{i+1}-(T-1)\ge a_{i+1}$,
where the last inequality holds because $|S_{i+1}|\ge T$.
Therefore, $\phi(y)\ge a_{i+1}$ for all $y\in G_{i+1}$, and 
\begin{equation}\label{eq:minphiGi1}
\min_{y\in G_{i+1}}\phi(y)\;\ge\;a_{i+1}=b_i+1.
\end{equation}
Combining \eqref{eq:maxphiGi} and \eqref{eq:minphiGi1}, we get for all
$x\in G_i$ and $y\in G_{i+1}$,
$$
\phi(y)-\phi(x) \ge (b_i+1)-\bigl(b_i-(T-1)\bigr) = T.
$$
Applying this across the consecutive blocks
$G_\alpha,G_{\alpha+1},\ldots,G_\beta$ yields
\begin{equation}\label{eq:phigap}
\phi(v)-\phi(u)\ \ge\ T(\beta-\alpha).
\end{equation}
By \eqref{eq:reducephi} and \eqref{eq:phigap}, we have
$$
d(u,v)+d\big(p(u),p(v)\big) \ge |\phi(v)-\phi(u)|
 \ge \phi(v)-\phi(u) \ge T(\beta-\alpha).$$
Finally, we can conclude that 
\begin{align*}
    d(\mathfrak{E}(u), \mathfrak{E}(v)) &=d(u,v)+d(p(u),p(v))+d(q_{\alpha}, q_{\beta})\\
    &\geq T(\beta-\alpha)+ 2t+1 - T |\alpha-\beta|=2t+1.
\end{align*}
Hence, the encoding $\mathfrak{E}$ is an FCPC for $\mathcal{G}$.
\end{proof}

\begin{corollary}\label{col:HWDFO}
    For the case where $T\geq \left\lceil\frac{2t+1}{q}\right \rceil$, Construction \ref{const3} gives optimal FCPCs with the following choice of vectors $q_0, q_1,\ldots, q_{m-1}$, where
    $$q_i=(\underbrace{a,a,\ldots,a}_{(2t+1-T) \ \text{times}}), \quad \text{with} \ a = i \bmod q.$$
\end{corollary}
\begin{proof}
For the given vectors $q_0,q_1,\ldots,q_{m-1}$ and $i\ne j$ with $|i-j|\le q-1$,
we have $i\not\equiv j\pmod q$, and hence
$$
d(q_i,q_j)=2t+1-T \ge 2t+1-T|i-j|.
$$
If $|i-j|\ge q$, then
$$
2t+1-T|i-j| \le 2t+1-Tq \le 0,
$$
where the last inequality holds because $T\ge \left\lceil\frac{2t+1}{q}\right\rceil$.
Therefore, the vectors $q_0,q_1,\ldots,q_{m-1}$ satisfy the distance requirement
of Construction~\ref{const3}.

Consequently, we obtain an FCPC with redundancy
$$
r=(T-1)+(2t+1-T)=2t,
$$
which achieves the lower bound on redundancy, and hence the constructed code is optimal.
\end{proof}

The following example demonstrates an FCPC constructed from Construction \ref{const3} that is optimal for the given parameters.

\begin{example}\label{ex:Const3}
Consider a consecutive grouped weight partition $\mathcal{G}$ of
$\mathbb{F}_3^{12}$ defined as
$$
\mathcal{G}
=
\big\{G_1=
W_0,\;
G_2=W_1\cup W_2\cup W_3,\;
G_3=W_4\cup W_5\cup W_6\cup W_7,\;
G_4=W_8\cup W_9\cup W_{10},\;
G_5=W_{11}\cup W_{12}
\big\}.
$$
Thus, $m=5$, and the corresponding index sets are
$$
S_1=\{0\},\quad
S_2=\{1,2,3\},\quad
S_3=\{4,5,6,7\},\quad
S_4=\{8,9,10\},\quad
S_5=\{11,12\}.
$$
Observe that $|S_j|\ge 3$ for all $j\in\{2,3,4\}$, therefore $T=3$. If $t=3$, then $T \ge \left\lceil \frac{2t+1}{q} \right\rceil 
= \left\lceil \frac{7}{3} \right\rceil = 3,$
as required in Corollary \ref{col:HWDFO} for optimality.
The vectors $p_0,p_1,p_2\in\mathbb{F}_3^{2}$ are defined as
$$
p_0=(0,0),\qquad
p_1=(1,0),\qquad
p_2=(1,1).
$$
Since $T=3$ and $t=3$, we have $2t+1-T=4$.  
Following Corollary~\ref{col:HWDFO}, we define
$
q_i=(a,a,a,a),$ with $  a=i \bmod 3,$
for $i=0,1,\ldots,4$ .  
Hence,
$$
q_0=(0,0,0,0), \
q_1=(1,1,1,1), \
q_2=(2,2,2,2),\
q_3=(0,0,0,0), \
q_4=(1,1,1,1).
$$
It can be easily verified that for all $i\neq j$,
$
d(q_i,q_j)\ge 2t+1-T|i-j|,
$
and therefore the distance condition in Construction~\ref{const3} is satisfied.

The encoding $\mathfrak{E}:\mathbb{F}_3^{12}\to\mathbb{F}_3^{12+2+4}$
is given by
$$
\mathfrak{E}(u)=(u,p_{\pi(u)},q_{i_{\mathcal{G}}(u)}).
$$
The values of $\pi(u)$ across all groups and weight values in this example are listed in Table~\ref{tab:Const3}.

\begin{table}[t]
\centering
\caption{Computation of $\pi(u)$ and $p(u)$ for the consecutive grouped weight partition in
Example~\ref{ex:Const3} with $T=3$.}
\label{tab:Const3}
\renewcommand{\arraystretch}{1.25}
\begin{tabular}{|c|c|c|c|c|}
\hline
\textbf{Block} &
\textbf{Weight interval} &
$\boldsymbol{\mathrm{wt}(u)}$ &
$\boldsymbol{\pi(u)}$ &
$\boldsymbol{p(u)}$
\\ \hline
$G_1$ &
$a_1=0,\ b_1=0,\ |S_1|=1$ &
$0$ &
$2 = T-1-(b_1-\mathrm{wt}(u))$ &
$p_2$
\\ \hline
$G_2$ &
$a_2=1,\ b_2=3,\ |S_2|=3$ &
$1$ & $0=\mathrm{wt}(u)-a_2$ & $p_0$ \\
&
&
$2$ & $1=\mathrm{wt}(u)-a_2$ & $p_1$ \\
&
&
$3$ & $2=\mathrm{wt}(u)-a_2$ & $p_2$
\\ \hline
$G_3$ &
$a_3=4,\ b_3=7,\ |S_3|=4$ &
$4$ & $0=\mathrm{wt}(u)-a_3$ & $p_0$ \\
&
&
$5$ & $1=\mathrm{wt}(u)-a_3$ & $p_1$ \\
&
&
$6$ & $2=\mathrm{wt}(u)-a_3$ & $p_2$ \\
&
&
$7$ & $2=T-1-(b_3-\mathrm{wt}(u))$ & $p_2$
\\ \hline
$G_4$ &
$a_4=8,\ b_4=10,\ |S_4|=3$ &
$8$ & $0=\mathrm{wt}(u)-a_4$ & $p_0$ \\
&
&
$9$ & $1=\mathrm{wt}(u)-a_4$ & $p_1$ \\
&
&
$10$ & $2=\mathrm{wt}(u)-a_4$ & $p_2$
\\ \hline
$G_5$ &
$a_5=11,\ b_5=12,\ |S_5|=2$ &
$11$ & $0=\mathrm{wt}(u)-a_5$ & $p_0$ \\
&
&
$12$ & $1=\mathrm{wt}(u)-a_5$ & $p_1$
\\ \hline
\end{tabular}
\end{table}

As an example, let $u\in G_3$ with $\mathrm{wt}(u)=7$ and
$v\in G_4$ with $\mathrm{wt}(v)=9$.  
Then $p(u)=p_2$, $p(v)=p_1$, and
$$
\mathfrak{E}(u)=(u,p_2,q_3), \qquad
\mathfrak{E}(v)=(v,p_1,q_4).
$$
Hence,
$$
d\big(\mathfrak{E}(u),\mathfrak{E}(v)\big)
= d(u,v)+d(p_2,p_1)+d(q_3,q_4)
\ge 2 + 1 + 4 = 7 = 2t+1,
$$
which guarantees correct decoding under $t=3$ symbol errors.
The redundancy of this FCPC is
$r=(T-1)+(2t+1-T)=2t=6.$
Since this matches the lower bound, the code is optimal.
\end{example}

For the other cases where $T\le \left\lceil\frac{2t+1}{q}\right \rceil$, we can use Construction \ref{const2} to find the desired vectors $q_0, q_1,\ldots, q_{m-1}$.
First choose redundancy vectors $q'_0, q'_1, \ldots, q'_{{m-1}}$ using the Gray code method given in Construction \ref{const2} such that
$
d(q'_i, q'_j) \ge 2t+1 - |i-j|$ for all $i \neq j.$
Then we can construct a new family $q_0, q_1, \ldots, q_{m-1}$ by defining $q_i$ as $T$-times repetition of $q'_i$, i.e., $q_i = (q'_i, q'_i, \ldots, q'_i).$ Then these vectors will satisfy the desired condition in Construction \ref{const3}, and have length $T(N_q(q^{k'}, 2t+1)-k')$, where $k'$ is the smallest integer for which $q^{k'}\geq \min(k+1, 2t+1)$. Therefore, the total redundancy of this code is $r=(T-1)+T(N_q(q^{k'}, 2t+1)-k')= T(N_q(q^{k'}, 2t+1)-k'+1)-1.$


\section{Partition graph}\label{par_graph}

In this section, we associate a graph to any given partition (equivalently, to any function-induced domain partition). We then show that identifying a suitable clique in this graph is equivalent to finding a set of vectors whose optimal $\mathcal{D}$-code in the corresponding PDRM gives the optimal redundancy for an FCPC.

\begin{definition}[Partition graph]
For a partition $\mathcal{P}=\{P_1, P_2, \ldots, P_E\}$ of $\mathbb{F}_q^k$, we define $G_{\mathcal{P}}$ to be an $E$-partite graph with vertex set $V =\mathbb{F}_q^k$, such that two vertices $u \in P_i$ and $v \in P_j$ are connected if $i\ne j$ and $d(u,v)=\min_{u_1 \in P_i, u_2 \in P_j} d(u_1,u_2)$.
\end{definition}

We now give examples of partition graphs on $\mathbb{F}_2^3$.
\begin{example}\label{Ex:PG1}
Consider a partition of $\mathbb{F}_2^3$ given as 
$$\mathcal{P}_1=\{\{000, 100, 010, 001\}, \{110, 101, 011, 111\}\}.$$
Then the corresponding $G_{\mathcal{P}_1}$ is shown in Figure \ref{PG1}.
    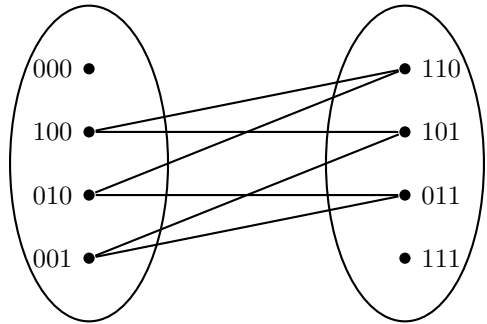
\begin{figure}[!h]
		\centering
    \begin{tikzpicture}[thick,scale=0.7]

\draw[thick] (-3,0) ellipse (1.5 and 3);
\draw[thick] (3,0)  ellipse (1.5 and 3);

\node[circle,fill=black,inner sep=1.5pt,label=left:{$000$}] (a1) at (-3,  1.8) {};
\node[circle,fill=black,inner sep=1.5pt,label=left:{$100$}] (a2) at (-3,  0.6) {};
\node[circle,fill=black,inner sep=1.5pt,label=left:{$010$}] (a3) at (-3, -0.6) {};
\node[circle,fill=black,inner sep=1.5pt,label=left:{$001$}] (a4) at (-3, -1.8) {};

\node[circle,fill=black,inner sep=1.5pt,label=right:{$110$}] (b1) at (3,  1.8) {};
\node[circle,fill=black,inner sep=1.5pt,label=right:{$101$}] (b2) at (3,  0.6) {};
\node[circle,fill=black,inner sep=1.5pt,label=right:{$011$}] (b3) at (3, -0.6) {};
\node[circle,fill=black,inner sep=1.5pt,label=right:{$111$}] (b4) at (3, -1.8) {};

\draw[-] (a2) -- (b1);
\draw[-] (a2) -- (b2);
\draw[-] (a3) -- (b1);
\draw[-] (a3) -- (b3);
\draw[-] (a4) -- (b2);
\draw[-] (a4) -- (b3);
\end{tikzpicture}
\caption{Partition graph for $\mathcal{P}_1$.}
		\label{PG1}
	\end{figure}
Consider another partition of $\mathbb{F}_2^3$ given as 
$$\mathcal{P}_2=\{\{000\}, \{100, 010, 001\}, \{110, 101, 011\}, \{111\}\}.$$
Then the corresponding $G_{\mathcal{P}_2}$ is shown in Figure \ref{PG2}.
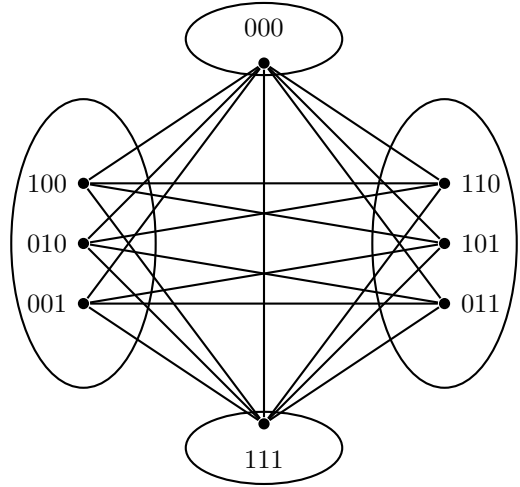
\begin{figure}[!h]
		\centering
\begin{tikzpicture}[thick,scale=0.8]


\node[circle,fill=black,inner sep=1.5pt] (a0) at (0,3) {};
\node[above=4pt of a0] {$000$};
\draw (0,3.4) ellipse (1.3 and 0.6);

\node[circle,fill=black,inner sep=1.5pt] (a4) at (0,-3) {};
\node[below=4pt of a4] {$111$};
\draw (0,-3.4) ellipse (1.3 and 0.6);

\node[circle,fill=black,inner sep=1.5pt,label=left:{$100$}] (l1) at (-3,1.0) {};
\node[circle,fill=black,inner sep=1.5pt,label=left:{$010$}] (l2) at (-3,0.0) {};
\node[circle,fill=black,inner sep=1.5pt,label=left:{$001$}] (l3) at (-3,-1.0) {};
\draw (-3,0) ellipse (1.2 and 2.4);

\node[circle,fill=black,inner sep=1.5pt,label=right:{$110$}] (r1) at (3,1.0) {};
\node[circle,fill=black,inner sep=1.5pt,label=right:{$101$}] (r2) at (3,0.0) {};
\node[circle,fill=black,inner sep=1.5pt,label=right:{$011$}] (r3) at (3,-1.0) {};
\draw (3,0) ellipse (1.2 and 2.4);


\foreach \x in {l1,l2,l3,r1,r2,r3,a4}{
    \draw (a0) -- (\x);
}

\foreach \x in {l1,l2,l3,r1,r2,r3}{
    \draw (a4) -- (\x);
}

\draw (l1) -- (r1);
\draw (l1) -- (r2);
\draw (l2) -- (r1);
\draw (l2) -- (r3);
\draw (l3) -- (r2);
\draw (l3) -- (r3);

\end{tikzpicture}
\caption{Partition graph for $\mathcal{P}_2$.}
		\label{PG2}
	\end{figure} 
\end{example}

For a function $f:\mathbb{F}_q^k \rightarrow S$, the partition graph $G_{\mathcal{P}_f}$ is called its {\it function domain graph}, where $\mathcal{P}_f$ is the domain partition of $f$.

Corollary~\ref{col1} shows that the optimal redundancy of an FCC is completely determined by a suitable set of representatives. For a partition $\mathcal{P}={P_1,\ldots,P_E}$ of $\mathbb{F}_q^k$, by a set of representatives we mean a choice of one vector $u_i \in P_i$ from each block $P_i$, $i=1,\ldots,E$, which is sufficient to compute the optimal redundancy for a $(\mathcal{P},t)$-encoding. The next theorem shows that whenever the partition graph $G_{\mathcal{P}}$ contains a clique of size $E$, the vertices of this clique form such a set of representatives.

\begin{theorem}\label{thm:clique}
Let $\mathcal{P}=\{P_1,\ldots,P_E\}$ be a partition of $\mathbb{F}_q^k$, and let 
$G_{\mathcal{P}}$ be its partition graph.  
If $G_{\mathcal{P}}$ contains a clique $\{u_1,u_2,\ldots,u_E\}$ of size $E$, then this clique
forms a set of representatives for the blocks of $\mathcal{P}$, and
\[
    r_{\mathcal{P}}(k,t)
    = N\!\big(\mathcal{D}_{\mathcal{P}}(t,u_1,u_2,\ldots,u_E)\big).
\]
\end{theorem}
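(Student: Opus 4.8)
The plan is to reduce the statement to Corollary~\ref{col1_p} by showing that the vertices of the clique are a valid system of representatives whose PDRM coincides with the PDM of $\mathcal{P}$. First I would recall that $G_{\mathcal{P}}$ is, by construction, an $E$-partite graph whose parts are exactly the blocks $P_1,\ldots,P_E$, and that no two vertices inside the same block are adjacent. Hence a clique of size $E$ cannot contain two vertices from the same part, so it must pick up exactly one vertex from each $P_i$; relabelling if necessary, write the clique as $\{u_1,\ldots,u_E\}$ with $u_i\in P_i$. This already establishes that the clique forms a set of representatives for the blocks of $\mathcal{P}$.

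Next I would compare $\mathcal{D}_{\mathcal{P}}(t,u_1,\ldots,u_E)$ with the PDM $\mathcal{D}_{\mathcal{P}}(t;P_1,\ldots,P_E)$. Fix $i\neq j$. Because $u_i$ and $u_j$ both belong to the clique, they are adjacent in $G_{\mathcal{P}}$, and by the definition of the partition graph adjacency between a vertex of $P_i$ and a vertex of $P_j$ forces
\[
d(u_i,u_j)=\min_{u\in P_i,\ v\in P_j} d(u,v)=d(P_i,P_j).
\]
Consequently the off-diagonal entries match:
\[
[\mathcal{D}_{\mathcal{P}}(t,u_1,\ldots,u_E)]_{i,j}
=\max(2t+1-d(u_i,u_j),0)
=\max(2t+1-d(P_i,P_j),0)
=[\mathcal{D}_{\mathcal{P}}(t;P_1,\ldots,P_E)]_{i,j},
\]
while the diagonal entries of both matrices are $0$ by convention. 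Therefore the two matrices are equal.

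Having shown $\mathcal{D}_{\mathcal{P}}(t,u_1,\ldots,u_E)=\mathcal{D}_{\mathcal{P}}(t;P_1,\ldots,P_E)$ with $u_i\in P_i$ for all $i$, I would invoke Corollary~\ref{col1_p} directly to conclude $r_{\mathcal{P}}(k,t)=N\big(\mathcal{D}_{\mathcal{P}}(t,u_1,\ldots,u_E)\big)$. I do not expect any serious obstacle here: the argument is essentially the partition-language transcription of the commented-out function-version proof, and the only slightly delicate point is the observation that $E$-partiteness together with $|C|=E$ forces one representative per block — which is immediate once the $E$-partite structure of $G_{\mathcal{P}}$ is spelled out. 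The one thing worth stating carefully is that the definition of adjacency in $G_{\mathcal{P}}$ is precisely ``$d(u,v)$ equals the block distance,'' so that clique membership transfers the minimality needed to identify the PDRM with the PDM.
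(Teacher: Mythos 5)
Your proposal is correct and follows essentially the same route as the paper's proof: use the $E$-partite structure to force one clique vertex per block, use the adjacency definition to equate $d(u_i,u_j)$ with $d(P_i,P_j)$ so that the PDRM coincides with the PDM, and then invoke Corollary~\ref{col1_p}. No gaps.
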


\begin{proof}
Let $\mathcal{C}=\{u_1,\ldots,u_E\}$ be a clique of $G_{\mathcal{P}}$ of size $E$.
To apply Corollary~\ref{col1_p}, we must show:
(i) each block $P_i$ contributes exactly one representative $u_i$, and  
(ii) $\mathcal{D}_{\mathcal{P}}(t,u_1,\ldots,u_E)=\mathcal{D}_{\mathcal{P}}(t;P_1,\ldots,P_E)$.

\textbf{(i)}
Since $G_{\mathcal{P}}$ is an $E$-partite graph with parts $P_1,P_2,\ldots,P_E$, and 
$\mathcal{C}$ is a clique of size $E$, it must contain exactly one vertex from each part.
Thus $\{u_1,\ldots,u_E\}$ forms a complete set of block representatives.

\textbf{(ii)}
Pick $i,j\in[E]$, $i\neq j$.  
Since $u_i$ and $u_j$ lie in a clique, they are adjacent in $G_{\mathcal{P}}$, and by the
definition of the partition graph,
$d(u_i,u_j)
= \min_{u\in P_i, v\in P_j} d(u,v)
= d(P_i,P_j).$
Therefore,
$$
[\mathcal{D}_{\mathcal{P}}(t,u_1,\ldots,u_E)]_{i,j}
= \max(2t+1-d(u_i,u_j),0)
= \max(2t+1-d(P_i,P_j),0)
= [\mathcal{D}_{\mathcal{P}}(t;P_1,\ldots,P_E)]_{i,j}.
$$
All diagonal entries are $0$ in both matrices, so the matrices coincide.  
As both conditions of Corollary~\ref{col1_p} are verified, we obtain
$$
r_{\mathcal{P}}(k,t)
= N\left(\mathcal{D}_{\mathcal{P}}(t;P_1,\ldots,P_E)\right)
= N\left(\mathcal{D}_{\mathcal{P}}(t,u_1,\ldots,u_E)\right),
$$
which completes the proof.
\end{proof}

\begin{example}
Consider the Hamming weight function $\Delta(u)=\text{wt}(u)$ for all $\mathbb{F}_2^3$, then its domain partition is 
$$\mathcal{P}_{\Delta}=\{\{000\}, \{100, 010, 001\}, \{110, 101, 011\}, \{111\}\},$$
which is the same as $\mathcal{P}_2$ given in Example \ref{Ex:PG1}, and the corresponding $G_{\mathcal{P}_{\Delta}}$ is shown in Figure \ref{PG2}. One of its clique is shown in Fig. \ref{PG2.c}, hence we have a set of representative $\{u_1=000, u_2=100, u_3=110, u_4=111\}$ and $r_{\Delta}(k, t) = N_q(\mathcal{D}_f(t, u_1, u_2, u_3, u_4)).$
\begin{figure}[!htbp]
		\centering
    \begin{tikzpicture}[thick,scale=0.5]


\node[circle,fill=black,inner sep=1.5pt] (a0) at (0,3) {};
\node[above=4pt of a0] {$000$};

\node[circle,fill=black,inner sep=1.5pt] (a4) at (0,-3) {};
\node[below=4pt of a4] {$111$};

\node[circle,fill=black,inner sep=1.5pt,label=left:{$100$}] (l2) at (-3,0.0) {};

\node[circle,fill=black,inner sep=1.5pt,label=right:{$110$}] (r2) at (3,0.0) {};


\foreach \x in {l2,r2,a4}{
    \draw (a0) -- (\x);
}

\foreach \x in {l2,r2}{
    \draw (a4) -- (\x);
}

\draw (l2) -- (r2);
\end{tikzpicture}
		\caption{A clique of graph $G_{\mathcal{P}_{\Delta}}$.}
		\label{PG2.c}
	\end{figure}
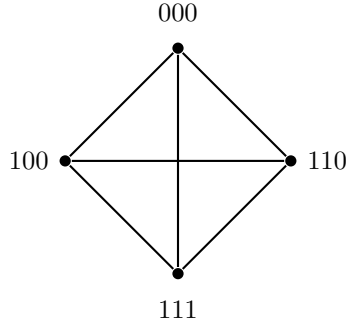 
\end{example}

By using Theorem~\ref{thm:clique}, the problem of finding a $\mathcal{D}$-code for a given partition becomes much simpler. Instead of constructing a solution for all $q^{k}$ vectors in $\mathbb{F}_q^{k}$, it is enough to find a $\mathcal{D}$-code for the vectors that form a full-size clique of size $E$, if such a clique exists. Therefore, establishing the existence of a full-size clique in a partition graph becomes an important problem. This reduction is particularly useful when $E$ is significantly smaller than $q^{k}$.

\subsection{Partition graph for coset partitions}
In this subsection, we study the coset partitions of subspaces of $\mathbb{F}_q^k$, and we provide a sufficient condition under which such partitions admit a full-length clique in the corresponding partition graph. Since the domain partitions of linear functions are exactly the coset partitions of their kernels, the result obtained here naturally applies to linear functions as well. In fact, the sufficient condition given in the lemma below coincides with the condition stated in \cite[Theorem~10]{PR2025} for linear codes. However there also exist non-linear functions whose domain partitions are coset partitions of some subspace of $\mathbb{F}_q^k$, the result holds in a more general setting beyond the linear case. 

We begin with an example demonstrating that a coset partition can arise as the domain partition of a non-linear function, and then proceed to state our sufficient condition.

\begin{example}
Consider the space $\mathbb{F}_2^3$ and its subspace $U = \{000, 110\},$ which has dimension $\dim U = 1 = k-\ell$ with $\ell = 2$.
The cosets of $U$ in $\mathbb{F}_2^3$ are:
\begin{align*}
C_1 &= U = \{000,\,110\}, \qquad \quad
C_2 = e_1 + U = \{100,\,010\},\\
C_3 &= e_3 + U = \{001,\,111\}, \quad
C_4 = e_1+e_3 + U = \{101,\,011\}.
\end{align*}

Now define a function $f:\mathbb{F}_2^3 \to \mathbb{F}_2^2$ by assigning one value to each coset:
$$
f(000)=f(110)=10,\quad 
f(100)=f(010)=01,\quad 
f(001)=f(111)=00,\quad 
f(101)=f(011)=11.
$$
Thus, the domain partition $\mathcal{P}_f$ of $f$ is exactly the coset partition $\{C_1,C_2,C_3,C_4\}$ of $U$.
However, $f$ is not linear. Indeed,
$$
f(110 + 001) = f(111) = 00, \quad \text{but} \quad
f(110) + f(001) = 10 + 00 = 10.
$$
Hence, $f(110+001) \neq f(110) + f(001),$
showing that $f$ is non-linear even though its domain partition is a coset partition of the subspace $U$.
\end{example}

We now present a sufficient condition under which a coset partition admits a full-length clique in its partition graph.

\begin{lemma}\label{lem:coset1}
Let $U \le \mathbb{F}_q^k$ be a subspace with $\dim U = k-\ell$, and coset partition $\mathcal{P} = \{x+U : x \in \mathbb{F}_q^k\}$. If
$$\left|\left\{e_i+U : i \in [k], e_i\notin U \right\}\right| = \ell,$$
where $e_i$ is the $i$-th standard basis vector, then the partition graph $G_{\mathcal{P}}$ has a clique of size $q^\ell$.
\end{lemma}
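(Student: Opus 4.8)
The plan is to exhibit an explicit clique of the largest possible size. Since $\dim U=k-\ell$, the coset partition $\mathcal{P}$ has exactly $q^{\ell}$ blocks, so a clique of size $q^{\ell}$ is a ``full-size'' clique containing exactly one representative from each coset. Because for cosets the block distance is immediately seen to be $d(x+U,\,y+U)=\min_{w\in U}\mathrm{wt}(x-y+w)$, it suffices to pick representatives $u_a$, one per coset, so that $d(u_a,u_b)=\min_{w\in U}\mathrm{wt}(u_a-u_b+w)$ for every pair $a\ne b$; then every pair of the chosen vertices realizes the corresponding block distance and is therefore an edge of $G_{\mathcal{P}}$.

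First I would extract the structure forced by the hypothesis. Pass to the quotient $W=\mathbb{F}_q^k/U$, which has dimension $\ell$. The images $\bar e_1,\dots,\bar e_k$ of the standard basis span $W$, so the distinct nonzero cosets among them --- of which there are exactly $\ell$ by assumption --- form a basis of $W$, since a spanning set whose size equals the dimension is a basis. Permuting coordinates is a Hamming isometry carrying coset partitions to coset partitions, cliques to cliques, and preserving the hypothesis, so I may assume these basis cosets are $\bar e_1,\dots,\bar e_\ell$. Then for each $j\in[k]$ we have either $e_j\in U$ (possible only for $j>\ell$) or $\bar e_j=\bar e_{m(j)}$ for a unique $m(j)\in[\ell]$, with $m(j)=j$ when $j\le\ell$. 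Setting $I_i=\{\,j\in[k]:m(j)=i\,\}$ for $i\in[\ell]$ and $I_0=\{\,j\in[k]:e_j\in U\,\}$ partitions $[k]$ into pairwise disjoint blocks with $i\in I_i$. A short verification --- the natural generators $e_j$ for $j\in I_0$ and $e_j-e_{m(j)}$ for $j>\ell,\ j\notin I_0$ all satisfy $\sum_{j\in I_i}x_j=0$, these $\ell$ equations are independent because the blocks $I_i$ are disjoint and each contains $i$, and the solution space has dimension $k-\ell$ --- shows $U=\{\,x\in\mathbb{F}_q^k:\sum_{j\in I_i}x_j=0\ \text{for all }i\in[\ell]\,\}$.

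Now take the representatives $u_a=(a_1,\dots,a_\ell,0,\dots,0)$ for $a\in\mathbb{F}_q^{\ell}$, and write $C_a=u_a+U$. Since $\bar e_1,\dots,\bar e_\ell$ is a basis of $W$, the map $a\mapsto C_a$ is a bijection onto the set of cosets, so the $u_a$ lie in distinct blocks and there are $q^{\ell}$ of them. For $a\ne b$ put $c=a-b\ne 0$; using the description of $U$, the coset $u_c+U$ equals $\{\,x:\sum_{j\in I_i}x_j=c_i\ \text{for all }i\in[\ell]\,\}$, since $I_i\cap[\ell]=\{i\}$ gives $\sum_{j\in I_i}(u_c)_j=c_i$. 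Minimizing $\mathrm{wt}(x)$ over this coset is now transparent: the coordinates indexed by $I_0$ are unconstrained and may be set to zero, and inside each block $I_i$ with $i\in[\ell]$ one needs a nonzero coordinate precisely when $c_i\ne 0$, which costs exactly one. Hence the minimum weight in the coset is $|\{\,i\in[\ell]:c_i\ne 0\,\}|=\mathrm{wt}(c)=\mathrm{wt}(u_c)$, attained at $u_c$ itself. Consequently $d(u_a,u_b)=\mathrm{wt}(u_{a-b})=\min_{w\in U}\mathrm{wt}(u_{a-b}+w)=d(C_a,C_b)$ for all $a\ne b$, so $\{\,u_a:a\in\mathbb{F}_q^{\ell}\,\}$ is a clique of size $q^{\ell}$ in $G_{\mathcal{P}}$; in particular, by Theorem~\ref{thm:clique}, it is a set of block representatives computing $r_{\mathcal{P}}(k,t)$.

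The dimension count and the verification that $U$ coincides with the stated solution set are routine. The one genuinely load-bearing step is the minimum-weight computation, and this is exactly where the hypothesis is used: it forces each $\bar e_j$ to be either $0$ or a single basis vector, equivalently it forces the coordinate blocks $I_i$ to be pairwise disjoint. Were they not disjoint, a coordinate lying in several blocks could be exploited to satisfy several constraints simultaneously, possibly producing a coset representative lighter than $u_c$, and the argument would collapse; so the hypothesis is precisely what makes the obvious coordinate representatives distance-optimal.
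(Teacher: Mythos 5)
Your proof is correct and follows essentially the same route as the paper's: the paper's Claim~1 is your quotient-space observation that the $\ell$ distinct nonzero cosets $\bar e_{t_1},\dots,\bar e_{t_\ell}$ form a basis of $\mathbb{F}_q^k/U$, the paper's index sets $S_r$ and functionals $\phi_r(x)=\sum_{i\in S_r}x_i$ are your blocks $I_i$ and block-sum equations, and both arguments conclude by showing that the disjointness of these blocks forces the minimum weight in the coset $u_{a-b}+U$ to equal $\mathrm{wt}(a-b)$, so the coordinate representatives realize all block distances. The only cosmetic differences are your WLOG coordinate permutation and your (slightly stronger than needed) identification of $U$ with the full solution set of the block-sum equations.
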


\begin{proof}
By assumption, there exist $t_1,\dots,t_\ell \in [k]$ such that the cosets
$
U+e_{t_1},U+e_{t_2},\dots,U+e_{t_\ell}
$
are distinct and different from $U$, and every coset of $U$ that contains some standard basis vector $e_i$ is one of these $\ell$ cosets. We prove this lemma using the following two claims.

\noindent\textbf{Claim 1:} If $
a_1 e_{t_1} + a_2 e_{t_2} + \cdots + a_\ell e_{t_\ell} \in U,$
then $a_i = 0$ for all $i \in [\ell]$.

Define the subspace $W = \langle e_{t_1}, e_{t_2}, \ldots, e_{t_\ell} \rangle \subseteq \mathbb{F}_q^k .$
We will show that $W$ is a complementary subspace of $U$, i.e., $$
U + W = \mathbb{F}_q^k
\quad \text{and} \quad
\dim(U) + \dim(W) = k .$$
First, consider any standard basis vector $e_i$ for some $i \in [k]$.
If $e_i \in U$, then clearly $e_i \in U + W$.
If $e_i \notin U$, then by assumption there exists some $j \in [\ell]$ such that $e_i + U = e_{t_j} + U,$
which implies $e_i - e_{t_j} \in U$.
Hence, 
$$e_i = (e_i - e_{t_j}) + e_{t_j} \in U + W .$$
Therefore, $e_i \in U + W$ for all $i \in [k]$, and since
$\{e_1, \ldots, e_k\}$ spans $\mathbb{F}_q^k$, we obtain $U + W = \mathbb{F}_q^k.$
Next, since $\dim(U) = k - \ell$ and $\dim(W) = \ell$, we have $
\dim(U) + \dim(W) = k. $
Thus, $W$ is a complementary subspace of $U$, and consequently $U \cap W = \{0\}.$
Now let
$u = a_1 e_{t_1} + a_2 e_{t_2} + \cdots + a_\ell e_{t_\ell} \in U .$
Since $u \in W$ by construction and $U \cap W = \{0\}$, it follows that
$u = 0$, and hence $a_i = 0$ for all $i \in [\ell]$.
This proves the Claim 1.

Now, for $\alpha = (\alpha_1,\dots,\alpha_\ell)\in \mathbb{F}_q^\ell$, define
$$
v(\alpha) = \sum_{i=1}^\ell \alpha_i e_{t_i} \in \mathbb{F}_q^k,
\qquad
P_{\alpha} = v(\alpha) + U \in \mathcal{P}.
$$
We first show that different $\alpha$ corresponds to different cosets. Suppose $P_{\alpha} = P_{\beta}$ for some $\alpha,\beta\in\mathbb{F}_q^\ell$. Then
$$
v(\alpha) - v(\beta)
= \sum_{i=1}^\ell (\alpha_i - \beta_i) e_{t_i} \in U.
$$
Using Claim 1, we have $\alpha_i - \beta_i = 0$ for all $i \in [\ell]$. Thus $\alpha = \beta$, and the map $\alpha \mapsto P_{\alpha}$ is injective. Since there are $q^\ell$ cosets of $U$ (because $\dim U = k-\ell$), this map is in fact a bijection from $\mathbb{F}_q^\ell$ onto $\mathcal{P}$.

\noindent\textbf{Claim 2:} The set $\mathcal{C}=\{v(\alpha):\alpha\in\mathbb{F}_q^\ell\}$ is a clique in $G_{\mathcal{P}}$.

For each $r\in[\ell]$, define the index set
$$
S_r=\{\,i\in[k]: e_i \in e_{t_r}+U\,\}.
$$
Then $S_1,\dots,S_\ell$ are pairwise disjoint, and for every $i$ with $e_i\notin U$ there exists a unique $r$ such that $i\in S_r$.
For $x=(x_1,\dots,x_k)\in\mathbb{F}_q^k$, define the linear maps
$$
\phi_r(x)=\sum_{i\in S_r} x_i,\qquad r\in[\ell].
$$
We first show that $\phi_r(u)=0$ for all $u\in U$ and $r\in[\ell]$.
Let $u=\sum_{i=1}^k u_i e_i\in U$, which can be written as
$$u = \sum_{r=1}^\ell \sum_{i\in S_r} u_i e_i \;+\; \sum_{i:\,e_i\in U} u_i e_i.$$
For each $r$, we have $e_i-e_{t_r}\in U$ for all $i\in S_r$, hence
$$
u-\sum_{r=1}^\ell \Big(\sum_{i\in S_r}u_i\Big)e_{t_r}
= \sum_{r=1}^\ell \sum_{i\in S_r} u_i (e_i-e_{t_r}) \;+\; \sum_{i:\,e_i\in U} u_i e_i \in U.
$$
Therefore, $\sum_{r=1}^\ell \big(\sum_{i\in S_r}u_i\big)e_{t_r}\in U$, and by Claim~1 we get
$\sum_{i\in S_r}u_i=0$, i.e., $\phi_r(u)=0$ for all $r \in [\ell]$.

Now fix $\alpha\in\mathbb{F}_q^\ell$ and take any $x\in P_\alpha=v(\alpha)+U$, so $x=v(\alpha)+u$ for some $u\in U$. Then
$$
\phi_r(x)=\phi_r(v(\alpha))+\phi_r(u)=\phi_r(v(\alpha)).
$$
Since $t_r\in S_r$ and $v(\alpha)=\sum_{j=1}^\ell \alpha_j e_{t_j}$, we have
$\phi_r(v(\alpha))=\alpha_r$, and hence
\begin{equation}\label{eq:phi_const}
\phi_r(x)=\alpha_r \quad \text{for all }x\in P_\alpha,  r\in[\ell].
\end{equation}
Now for fix $\alpha,\beta\in\mathbb{F}_q^\ell$, set
$
s=\big|\{r\in[\ell]:\alpha_r\neq\beta_r\}\big|.
$
For arbitrary $x\in P_\alpha$ and $y\in P_\beta$, let $z=x-y$. By \eqref{eq:phi_const},
$$
\phi_r(z)=\phi_r(x)-\phi_r(y)=\alpha_r-\beta_r.
$$
Thus, if $\alpha_r\neq\beta_r$ then $\phi_r(z)\neq 0$, which implies that $z$ has at least one nonzero coordinate in the index set $S_r$. Since the sets $S_r$ are disjoint, $z$ has at least one nonzero coordinate in each of the $s$ sets corresponding to $\alpha_r\neq\beta_r$, and therefore $\mathrm{wt}(z)\ge s.$
As $x\in P_\alpha$ and $y\in P_\beta$ were arbitrary, this shows $d(P_\alpha,P_\beta)\ge s$.
On the other hand,
$$
d(v(\alpha), v(\beta)) = \mathrm{wt}\big(v(\alpha)-v(\beta)\big)
=\mathrm{wt}\Big(\sum_{r=1}^\ell (\alpha_r-\beta_r)e_{t_r}\Big)=s,
$$
so we have $d(P_\alpha,P_\beta)=d\big(v(\alpha),v(\beta)\big).$

By the definition of the partition graph, there is an edge between $ v(\alpha)$ and $v (\beta)$. Since $\alpha$ and $\beta $ were taken arbitrarily, the set
$
\mathcal{C} = \{v(\alpha) : \alpha \in \mathbb{F}_q^\ell\}
$
contains exactly one vertex from each coset, and every pair of distinct vertices in $\mathcal{C}$ has an edge. Thus $\mathcal{C}$ is a clique of size $|\mathcal{C}| = q^\ell$, which is a full-size clique in the partition graph of $\mathcal{U}$.
\end{proof}

Therefore, from Lemma \ref{lem:coset1}, if $U \le \mathbb{F}_q^k$ is a subspace with $\dim U = k-\ell$ with coset partition $\mathcal{P} = \{x+U : x \in \mathbb{F}_q^k\}$, and 
$$\left|\left\{e_i+U : i \in [k], e_i\notin U \right\}\right| = \ell,$$
then there exists a set of representative vectors $\{u_1, \ldots, u_{q^{\ell}}\}$ defined in the proof of Lemma \ref{lem:coset1}, and $r_{\mathcal{P}}(k,t)=N_q(\mathcal{D}_f(t, u_1, \ldots, u_{q^{\ell}}))$.

We now show the existence of a full-size clique for partitions defined via coordinate restrictions, and then show that this setting is a special case of Lemma~\ref{lem:coset1}.

\begin{lemma}\label{lem:coset_gen}
Let $J \subseteq [k]$ be a subset of coordinates with $|J| = \ell$, and consider the partition
$
\mathcal{P}_J = \{P_a : a \in \mathbb{F}_q^\ell\}
$
of $\mathbb{F}_q^k$ defined by
$$
P_a = \{x \in \mathbb{F}_q^k : x_J = a\},
$$
where $x_J$ denotes the restriction of $x$ to the coordinates in $J$. Then the partition graph associated with $\mathcal{P}_J$ contains a clique of size $q^\ell$.
\end{lemma}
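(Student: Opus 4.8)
The plan is to realize $\mathcal{P}_J$ as a coset partition and then invoke Lemma~\ref{lem:coset1}. First I would set $U = \{x \in \mathbb{F}_q^k : x_J = 0\}$, the coordinate subspace obtained by forcing all coordinates indexed by $J$ to vanish. This is a subspace of dimension $k - |J| = k - \ell$, and two vectors $x,y$ lie in the same coset of $U$ precisely when $x_J = y_J$. Hence the coset partition of $U$ is exactly $\mathcal{P}_J$, with $P_a = v(a) + U$, where $v(a)$ denotes the vector that equals $a$ on the coordinates in $J$ and $0$ elsewhere.

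Next I would verify the hypothesis of Lemma~\ref{lem:coset1}, namely that $\bigl|\{\,e_i + U : i \in [k],\ e_i \notin U\,\}\bigr| = \ell$. Since $e_i \in U$ if and only if $(e_i)_J = 0$, which holds exactly when $i \notin J$, the standard basis vectors lying outside $U$ are precisely $\{e_i : i \in J\}$, of which there are $\ell$. Moreover, for distinct $i,i' \in J$ the difference $e_i - e_{i'}$ is nonzero in coordinate $i$, so $(e_i - e_{i'})_J \neq 0$ and $e_i - e_{i'} \notin U$; thus the cosets $e_i + U$ for $i \in J$ are pairwise distinct. Consequently the set in question has exactly $\ell$ elements, Lemma~\ref{lem:coset1} applies, and it produces a clique of size $q^{k-(k-\ell)} = q^\ell$ in $G_{\mathcal{P}_J}$.

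I do not expect a genuine obstacle here: the only points requiring care are confirming that the coset partition of $U$ coincides with $\mathcal{P}_J$ and that the $\ell$ basis cosets are genuinely distinct, both immediate from the definition of $U$. If a self-contained argument is preferred, one can instead exhibit the clique directly via the representatives $u_a = v(a)$ for $a \in \mathbb{F}_q^\ell$. For $a \neq b$ one has $d(u_a, u_b) = \bigl|\{\,j \in J : a_j \neq b_j\,\}\bigr|$, and since any $x \in P_a$ and $y \in P_b$ already differ on every coordinate of $J$ where $a$ and $b$ differ, $d(P_a, P_b) = d(u_a, u_b)$. Hence each pair $u_a, u_b$ is an edge of $G_{\mathcal{P}_J}$, and $\{u_a : a \in \mathbb{F}_q^\ell\}$ is a clique of size $q^\ell$.
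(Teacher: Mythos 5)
Your proposal is correct, but your primary route differs from the proof the paper gives for Lemma~\ref{lem:coset_gen}. The paper proves the lemma directly: it exhibits the representatives $u^{(a)}$ (equal to $a$ on $J$ and zero off $J$), computes $d(u^{(a)},u^{(b)})=|\{j\in J: a_j\neq b_j\}|$, shows this equals $d(P_a,P_b)$, and concludes adjacency -- which is exactly the self-contained fallback you sketch in your last paragraph. Your main argument instead realizes $\mathcal{P}_J$ as the coset partition of $U=\{x : x_J=0\}$ and invokes Lemma~\ref{lem:coset1}; this is precisely the content of the remark the paper places immediately after the lemma (stating that Lemma~\ref{lem:coset_gen} is a special case of Lemma~\ref{lem:coset1}), so the reduction is sound, and you are in fact slightly more careful than the paper's remark in explicitly checking that the $\ell$ cosets $e_i+U$, $i\in J$, are pairwise distinct. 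What the reduction buys is economy (no new distance computation) at the cost of importing the heavier machinery of Lemma~\ref{lem:coset1}; what the direct argument buys is a short, elementary, self-contained proof, which is presumably why the paper presents it that way and relegates the reduction to a remark. Either version is acceptable.
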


\begin{proof}
For each $a = (a_j)_{j \in J} \in \mathbb{F}_q^\ell$, define a vector $u^{(a)} \in \mathbb{F}_q^k$ by
$$
u^{(a)}_j =
\begin{cases}
a_j, & j \in J,\\
0,   & j \notin J.
\end{cases}
$$
Clearly $u^{(a)} \in P_a$, since $u^{(a)}_J = a$. Define
$
\mathcal{C} = \{u^{(a)} : a \in \mathbb{F}_q^\ell\}.
$
Then $C$ contains exactly one vertex from each block $P_a$ of the partition $\mathcal{P}_J$. Now take distinct $a,b \in \mathbb{F}_q^\ell$. By construction, $u^{(a)}$ and $u^{(b)}$ may differ only in the coordinates in $J$, and
$$
d(u^{(a)},u^{(b)}) = \left|\{j \in J : a_j \neq b_j\}\right| =s.
$$

We next compute the block distance between $P_a$ and $P_b$. Any $x \in P_a$ and $y \in P_b$ satisfy $x_J = a$ and $y_J = b$, so in each coordinate $j \in J$ with $a_j \neq b_j$ the difference $x-y$ has a nonzero entry. Thus
$$
d(x,y) \ge s, \qquad \text{for all } x \in P_a,\ y \in P_b.
$$
On the other hand, for the specific pair $(x,y) = (u^{(a)},u^{(b)})$ we have $d(u^{(a)},u^{(b)}) = s$. Therefore
$$
d(P_a,P_b)
= \min\{d(x,y) : x \in P_a,\ y \in P_b\}
= s
= d(u^{(a)},u^{(b)}).
$$

Therefore, $u^{(a)}$ and $u^{(b)}$ are adjacent for every distinct $a,b \in \mathbb{F}_q^\ell$, and $\mathcal{C}$ is a clique of size $q^\ell$.
\end{proof}

\begin{remark}
Lemma~\ref{lem:coset_gen} is a special case of Lemma~\ref{lem:coset1}, as for any $J \subseteq [k]$ with $|J|=\ell$, the set
$$
U = P_{\mathbf{0}} = \{x \in \mathbb{F}_q^k : x_J = \mathbf{0}\}
$$
is a subspace of $\mathbb{F}_q^k$ of dimension $k-\ell$. Moreover, $e_i \notin U$ if and only if $i \in J$, and hence the cosets $U+e_i$ with $e_i \notin U$ are precisely $\{U+e_j : j \in J\}$, which are $\ell$ in number. Therefore, the coset partition of $U$ satisfies the condition of Lemma~\ref{lem:coset1}, and the partition $\mathcal{P}_J$ coincides with the coset partition of $U$.
\end{remark}

We now give an example illustrating Lemma \ref{lem:coset_gen}.

\begin{example}
Let $q=3$, $k=3$, and choose $J=\{2,3\}$ so that $|J|=\ell=2$.
The partition $\mathcal{P}_J=\{P_a : a\in\mathbb{F}_3^2\}$, where $P_a=P_{(a_2,a_3)} = \{(x_1, a_2, a_3) : x_1 \in \mathbb{F}_3\}.$
Explicitly, the nine blocks are:
\begin{align*}
P_{(0,0)} &= \{(0,0,0),(1,0,0),(2,0,0)\},
P_{(0,1)} = \{(0,0,1),(1,0,1),(2,0,1)\},
P_{(0,2)} = \{(0,0,2),(1,0,2),(2,0,2)\},\\
P_{(1,0)} &= \{(0,1,0),(1,1,0),(2,1,0)\},
P_{(1,1)} = \{(0,1,1),(1,1,1),(2,1,1)\},
P_{(1,2)} = \{(0,1,2),(1,1,2),(2,1,2)\},\\
P_{(2,0)} &= \{(0,2,0),(1,2,0),(2,2,0)\},
P_{(2,1)} = \{(0,2,1),(1,2,1),(2,2,1)\},
P_{(2,2)} = \{(0,2,2),(1,2,2),(2,2,2)\}.
\end{align*}

Following Lemma~\ref{lem:coset_gen}, the vectors $\mathcal{C}
=\{(0,a_2,a_3): a_2,a_3\in\mathbb{F}_3\}
$
form a clique of size $3^2=9$ in the partition graph $G_{\mathcal{P}_J}$, refer to Fig. \ref{fig:ex8}.

\begin{figure}[ht]
\centering
\begin{tikzpicture}[
    >=latex,
    every node/.style={circle,draw,inner sep=2pt,font=\scriptsize},
    scale=1.2
]

\node (v1) at (90:3cm) {(0,0,0)};
\node (v2) at (50:3cm) {(0,0,1)};
\node (v3) at (10:3cm) {(0,0,2)};
\node (v4) at (-30:3cm) {(0,1,2)};
\node (v5) at (-70:3cm) {(0,1,1)};
\node (v6) at (-110:3cm) {(0,1,0)};
\node (v7) at (-150:3cm) {(0,2,0)};
\node (v8) at (-190:3cm) {(0,2,1)};
\node (v9) at (-230:3cm) {(0,2,2)};

\foreach \u in {1,...,9} {
    \foreach \v in {1,...,9} {
        \ifnum\u<\v
            \draw (v\u) -- (v\v);
        \fi
    }
}

\end{tikzpicture}

\caption{The nine vectors $\mathcal{C}=\{(0,a_2,a_3): a_2,a_3\in\mathbb{F}_3\}$,
one from each block $P_{(a_2,a_3)}$, forming a clique of size $9$ in the partition
graph $G_{\mathcal{P}_J}$.}
\label{fig:ex8}
\end{figure}
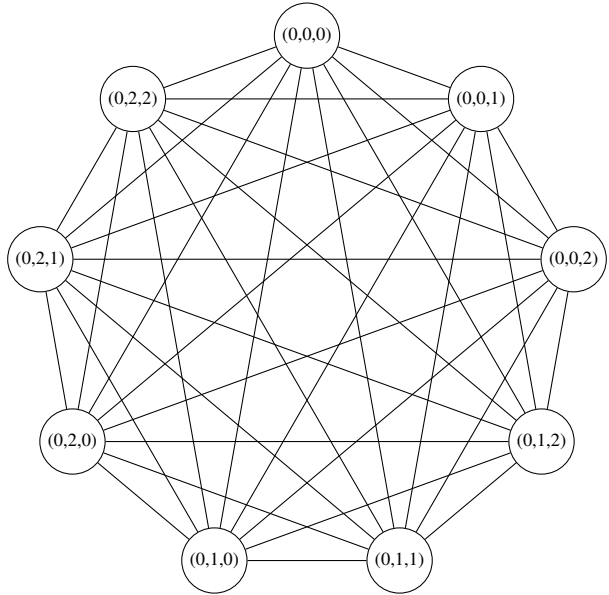

\end{example}

\subsection{Partition Graphs of Weight-Based Partitions}



In this subsection, we study the partition graphs associated with partitions of $\mathbb{F}_q^k$ based on the Hamming weights of the vectors. We first consider the \emph{weight partition}, $\mathcal{W}=\{W_0,W_1,\ldots,W_k\}$. We show that its partition graph contains a full-size clique, and using this property, we derive a lower bound on the redundancy of the corresponding $(\mathcal{W},t)$-FCC over $\mathbb{F}_q$.
We next consider grouped weight partitions and show that for a subclass of it, a full size clique does not exist.

The following lemma shows the existence of a full size clique in the partition graph of the weight partition $\mathcal{W}$.

\begin{lemma}\label{lem:HWF}
The partition graph of the weight partition $\mathcal{W}=\{W_0, W_1, \ldots, W_k\}$ of $\mathbb{F}_q^k$ contains a clique of size $k+1$, where $W_i=\{ u \in \mathbb{F}_q^k  : \mathrm{wt}(u)=i \}$ for all $i\in \{0,1,\ldots, k\}$.
\end{lemma}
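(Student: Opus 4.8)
The strategy is to exhibit an explicit set of $k+1$ vectors, one from each weight block $W_0, W_1, \ldots, W_k$, and verify that every pair among them achieves the minimum possible distance between the two weight blocks they belong to. The natural candidate is the nested chain of ``prefix'' vectors $v_i = (\underbrace{1,1,\ldots,1}_{i},0,\ldots,0)$ for $i = 0, 1, \ldots, k$, so that $\mathrm{wt}(v_i) = i$ and $\mathrm{supp}(v_i) \subseteq \mathrm{supp}(v_{i+1})$. Set $\mathcal{C} = \{v_0, v_1, \ldots, v_k\}$; this contains exactly one vertex from each block $W_i$, so it remains only to check that each pair $(v_i, v_j)$ with $i < j$ is an edge of the partition graph $G_{\mathcal{W}}$.

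The key computation has two halves. First, by the nesting of supports, $d(v_i, v_j) = j - i$ for $i < j$, since $v_j - v_i$ has ones exactly in positions $i+1, \ldots, j$. Second, I must show $d(W_i, W_j) = j - i$ as well. The lower bound $d(u,w) \ge \mathrm{wt}(w) - \mathrm{wt}(u) = j - i$ for any $u \in W_i$, $w \in W_j$ (with $i<j$) follows from the reverse triangle inequality for Hamming weight: $\mathrm{wt}(w) \le \mathrm{wt}(u) + d(u,w)$, which holds because changing a coordinate alters the weight by at most one. Combined with $d(v_i, v_j) = j-i$, this gives $d(W_i, W_j) = j - i = d(v_i, v_j)$, so by the definition of the partition graph there is an edge between $v_i$ and $v_j$. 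Since $i<j$ were arbitrary, $\mathcal{C}$ is a clique; it has size $k+1$ because the weights $0, 1, \ldots, k$ are distinct, so the $v_i$ are pairwise distinct vertices in pairwise distinct blocks.

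This argument is essentially routine once the right representatives are chosen, so there is no serious obstacle; the only thing to be careful about is the direction of the weight inequality and making sure it is applied with $i < j$ so that $\mathrm{wt}(w) - \mathrm{wt}(u) = j - i \ge 0$. (One could equally well invoke the earlier observation that a single coordinate change shifts Hamming weight by exactly one, and argue by induction along a minimum-weight path from $u$ to $w$, but the reverse triangle inequality is cleaner.) I would also remark that, by Theorem~\ref{thm:clique}, this clique immediately yields a set of representatives $\{v_0, v_1, \ldots, v_k\}$ with $r_{\mathcal{W}}(k,t) = N(\mathcal{D}_{\mathcal{W}}(t, v_0, v_1, \ldots, v_k))$, which is precisely what is needed to extract the promised lower bound on the redundancy of the $(\mathcal{W}, t)$-FCC in the discussion that follows.
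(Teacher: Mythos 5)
Your proposal is correct and follows essentially the same route as the paper's proof: both establish the lower bound $d(W_i,W_j)\ge |i-j|$ from the fact that a single coordinate change alters the Hamming weight by at most one, and both exhibit the nested prefix representatives (the paper uses a generic nonzero symbol $a$ where you use $1$, an immaterial difference) to show this bound is attained, hence every pair is an edge. No gaps.
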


\begin{proof}
 Take any $x \in W_i$ and $y \in W_j$, then
$
\mathrm{wt}(x) = i,$ and $  \mathrm{wt}(y) = j.
$
 and 
 $$d(x,y)\geq|\mathrm{wt}(x)-\mathrm{wt}(y)|=|i-j|,$$ 
 and hence $d(W_i, W_j)\geq |i-j|.$
For each $i \in \{0,1,\ldots,k\}$, fix a nonzero symbol $a \in \mathbb{F}_q$ and define
$$
u_i = (\underbrace{a,\ldots,a}_{i\ \text{times}},0,\ldots,0) \in \mathbb{F}_q^k.
$$
Then $\mathrm{wt}(u_i) = i$, so $u_i \in P_i$. For $i \ge j$, the supports of $u_i$ and $u_j$ satisfy
$
\mathrm{supp}(u_j) \subseteq \mathrm{supp}(u_i),
$
and they differ exactly in the last $i-j$ positions, where $u_i$ has $a$ and $u_j$ has $0$. Thus
$
d(u_i,u_j) = |i-j|.
$
Combining with the previous inequality, we obtain
$$
d(W_i,W_j) = d(u_i,u_j)= |i-j|  \qquad \text{for all } i \neq j.
$$

By the definition of the partition graph of $\mathcal{W}$, there is an edge between $u_i \in W_i$ and $u_j \in W_j$ for $i \neq j$, and the set
$
C = \{u_0,u_1,\ldots,u_k\}
$
contains exactly one vertex from each block $W_i$ and forms a clique of size $k+1$.
\end{proof}

Since the Hamming weight function and any function of the form 
$f(x)=g(\mathrm{wt}(x))$, where $g$ is a bijection on $\{0,1,\ldots,k\}$, all have 
the weight partition $\mathcal{W}$ as their domain partition, 
Lemma~\ref{lem:HWF} applies to each of these functions.  
For the special case of the Hamming weight function, a similar result was 
previously presented in \cite[Lemma~6]{LBWY2023}.

\begin{example}\label{ex9}
Consider the weight partition $
\mathcal{W}=\{W_0,W_1,W_2,W_3\}$ of $\mathbb{F}_3^3$.
Fix a nonzero symbol $a\in\mathbb{F}_3$, say $a=1$.  
Define the following vectors, as in the proof of Lemma~\ref{lem:HWF}:
$$
u_0=(0,0,0), \qquad
u_1=(1,0,0), \qquad
u_2=(1,1,0), \qquad
u_3=(1,1,1).
$$
Then $\mathrm{wt}(u_0)=0,\ \mathrm{wt}(u_1)=1,\
\mathrm{wt}(u_2)=2,\ \mathrm{wt}(u_3)=3,$ so $u_i \in W_i$ for all $i=0,1,2,3$. The set $C=\{u_1, u_2, u_3,u_4\}$ forms a full size clique. For $t=2$, the PDRM of the clique vectors is
$$N_3(\mathcal{D}_{\mathcal{W}}(t,u_1,u_2,u_3,u_4))=
\begin{bmatrix}
 0   & 4   & 3   & 2  \\
 4   & 0   & 4   & 3  \\
 3   & 4   & 0   & 4  \\
 2   & 3   & 4   & 0
\end{bmatrix}.
$$
Therefore, instead of constructing $\mathcal{D}$-code for all $27$ vectors in the space, we just need to construct a $\mathcal{D}$-code for this PDRM of $4$ vectors. One such valid $\mathcal{D}$-code is: $z_1= 0000, z_2=1111, z_3=0222, z_4=2001$.
\end{example}

From Lemma \ref{lem:HWF} and Theorem \ref{thm:clique}, we have the following result.
\begin{corollary}\label{col:HWF1}
    For any positive integers $k$ and $t$, and for a fixed $a\in \mathbb{F}_q$, $u_i =(\underbrace{a,\ldots,a}_{i\ \text{times}},0,\ldots,0),\ i=0,1,\ldots,k$,  we have
$$r_{\mathcal{W}}(k,t)=N_q(D_{\mathcal{W}}(t \,; \, u_0,u_1,\ldots,u_k)).$$
\end{corollary}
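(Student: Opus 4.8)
The plan is to obtain this corollary by directly composing Lemma~\ref{lem:HWF} with Theorem~\ref{thm:clique}. First I would note that the weight partition $\mathcal{W}=\{W_0,W_1,\ldots,W_k\}$ has exactly $E=k+1$ blocks, so a clique of size $k+1$ in the partition graph $G_{\mathcal{W}}$ is a full-size clique in the sense required by Theorem~\ref{thm:clique}. Lemma~\ref{lem:HWF} exhibits precisely such a clique: for the fixed nonzero symbol $a\in\mathbb{F}_q$, the vectors $u_i=(\underbrace{a,\ldots,a}_{i\ \mathrm{times}},0,\ldots,0)$, $i=0,1,\ldots,k$, satisfy $u_i\in W_i$ and, for all $i\neq j$, $d(u_i,u_j)=|i-j|=d(W_i,W_j)$; hence each pair $\{u_i,u_j\}$ is an edge of $G_{\mathcal{W}}$ and the set $\{u_0,\ldots,u_k\}$ is a clique containing exactly one vertex from each weight block.

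With this clique in hand, I would apply Theorem~\ref{thm:clique} with $\mathcal{P}=\mathcal{W}$ and the clique $\{u_0,u_1,\ldots,u_k\}$. Since the clique meets every block of the $(k+1)$-partite graph $G_{\mathcal{W}}$ in exactly one vertex, $\{u_0,\ldots,u_k\}$ is a complete set of block representatives, and the adjacency of the clique vertices forces $\mathcal{D}_{\mathcal{W}}(t,u_0,\ldots,u_k)=\mathcal{D}_{\mathcal{W}}(t;W_0,\ldots,W_k)$ entrywise. The theorem then yields
$$
r_{\mathcal{W}}(k,t)=N\!\big(\mathcal{D}_{\mathcal{W}}(t;W_0,\ldots,W_k)\big)=N\!\big(\mathcal{D}_{\mathcal{W}}(t,u_0,u_1,\ldots,u_k)\big),
$$
which is the claimed identity, the right-hand side being what the statement writes as $N(D_{\mathcal{W}}(t\,;\,u_0,u_1,\ldots,u_k))$.

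Since the corollary is a straightforward specialization, I do not expect a genuine obstacle; the only points needing care are purely bookkeeping: confirming that the vectors named in the statement are exactly the clique vertices constructed in the proof of Lemma~\ref{lem:HWF}, recording that $a$ must be nonzero for these vectors to lie in distinct weight blocks, and observing that the index shift (Theorem~\ref{thm:clique} labels a clique as $u_1,\ldots,u_E$, whereas here it is indexed $u_0,\ldots,u_k$ with $E=k+1$) is immaterial.
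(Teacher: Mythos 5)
Your proposal is correct and follows exactly the paper's route: the corollary is stated as an immediate consequence of Lemma~\ref{lem:HWF} (which exhibits the full-size clique $\{u_0,\ldots,u_k\}$ realizing the block distances) combined with Theorem~\ref{thm:clique}. Your remark that $a$ must be nonzero is a fair observation of a small imprecision in the corollary's statement, but otherwise there is nothing to add.
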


Using the Plotkin bound in Lemma~\ref{lem:RRHH}, we derive a lower bound on the redundancy of a $(\mathcal{W},t)$-encoding over $\mathbb{F}_q$. An upper bound is 
also obtained by constructing a $(\mathcal{W},t)$-encoding from a suitable ECC.

\begin{corollary}\label{col:HWF2}
    For any positive integers $k$ and $t$, the redundancy of a $(\mathcal{W},t)$-encoding over $\mathbb{F}_q$ satisfies
    $$ \frac{2q}{(k+1)^2(q-1)-a(q-a)}\, S_{k,t} \leq r_{\mathcal{W}}(k,t)\leq N_q(\min(2t+1,k+1),2t),$$
    where $a=(k+1) \bmod q$, $N_q(M,d)$ is the minimum length of an ECC with $M$ codewords and minimum distance $d$, and
    $$S_{k,t}= \begin{cases}
       \displaystyle \frac{k(k+1)(6t+1-k)}{6} &\text{if}\ k\leq 2t \\[4pt]
       \displaystyle \frac{t(2t+1)(3k-2t+1)}{3} &\text {if}\ k\geq 2t
    \end{cases}.$$
\end{corollary}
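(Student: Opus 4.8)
The plan is to prove the two inequalities separately, in each case exploiting the clique structure of the weight partition $\mathcal{W}$ already established in the preceding results.

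\textbf{Lower bound.} I would begin from Corollary~\ref{col:HWF1}, which identifies $r_{\mathcal{W}}(k,t)=N\bigl(\mathcal{D}_{\mathcal{W}}(t;u_0,\ldots,u_k)\bigr)$ for the explicit representatives $u_i=(\underbrace{a,\ldots,a}_{i},0,\ldots,0)$. Since $d(u_i,u_j)=|i-j|$, this $(k+1)\times(k+1)$ matrix has off-diagonal entries $\max(2t+1-|i-j|,0)$. Applying the Plotkin-type bound of Lemma~\ref{lem:RRHH} with $M=k+1$ and $a=(k+1)\bmod q$ gives
\[
r_{\mathcal{W}}(k,t)\ \ge\ \frac{2q}{(k+1)^2(q-1)-a(q-a)}\sum_{0\le i<j\le k}\max\bigl(2t+1-(j-i),\,0\bigr),
\]
so it remains only to show that the double sum equals $S_{k,t}$. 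Grouping pairs by the difference $m=j-i\in\{1,\ldots,k\}$ (there are $k+1-m$ of them), the sum becomes $\sum_{m=1}^{\min(k,2t)}(k+1-m)(2t+1-m)$; after the substitution $\ell=k+1-m$ when $k\le 2t$, or $n=2t+1-m$ when $k\ge 2t$, together with the standard formulas for $\sum\ell$ and $\sum\ell^2$, this collapses to the two displayed cases of $S_{k,t}$, which I would also verify agree at the boundary $k=2t$.

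\textbf{Upper bound.} Here I would invoke Theorem~\ref{thm:LUB_partition} to obtain $r_{\mathcal{W}}(k,t)\le N\bigl(\mathcal{D}_{\mathcal{W}}(t;W_0,\ldots,W_k)\bigr)$, where the partition distance matrix again has off-diagonal entries $\max(2t+1-|i-j|,0)$ since $d(W_i,W_j)=|i-j|$. It then suffices to construct a $\mathcal{D}$-code of length $N(M,2t)$ with $M=\min(2t+1,k+1)$. The idea is to take an optimal $q$-ary code $\{c_0,\ldots,c_{M-1}\}$ of length $N(M,2t)$ and minimum distance $2t$, and assign the redundancy vector $z_i=c_{\,i\bmod M}$ to the block $W_i$. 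Whenever $1\le|i-j|\le 2t$ one has $i\not\equiv j\pmod M$ (because $M\le 2t+1$ when $k\ge 2t$, while $M=k+1$ makes $i,j\le k$ themselves distinct residues when $k\le 2t$), hence $d(z_i,z_j)\ge 2t\ge 2t+1-|i-j|$; for $|i-j|\ge 2t+1$ the requirement is $0$. Thus $\{z_i\}$ is a valid $\mathcal{D}$-code, giving $r_{\mathcal{W}}(k,t)\le N(M,2t)$.

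The conceptual content is modest. The only genuinely new ingredient is the periodic assignment $z_i=c_{\,i\bmod(2t+1)}$, together with the observation that injectivity of $i\mapsto i\bmod M$ on every window of $M$ consecutive indices, combined with minimum distance $2t$, is exactly what is needed to satisfy all distance requirements. I expect the main (but entirely routine) obstacle to be the evaluation of $S_{k,t}$: summing the two arithmetic progressions correctly after the case split, keeping the index bounds straight, and confirming the two closed forms coincide when $k=2t$.
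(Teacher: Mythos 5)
Your proposal is correct and follows essentially the same route as the paper: the lower bound via the clique representatives $u_i$, the sum $\sum_{s=1}^{\min(k,2t)}(k+1-s)(2t+1-s)=S_{k,t}$, and Lemma~\ref{lem:RRHH}; the upper bound via a minimum-distance-$2t$ code of size $M=\min(k+1,2t+1)$ assigned periodically to the weight classes. The only cosmetic difference is that you unify the paper's two upper-bound cases into the single formula $z_i=c_{\,i\bmod M}$, which is a harmless simplification.
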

\begin{proof}
For the vectors $u_i=(\underbrace{a,\ldots,a}_{i},0,\ldots,0)$, the pairwise distance is 
$d(u_i,u_j)=|i-j|$. Hence, PDRM $\mathcal{D}_{\mathcal{W}}(t;u_0,u_1,\ldots,u_k)$ have $(i,j)$-th entry as
$
\mathcal{D}_{i,j}=\max(2t+1-|i-j|,0),\ i\neq j.
$
Summing over the upper triangular part gives
$$
\sum_{i<j} \mathcal{D}_{i,j}
=\sum_{s=1}^{\min(k,2t)} (k+1-s)(2t+1-s)
= S_{k,t},
$$
because there are $k+1-s$ pairs with gap $s=j-i$, and all terms with $s >2t$ vanish. Further solving the expression of $S_{k,t}$, we get
$$
S_{k,t}=
\begin{cases}
\displaystyle \frac{k(k+1)(6t+1-k)}{6}, & k\le 2t,\\[4pt]
\displaystyle \frac{t(2t+1)(3k-2t+1)}{3}, & k\ge 2t.
\end{cases}
$$
Finally, applying the Plotkin bound from Lemma~\ref{lem:RRHH},
$$
N_q(\mathcal{D}_{\mathcal{W}}(t;u_0,u_1,\ldots,u_k))\ge
\frac{2q}{(k+1)^2(q-1)-a(q-a)}
\sum_{i<j} \mathcal{D}_{i,j},
\qquad a=(k+1)\bmod q,
$$
and substituting $\sum_{i<j}\mathcal{D}_{i,j}=S_{k,t}$ gives the lower bound.

\noindent\textbf{Upper bound.}
Set $M = \min(k+1, 2t+1)$, and let $\{p_0,p_1,\ldots,p_{M-1}\}$ be a $q$-ary ECC of length $N_q(M,2t)$, size $M$, and minimum distance $2t$. We now construct a $\mathcal{D}_{\mathcal{W}}(t;u_0,\ldots,u_k)$-code of the same length.

\emph{Case 1: $k+1 \le 2t+1$ (so $M=k+1$).}
Define the codewords $c_i = p_i$ for $i=0,1,\ldots,k$. Then for any $i\neq j$,
$$
d(c_i,c_j) = d(p_i,p_j) \ge 2t.
$$
On the other hand, by the definition of the PDRM,
$$
\mathcal{D}_{i,j} = \max(2t+1-|i-j|,0) \le 2t,
$$
so $d(c_i,c_j)\ge \mathcal{D}_{i,j}$ for all $i\neq j$. Hence $\{c_0,\ldots,c_k\}$ is a $\mathcal{D}_{\mathcal{W}}(t;u_0,\ldots,u_k)$-code of length $N_q(M,2t)$.

\emph{Case 2: $k+1 \ge 2t+1$ (so $M=2t+1$).}
Define
$$
c_i = p_{i \bmod M}, \qquad i=0,1,\ldots,k.
$$
Consider any distinct $i,j$ and let $s=|i-j|$.
\begin{itemize}
    \item If $1\le s \le 2t$, then $s<M$, so $i \not\equiv j \pmod M$ and hence $c_i\neq c_j$. Thus
    $$
    d(c_i,c_j) = d(p_{i\bmod M},p_{j\bmod M}) \ge 2t.
    $$
    Since in this range $\mathcal{D}_{i,j}=2t+1-s\le 2t$, we again have $d(c_i,c_j)\ge \mathcal{D}_{i,j}$.
    \item If $s\ge M=2t+1$, then $\mathcal{D}_{i,j} = \max(2t+1-s,0)=0$. In particular, we are allowed to have $d(c_i,c_j)=0$. Our construction assigns the same codeword exactly when $s$ is a multiple of $M=2t+1$. In that case, $c_i=c_j$ and $d(c_i,c_j)=0=\mathcal{D}_{i,j}$. If $s$ is not a multiple of $M$, then $c_i\neq c_j$ and $d(c_i,c_j)\ge 2t\ge 0 = \mathcal{D}_{i,j}$. Thus in all subcases $d(c_i,c_j)\ge \mathcal{D}_{i,j}$.
\end{itemize}
Therefore, $\{c_0,\ldots,c_k\}$ is a $\mathcal{D}_{\mathcal{W}}(t;u_0,\ldots,u_k)$-code of length $N_q(M,2t)$ also in this case.

Combining both cases, we get
$
N_q(\mathcal{D}_{\mathcal{W}}(t;u_0,\ldots,u_k))
\le N_q(\min(k+1,2t+1),2t),
$
and by Corollary~\ref{col:HWF1} this implies
$$
r_{\mathcal{W}}(k,t) = N_q(\mathcal{D}_{\mathcal{W}}(t;u_0,\ldots,u_k))
\le N_q(\min(k+1,2t+1),2t).
$$
\end{proof}

For $q=3$, $k=3$, and $t=2$, Corollary~\ref{col:HWF2} gives the lower bound $r_{\mathcal{W}}(k,t)\ge 4$. As shown in Example~\ref{ex9}, we obtain a $\mathcal{D}$-code of length $4$ for these parameters. Therefore, the code in that example is optimal, and the bound is tight.

We now see that the full-size clique property does not extend to grouped
weight partitions in general.  The following lemma shows that for a consecutive
grouped weight partition in which at least one middle block contains two or more
consecutive weight blocks, the induced partition graph does not contain a
full-size clique.

\begin{lemma}\label{lem:HWDF}
Let $\mathcal{G}=\{G_1, G_2, \ldots, G_m\}$ be a consecutive grouped weight partition where $G_j=\bigcup_{i \in S_j} W_i$ for each $j$, and assume that 
$|S_{\ell}| \ge 2$ for some $\ell \in \{2,3,\ldots,m-1\}$.   
Then the partition graph of $\mathcal{G}$ contains no clique of size $m$. 
\end{lemma}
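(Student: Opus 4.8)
The plan is a proof by contradiction that pits the triangle inequality against the block distances of three consecutive blocks, exploiting the fact that the ``fat'' middle block $G_\ell$ forces its two neighbours to be far apart.

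First I would observe that since $G_{\mathcal{G}}$ is $m$-partite with parts $G_1,\dots,G_m$, any clique meets each block in at most one vertex; hence a clique of size $m$, were it to exist, would necessarily be of the form $C=\{u_1,\dots,u_m\}$ with exactly one $u_j$ in each $G_j$, and by the definition of the partition graph it would satisfy $d(u_i,u_j)=d(G_i,G_j)$ for all $i\ne j$.

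Next I would compute the relevant block distances. Writing the consecutive index sets as intervals $S_j=\{a_j,\dots,b_j\}$ with $a_1=0$, $a_{j+1}=b_j+1$, $b_m=k$, the elementary inequality $d(x,y)\ge|\mathrm{wt}(x)-\mathrm{wt}(y)|$ (already used in Lemma~\ref{lem:HWF}), together with the nested-support construction achieving it, gives $d(G_i,G_j)=a_j-b_i$ for $i<j$. Since $\ell\in\{2,\dots,m-1\}$ is a middle index, both neighbours $G_{\ell-1}$ and $G_{\ell+1}$ exist, and one finds $d(G_{\ell-1},G_\ell)=d(G_\ell,G_{\ell+1})=1$, while
$$
d(G_{\ell-1},G_{\ell+1})=a_{\ell+1}-b_{\ell-1}=(b_\ell+1)-(a_\ell-1)=b_\ell-a_\ell+2\ \ge\ 3,
$$
the last inequality being exactly where the hypothesis $|S_\ell|=b_\ell-a_\ell+1\ge 2$ enters. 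For this last step only the lower bound coming from the weight-difference inequality is needed, so no sharpness argument is required for the outer pair.

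The contradiction then falls out immediately: applying the triangle inequality to $u_{\ell-1},u_\ell,u_{\ell+1}$ in the hypothetical clique gives
$$
3\ \le\ d(G_{\ell-1},G_{\ell+1})=d(u_{\ell-1},u_{\ell+1})\ \le\ d(u_{\ell-1},u_\ell)+d(u_\ell,u_{\ell+1})=1+1=2,
$$
which is absurd. I do not anticipate a genuine obstacle here; the only mild care needed is in verifying the block distances — that the minimum distance between $G_{\ell-1}$ and $G_\ell$ (and between $G_\ell$ and $G_{\ell+1}$) equals exactly $1$, attained by vectors differing in a single coordinate with nested supports, and that the weight gap from $G_{\ell-1}$ up to $G_{\ell+1}$ is at least $3$ once the middle block spans at least two consecutive weights.
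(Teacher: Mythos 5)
Your proposal is correct and follows essentially the same route as the paper: establish that $d(G_i,G_j)=\min S_j-\max S_i$ for $i<j$ (lower bound from the weight-difference inequality, achieved by nested-support all-ones-prefix vectors), note that a size-$m$ clique must realize these block distances, and derive a contradiction from the triangle inequality applied to a triple straddling the fat middle block. The only cosmetic difference is that you specialize to the consecutive triple $G_{\ell-1},G_\ell,G_{\ell+1}$, whereas the paper argues with a general triple $r<\ell<s$; both yield the same contradiction.
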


\begin{proof}
In the consecutive grouped weight partition 
$\mathcal{G} = \{G_1,G_2,\ldots,G_m\}$, assume that the blocks are ordered so that
their index sets $S_1,\ldots,S_m$ satisfy
$\max S_r < \min S_s$ whenever $r < s.$
For each $j$,
$$
G_j = \bigcup_{i \in S_j} W_i=\{x \in \mathbb{F}_q^k : \min{S_j} \leq \mathrm{wt}(x) \leq \max{S_j}\}.
$$
Let $r,s\in\{1,2,\ldots,m\}$ with $r<s$.  
For any $x\in G_r$ and $y\in G_s$, we have
$$
\min S_r \le \mathrm{wt}(x) \le \max S_r,
\qquad
\min S_s \le \mathrm{wt}(y) \le \max S_s.
$$
Similarly to the proof of Lemma \ref{lem:HWF}, we know that
$d(x,y) \ge  \mathrm{wt}(y)-\mathrm{wt}(x) \geq \min S_s- \max S_r.$
For some non-zero $a\in \mathbb{F}_q$, we have vectors $$x=(\underbrace{a,a, \ldots,a}_{\max S_r\ \text{times}},0,\ldots,0)\in W_{\max S_r} \subseteq G_r \quad \text{and} \quad y= (\underbrace{a,a,\ldots,a}_{\min S_s\ \text{times}},0,\ldots,0)\in W_{\min S_s} \subseteq G_s,$$
for which $d(x,y)=\min S_s - \max S_r.$ Therefore, $d(G_r,G_s) = \min S_s - \max S_r$ for all $r<s$ and $r,s \in \{1,2,\ldots, m\}$.

Now suppose, for a contradiction, that there exists a full-size clique, i.e., a
clique of size $m$ in the partition graph. Then in particular there must be a
triangle with vertices $\{u_r,u_{\ell},u_s\}$, where $u_r \in G_r$, $u_{\ell} \in G_{\ell}$, $u_s \in G_s$ for some $r < \ell < s$. Then we must have
\begin{align*}
d(u_r,u_\ell) &= d(G_r,G_\ell) = \min S_\ell -\max S_r,\\
d(u_\ell,u_s) &= d(G_\ell,G_s) = \min S_s-\max S_\ell,\\
d(u_r,u_s) &= d(G_r,G_s) = \min S_s - \max S_r.
\end{align*}
By the triangle inequality,
$$
d(u_r,u_s)
   \le d(u_r,u_\ell) + d(u_\ell,u_s)
   = \min S_\ell -\max S_r +  \min S_s-\max S_\ell
   = \min S_s - \max S_r - (\max S_\ell -\min S_\ell).
$$
Combining with the required value of $d(u_r,u_t)$, we obtain
$$
\min S_s - \max S_r \;\le\;  \min S_s - \max S_r - (\max S_\ell -\min S_\ell),
$$
that means $\max S_\ell -\min S_\ell \le 0,$ which is impossible as $|S_\ell| \ge 2$. This contradiction shows that no such triple $u_r,u_\ell,u_s$ can exist, so the partition graph contains no triangle involving three distinct blocks $G_r,G_\ell,G_s$. In particular, it
cannot contain a clique of size $m$, since any clique of size $m$ would contain
such a triple.
\end{proof}

\begin{remark}
The domain partition of the Hamming weight distribution function 
$\Delta_T(x)=\left\lfloor \frac{\mathrm{wt}(x)}{T} \right\rfloor$ on $\mathbb{F}_q^k$ is a 
consecutive grouped weight partition 
$\mathcal{G}=\{G_1,\ldots,G_{\lceil k/T\rceil}\}$, where each block is of the 
form $G_j=\bigcup_{i\in S_j} W_i$ with $|S_j|=T$ for all $j$ (except possibly 
the last block when $T\nmid k$).  
Hence, by Lemma~\ref{lem:HWDF}, the domain graph of $\Delta_T$ does not contain 
a full-size clique for any $T \ge 2$.
\end{remark}

   \begin{remark}
There are $2^{k}$ possible consecutive grouped weight partitions of 
$\mathbb{F}_q^k$.  
This is because, the weight partition $\mathcal{W}=\{W_0,\ldots,W_k\}$ has $k$ potential 
boundaries between successive weight blocks $W_{i-1}$ and $W_i$, and each boundary may either be kept (separating the blocks) or removed (merging them).  
Thus, every choice of which boundaries to keep corresponds uniquely to a 
consecutive grouped weight partition, giving $2^k$ possibilities.

In contrast, there are only $k$ distinct Hamming weight distribution functions 
$\Delta_T$ on $\mathbb{F}_q^k$, one for each threshold $T\in\{1,\ldots,k\}$.  
Hence, Lemma~\ref{lem:HWDF} applies to a much larger class of weight-based 
partitions, and in this sense Lemma \ref{lem:HWDF} is strictly more general.
\end{remark}

\subsection{Partition graph of the support partition}

In this subsection, we consider the following partition of $\mathbb{F}_q^k$ and refer to it as \emph{support partition},
$$\mathcal{S}=\{S_A : A \subseteq [k]\}, \quad \text{where } S_A=\{u \in \mathbb{F}_q^k : \mathrm{supp}(u)=A\}\ \text{for all } A\subseteq [k].$$

The support partition is strictly finer than the weight partition, since each weight class 
$W_i = \{x \in \mathbb{F}_q^k: \mathrm{wt}(x)=i\}$ is subdivided into $\binom{k}{i}$ blocks $S_A$ of vectors having the same support. There are many functions whose domain partition is exactly $\mathcal{S}$, and an $(\mathcal{S},t)$-encoding automatically works as an $(f,t)$-FCC for all such functions. Two simple examples are given below.
\begin{enumerate}
    \item The support function $f(u)=\mathrm{Supp}(u)$ for all $u\in\mathbb{F}_q^k$.
    \item The function $f:\mathbb{F}_q^k \to \{0,1,\ldots,2^k\}$ defined by
    \[
        f(u)=\sum_{i=1}^k u_i^{q-1}\,2^{i-1}, 
    \]
    for all $u=(u_1, u_2, \ldots, u_k)\in \mathbb{F}_q^k$. Since $g(a)=1$ for all $a\neq 0$ and $g(0)=0$, the value $f(u)$ encodes the 
    support of $u$ in binary, and therefore its domain partition is $\mathcal{S}$.
\end{enumerate}

The next lemma establishes the existence of a full-size clique in the partition graph of the support partition of $\mathbb{F}_q^k$.

\begin{lemma}\label{lem:SP}
    The partition graph of the support partition $\mathcal{S}=\{S_A : A \subseteq [k]\}$ contains a full size clique of size $2^k$.
\end{lemma}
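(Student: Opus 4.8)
The plan is to exhibit an explicit set of $2^k$ representative vectors, one from each support block $S_A$, and show that this set forms a clique in $G_{\mathcal{S}}$. For each $A \subseteq [k]$, define the representative $u^{(A)} \in \mathbb{F}_q^k$ by setting $u^{(A)}_i = 1$ if $i \in A$ and $u^{(A)}_i = 0$ if $i \notin A$; equivalently, $u^{(A)}$ is the $0$-$1$ indicator vector of $A$. Clearly $\mathrm{supp}(u^{(A)}) = A$, so $u^{(A)} \in S_A$, and the set $\mathcal{C} = \{u^{(A)} : A \subseteq [k]\}$ contains exactly one vertex from each of the $2^k$ blocks of $\mathcal{S}$.

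The key step is to compute the block distance $d(S_A, S_B)$ for distinct $A, B \subseteq [k]$ and compare it with $d(u^{(A)}, u^{(B)})$. First I would observe that for the two indicator vectors, a coordinate $i$ contributes to $d(u^{(A)}, u^{(B)})$ precisely when $i \in A \Delta B$, so $d(u^{(A)}, u^{(B)}) = |A \Delta B|$. For the lower bound on the block distance, take arbitrary $x \in S_A$ and $y \in S_B$: for each $i \in A \setminus B$ we have $x_i \neq 0 = y_i$, and for each $i \in B \setminus A$ we have $y_i \neq 0 = x_i$, so $x$ and $y$ differ in every coordinate of $A \Delta B$, giving $d(x,y) \ge |A \Delta B|$. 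Hence $d(S_A, S_B) \ge |A \Delta B| = d(u^{(A)}, u^{(B)})$, and since $u^{(A)} \in S_A$, $u^{(B)} \in S_B$ realize this value, we conclude $d(S_A, S_B) = d(u^{(A)}, u^{(B)})$.

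By the definition of the partition graph $G_{\mathcal{S}}$, this equality means that $u^{(A)}$ and $u^{(B)}$ are adjacent for every pair of distinct $A, B \subseteq [k]$. Therefore $\mathcal{C}$ is a clique of size $2^k$ in $G_{\mathcal{S}}$, which is a full-size clique since $|\mathcal{S}| = 2^k$. I do not anticipate a serious obstacle here: the argument is parallel in spirit to the proofs of Lemma~\ref{lem:coset_gen} and Lemma~\ref{lem:HWF}, and the only mild subtlety is noting that equality of $x_i$ and $y_i$ can fail on $A \cap B$ (when $x_i \neq y_i$ are both nonzero), which is why we only get a lower bound $d(x,y) \ge |A\Delta B|$ rather than equality for general $x, y$ — but this is exactly enough, since the block distance is a minimum and the chosen indicator representatives attain it.
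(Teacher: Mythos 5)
Your proposal is correct and matches the paper's proof essentially verbatim: both use the indicator-type representatives $u_A$ (the paper fixes an arbitrary nonzero $a\in\mathbb{F}_q$ where you use $1$, which is equally valid), establish $d(S_A,S_B)\ge |A\Delta B|$ for arbitrary block elements, and observe that the representatives attain this bound, yielding adjacency and hence a clique of size $2^k$. No gaps.
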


\begin{proof}
    Let $x\in S_A$ and $y\in S_B$ for some different subsets $A$ and $B$ of $[k]$. That means $\mathrm{supp}(x)=A$ and $\mathrm{supp}(y)=B$. Then 
    $$d(x,y)=|A\Delta B|+\left|\{i\in A \cap B : x_i \neq y_i\}\right| \geq |A \Delta B|,$$
    where $\Delta$ denotes the symmetric difference of two sets.
    Therefore, we have $d(S_A, S_B) \geq |A\Delta B|$.
    On the other hand, fix a non zero symbol $a\in \mathbb{F}_q$, and define a vector $u_A=\left((u_A)_1, (u_A)_2, \ldots,(u_A)_k\right)$ for each $A\subseteq[k]$ such that
     $$(u_A)_i=\begin{cases}
        0 & \text{if } i \notin A \\
        a & \text{if } i \in A
    \end{cases}.$$
    Then $\mathrm{supp}(u_A)=A$, so $u_A \in S_A$. Furthermore, for $A,B\subseteq[k], A \neq B$, we have $d(u_A, u_B)=|A \Delta B|$. Combining with the previous inequality, we obtain
$$d(S_A, S_B)=d(u_A, u_B)= |A\Delta B| \quad \text{for all } A\neq B.$$
By the definition of partition graph, there is an edge between $u_A$ and $u_B$ for all $A, B\subseteq[k], A\neq B$, and the set $\mathcal{C}=\{u_A: A\subseteq[k]\}$ with $|\mathcal{C}|=2^k$ forms a clique in the partition graph $\mathcal{G}_\mathcal{S}$.
\end{proof}

In the next example, we consider the support partition of $\mathbb{F}_3^3$ and generate a clique for it.

\begin{example}\label{ex10}
Consider the space $\mathbb{F}_3^3$.
For each subset $A\subseteq\{1,2,3\}$, define the block
$
S_A=\{x\in\mathbb{F}_3^3 : \mathrm{supp}(x)=A\}.
$
Then the support partition is $
\mathcal{S}=\{S_A : A\subseteq\{1,2,3\}\}$ consisting of $2^3=8$ blocks:
\begin{align*}
S_{\emptyset} &= \{(0,0,0)\},\\
S_{\{1\}} &= \{(1,0,0),(2,0,0)\},\\
S_{\{2\}} &= \{(0,1,0),(0,2,0)\},\\
S_{\{3\}} &= \{(0,0,1),(0,0,2)\},\\
S_{\{1,2\}} &= \{(1,1,0), (2,2,0), (1,2,0), (2,1,0)\},\\
S_{\{1,3\}} &= \{(1,0,1), (2,0,2), (1,0,2), (2,0,1)\},\\
S_{\{2,3\}} &= \{(0,1,1), (0,2,2), (0,1,2), (0,2,1)\},\\
S_{\{1,2,3\}} &= \{(1,1,1), (2,2,2), (1,1,2), (1,2,1), (2,1,1), (1,2,2), (2,1,2), (2,2,1)\}.
\end{align*}
Using the representative vectors $u_A$ defined in Lemma~\ref{lem:SP} with $a=1$, we obtain the clique 
$$
\mathcal{C}= \{(0,0,0), (1,0,0), (0,1,0), (0,0,1),(1,1,0), (1,0,1), (0,1,1), (1,1,1)\},
$$
shown in Fig. \ref{fig:ex10}.

\begin{figure}[h]
\centering
\begin{tikzpicture}[scale=1.5]

    \def\r{2}

    \coordinate (v1) at (90:\r);      
    \coordinate (v2) at (45:\r);      
    \coordinate (v3) at (0:\r);       
    \coordinate (v4) at (315:\r);     
    \coordinate (v5) at (270:\r);     
    \coordinate (v6) at (225:\r);     
    \coordinate (v7) at (180:\r);     
    \coordinate (v8) at (135:\r);     

    \foreach \i in {1,...,8} {
        \pgfmathtruncatemacro{\nexti}{\i+1} \foreach \j in {\nexti,...,8} {
            \draw[thin] (v\i)--(v\j);
        }
    }
    \foreach \i in {1,...,8} {
        \fill (v\i) circle (1.5pt);
    }

    \node[above]          at (v1) {$(0,0,0)$};
    \node[above right]    at (v2) {$(1,0,0)$};
    \node[right]          at (v3) {$(0,1,0))$};
    \node[below right]    at (v4) {$(0,0,1))$};
    \node[below]          at (v5) {$(1,1,0)$};
    \node[below left]     at (v6) {$(1,0,1)$};
    \node[left]           at (v7) {$(0,1,1)$};
    \node[above left]     at (v8) {$(1,1,1)$};

\end{tikzpicture}
\caption{Representation of the clique $\mathcal{C}$ from Example \ref{ex10}.}
\label{fig:ex10}
\end{figure}
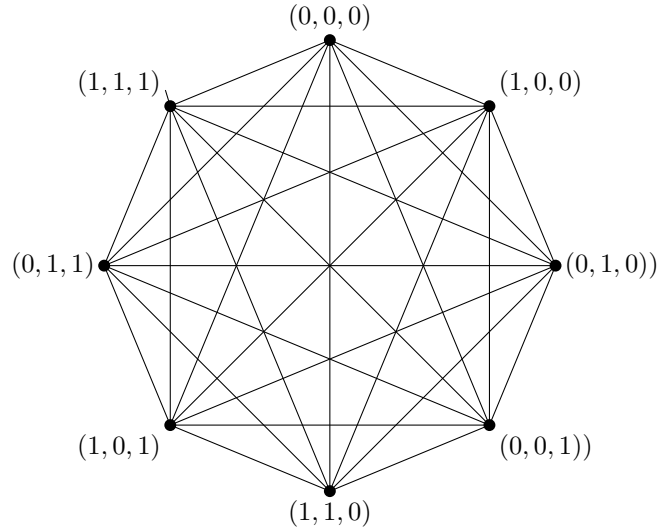

\end{example}

\begin{example}\label{ex:F4support}
Consider the space $\mathbb{F}_4^2$, where 
$\mathbb{F}_4 = \{0,1,\omega,\omega^2\}$
with $\omega$ satisfying $\omega^2 = \omega + 1$.
For each subset $A \subseteq \{1,2\}$, define the block
$S_A = \{x \in \mathbb{F}_4^2 : \mathrm{supp}(x)=A\}.$
Then the support partition $\mathcal{S}=\{S_A : A\subseteq\{1,2\}\}$
consists of $2^2=4$ blocks:
\begin{align*}
S_{\emptyset} 
&= \{(0,0)\},\\[2pt]
S_{\{1\}} 
&= \{(1,0), (\omega,0), (\omega^2,0)\},\\[2pt]
S_{\{2\}} 
&= \{(0,1), (0,\omega), (0,\omega^2)\},\\[2pt]
S_{\{1,2\}} 
&= \{(1,1),\ (1,\omega),\ (1,\omega^2), (\omega,1),\ (\omega,\omega),\ (\omega,\omega^2), (\omega^2,1),\ (\omega^2,\omega),\ (\omega^2,\omega^2)\}.
\end{align*}

Now, following the construction in Lemma~\ref{lem:SP} with $a=1\in \mathbb{F}_4$, we get a clique
$$
\mathcal{C} = \{u_A : A\subseteq\{1,2\}\}
= \{(0,0), (1,0), (0,1), (1,1)\}
$$
of size $2^2=4$ in the partition graph $G_{\mathcal{S}}$ of the support partition of $\mathbb{F}_4^2$.
\end{example}

Combining Lemma~\ref{lem:SP} with Theorem~\ref{thm:clique}, we obtain the following result.

\begin{corollary}\label{col:SP1}
For any positive integers $k$ and $t$, and for a fixed $a\in\mathbb{F}_q$, 
define a vector $u_A=\left((u_A)_1,\ldots,(u_A)_k\right)$ by
$(u_A)_i = a$ if $i\in A$ and $(u_A)_i = 0$ otherwise, for each $A\subseteq [k]$.
Then
$$
r_{\mathcal{S}}(k,t)
    = N\big(D_{\mathcal{S}}(t \,;\, u_A : A\subseteq [k])\big).
$$
\end{corollary}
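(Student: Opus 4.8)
The plan is to obtain this corollary as an immediate consequence of Lemma~\ref{lem:SP} together with Theorem~\ref{thm:clique}; essentially all the substantive work has already been done. First I would record that the support partition $\mathcal{S}=\{S_A : A\subseteq[k]\}$ has exactly $E=2^k$ blocks, one for each subset $A\subseteq[k]$: indeed $S_A$ is nonempty for every such $A$, witnessed by the vector $u_A$ that places a fixed nonzero symbol $a$ in the coordinates indexed by $A$ and $0$ elsewhere.

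Next I would invoke Lemma~\ref{lem:SP}, which produces a clique of size $2^k$ in the partition graph $G_{\mathcal{S}}$. Inspecting its proof, this clique is precisely $\mathcal{C}=\{u_A : A\subseteq[k]\}$ with $u_A$ the vector described in the statement of the corollary (using the same fixed nonzero $a$). Hence $\mathcal{C}$ is a clique in $G_{\mathcal{S}}$ whose size $2^k$ equals the number $E=|\mathcal{S}|$ of blocks of $\mathcal{S}$.

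Finally, I would apply Theorem~\ref{thm:clique} to the partition $\mathcal{P}=\mathcal{S}$ and the clique $\mathcal{C}$. Since $|\mathcal{C}|=E$, the theorem asserts that the vertices of $\mathcal{C}$ form a set of block representatives and that
$$
r_{\mathcal{S}}(k,t)=N\bigl(\mathcal{D}_{\mathcal{S}}(t;\,u_A : A\subseteq[k])\bigr),
$$
which is exactly the claimed identity.

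Since both ingredients are already established, there is no real obstacle here; the only point deserving a line of care is verifying that the representatives named in the corollary coincide with those used in the proof of Lemma~\ref{lem:SP}, and that $a$ must be taken nonzero (otherwise all the $u_A$ collapse to the zero vector and fail to represent distinct supports). One could equally well phrase the argument in a single sentence: Lemma~\ref{lem:SP} supplies the hypothesis of Theorem~\ref{thm:clique} for $\mathcal{P}=\mathcal{S}$, and Theorem~\ref{thm:clique} then delivers the conclusion.
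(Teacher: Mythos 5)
Your proposal is correct and is exactly the paper's argument: the corollary is stated as an immediate consequence of Lemma~\ref{lem:SP} (which exhibits the clique $\{u_A : A\subseteq[k]\}$ of full size $2^k$ in $G_{\mathcal{S}}$) combined with Theorem~\ref{thm:clique}. Your side remark that $a$ must be nonzero is a fair catch, since the corollary's phrasing "for a fixed $a\in\mathbb{F}_q$" omits this, while Lemma~\ref{lem:SP} explicitly fixes a nonzero symbol.
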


In the following corollary, we compute a lower bound on the redundancy of a 
$(\mathcal{S},t)$-encoding over $\mathbb{F}_q$ using the Plotkin bound given in 
Lemma~\ref{lem:RRHH}, and a simple upper bound using an ECC.

\begin{corollary}\label{col:SP2}
    For any positive integers $k$ and $t$, the redundancy of a 
    $(\mathcal{S},t)$-encoding over $\mathbb{F}_q$ satisfies
    $$
        \frac{2^k q}{4^k (q-1) - a(q-a)} 
        \sum_{s=1}^{\min (k, 2t)} (2t+1-s) \binom{k}{s} 
        \leq r_{\mathcal{S}}(k,t)
        \leq N_q(2^k,2t),
    $$
    where $a = 2^k \bmod q$, and $N_q(M,d)$ denotes the minimum length of a 
    $q$-ary ECC with $M$ codewords and minimum distance $d$.
\end{corollary}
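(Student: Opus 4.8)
The plan is to mirror the proof of Corollary~\ref{col:HWF2}, replacing the weight partition by the support partition and exploiting the full-size clique furnished by Lemma~\ref{lem:SP} together with Corollary~\ref{col:SP1}. Fix a nonzero $a\in\mathbb{F}_q$ and let $u_A$, $A\subseteq[k]$, be the representative vectors defined in Lemma~\ref{lem:SP}. By Corollary~\ref{col:SP1} we have $r_{\mathcal{S}}(k,t)=N\big(\mathcal{D}_{\mathcal{S}}(t;u_A:A\subseteq[k])\big)$, so it suffices to bound $N(\mathcal{D})$ for the $2^k\times 2^k$ matrix $\mathcal{D}=\mathcal{D}_{\mathcal{S}}(t;u_A:A\subseteq[k])$, whose $(A,B)$-entry for $A\neq B$ equals $\max(2t+1-d(u_A,u_B),0)=\max(2t+1-|A\Delta B|,0)$, using $d(u_A,u_B)=|A\Delta B|$ as established in Lemma~\ref{lem:SP}.

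For the lower bound, I would first evaluate the sum of the entries of $\mathcal{D}$ over unordered index pairs by grouping them according to the size $s=|A\Delta B|$ of the symmetric difference. For each fixed $A$ there are exactly $\binom{k}{s}$ subsets $B$ with $|A\Delta B|=s$, so the number of unordered pairs $\{A,B\}$ with symmetric difference of size $s$ is $2^{k-1}\binom{k}{s}$, and all terms with $s>2t$ vanish. Hence
$$\sum_{\{A,B\}}[\mathcal{D}]_{A,B}=2^{k-1}\sum_{s=1}^{\min(k,2t)}(2t+1-s)\binom{k}{s}.$$
Applying the Plotkin-type bound of Lemma~\ref{lem:RRHH} with $M=2^k$ and $a=2^k\bmod q$, and simplifying $2q\cdot 2^{k-1}=2^k q$, yields exactly the stated lower bound.

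For the upper bound, I would take a $q$-ary error-correcting code $\{p_A:A\subseteq[k]\}$ of length $N(2^k,2t)$, size $2^k$, and minimum distance $2t$, indexed by the subsets of $[k]$, and verify that it is a $\mathcal{D}$-code: for distinct $A,B$ we have $d(p_A,p_B)\ge 2t$, whereas $[\mathcal{D}]_{A,B}=\max(2t+1-|A\Delta B|,0)\le 2t$ since $|A\Delta B|\ge 1$. Thus $N(\mathcal{D})\le N(2^k,2t)$, and combined with Corollary~\ref{col:SP1} this gives $r_{\mathcal{S}}(k,t)\le N(2^k,2t)$.

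The steps are routine; the only point needing a little care is the combinatorial count of the number of unordered subset pairs with a prescribed symmetric-difference size, so that the resulting factor $2^{k-1}$ combines correctly with the factor $2q$ from Lemma~\ref{lem:RRHH} to produce the constant $2^k q$ in the statement. In contrast with the consecutive-grouped-weight setting, no case analysis (as in the upper-bound part of Corollary~\ref{col:HWF2}) is required here, because every pair of distinct support blocks is at distance at least $1$, so a single error-correcting code of minimum distance $2t$ already dominates all off-diagonal entries of the PDRM.
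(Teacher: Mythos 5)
Your proposal is correct and follows essentially the same route as the paper: reduce via Corollary~\ref{col:SP1} to the clique representatives, count the $2^{k-1}\binom{k}{s}$ unordered pairs at symmetric-difference size $s$, apply Lemma~\ref{lem:RRHH} for the lower bound, and dominate all off-diagonal PDRM entries (each at most $2t$) by a $(N(2^k,2t),2^k,2t)$ code for the upper bound. No substantive differences from the paper's argument.
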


\begin{proof}
From Corollary~\ref{col:SP1}, we know that $ r_{\mathcal{S}}(k,t)
    = N\left(\mathcal{D}_{\mathcal{S}}(t;u_A : A\subseteq [k])\right),$
so we need to bound 
$N\left(\mathcal{D}_{\mathcal{S}}(t;u_A : A\subseteq [k])\right)$.
Let $U=\{u_A : A \subseteq [k]\}$ which is equivalent to the space $\mathbb{F}_2^k$. If the rows and columns of the $2^k \times 2^k$ PDRM 
$\mathcal{D}_{\mathcal{S}}(t;u_A : A\subseteq [k])$ are indexed by the subsets of $[k]$, then the $(A,B)$-th entry of this PDRM is
$$
    \mathcal{D}_{A,B}=\max(2t+1-d(u_A, u_B),0), \quad A\neq B.
$$
Therefore, the sum of the upper triangular part of this matrix is
$$
    \Gamma(k,t)
    = \sum_{s=1}^k N_s \max(2t+1-s,0)
    = \sum_{s=1}^{\min(k,2t)} N_s (2t+1-s),
$$
where $N_s$ denotes the number of unordered pairs $(u_A, u_B)$ in $U$ at 
distance~$s$.

For any $s\in [k]$, the number of ordered pairs $(u_A, u_B)$ in $U$ at 
distance $s$ is $2^k \binom{k}{s}$, as $u_A$ can be chosen in $2^k$ ways and, 
by flipping any $s$ coordinates of $u_A$, we obtain a vector $u_B$ at distance 
$s$. Therefore, the number of unordered pairs $(u_A, u_B)$ in $U$ at distance $s$ 
is $N_s = 2^{k-1} \binom{k}{s}$. Hence,
$$
    \Gamma(k,t)
    = \sum_{s=1}^{\min(k,2t)} 2^{k-1} (2t+1-s)\binom{k}{s}.
$$
Finally, applying the Plotkin bound from Lemma~\ref{lem:RRHH}, we obtain
$$
N\left(\mathcal{D}_{\mathcal{S}}(t;u_A:A\subseteq[k])\right)\;\ge\;
\frac{2^k q}{4^k (q-1) - a(q-a)}
\sum_{s=1}^{\min(k,2t)} (2t+1-s)\binom{k}{s},
$$
where $a = 2^k \bmod q$. Furthermore, since by the definition of the PDRM,
$$
\mathcal{D}_{A,B} = \max(2t+1-d(u_A, u_B),0) \le 2t,
$$
a $q$-ary ECC of length $N_q(2^k,2t)$, size $2^k$, and minimum 
distance $2t$ is a $\mathcal{D}_{\mathcal{S}}(t;u_A: A\subseteq[k])$-code. 
Therefore, we obtain the claimed upper bound.
\end{proof}

For the parameters in Example~\ref{ex10}, namely $q=3$, $k=3$, and $t=2$, 
Corollary~\ref{col:SP2} gives the lower bound 
$r_{\mathcal{S}}(k,t) \ge 4.38$, and hence $r_{\mathcal{S}}(k,t)\ge 5$.  
A $D$-code of length $6$ for the PDRM corresponding to the clique $\mathcal{C}$ 
is given by
\begin{align*}
z_{\emptyset}   &= (0,0,0,0,0,0), &
z_{\{1\}}       &= (0,0,1,1,1,1),\\
z_{\{2\}}       &= (0,0,1,2,2,2), &
z_{\{3\}}       &= (0,1,0,1,1,2),\\
z_{\{1,2\}}     &= (0,1,0,0,2,1), &
z_{\{1,3\}}     &= (0,0,2,0,2,2),\\
z_{\{2,3\}}     &= (0,0,2,1,0,1), &
z_{\{1,2,3\}}   &= (0,0,0,2,1,0).
\end{align*}
Thus, for this example, the achievable redundancy is only one symbol larger than the lower bound.

\section{Block-preserving contraction of partitions}\label{sec:BPC}


The existence of a full-size clique in the partition graph guaranties a maximal reduction in the size of the problem of constructing an FCPC with optimal length from the entire space $\mathbb{F}_q^k$ to a set of size $E$, where $E$ denotes the number of blocks in the partition. However, such full-size cliques need not exist for an arbitrary partition, as demonstrated in Lemma~\ref{lem:HWDF} for certain
consecutive grouped weight partitions, and even when they do, identifying them may be computationally difficult. In these situations, one may seek an intermediate solution that, while not achieving the minimal reduction to $E$, still reduces the search space to a subset of $\mathbb{F}_q^k$ that is strictly
smaller than $q^k$ and larger than $E$.

In this section, we introduce the notion of a \emph{block-preserving contraction},
which maps the entire space $\mathbb{F}_q^k$ onto a smaller subset $U$ while
preserving block membership and non-increasing inter-block distances. This notion
naturally generalizes the full-size clique condition, since a full-size clique
corresponds precisely to a block-preserving contraction of minimum possible
size~$E$. We show that whenever a block-preserving contraction exists for a
partition $\mathcal{P}$, the optimal redundancy $r_{\mathcal{P}}(k,t)$ can be
determined solely from the vectors in $U$ together with the induced distance
constraints.

\begin{definition}[Block-preserving contraction]
Let $\mathcal{P} = \{P_1,\ldots,P_E\}$ be a partition of $\mathbb{F}_q^k$.
We say that a subset $U \subseteq \mathbb{F}_q^k$ together with a map
$\phi:\mathbb{F}_q^k \to U$ forms a block-preserving contraction of $\mathcal{P}$ if
\begin{itemize}
    \item $\phi(u) \in P_i$ whenever $u \in P_i$, and
    \item for all $u,v$ belonging to different blocks of $\mathcal{P}$,
    $$
        d(\phi(u),\phi(v)) \le d(u,v).
    $$
\end{itemize}
In this case, we refer to the pair $(U,\phi)$ as a block-preserving contraction of $\mathcal{P}$.
\end{definition}

A trivial block-preserving contraction of any partition of $\mathbb{F}_q^k$ is $(U=\mathbb{F}_q^k, \phi=I)$, where $I$ denotes the identity map on $\mathbb{F}_q^k$. Also, for any block-preserving contraction $(U, \phi)$ of a partition $\mathcal{P}=\{P_1, P_2, \ldots, P_E\}$, we have
$$E\leq |U| \leq q^k.$$ Now we give an example for block-preserving contraction as follows.

\begin{example}\label{ex12}
Consider the partition $\mathcal{P} = \{P_1,P_2,P_3\}$ of $\mathbb{F}_2^4$ defined by
\begin{align*}
P_1 &= \{0000,\,0001,\,0010,\,0100\},\\[1mm]
P_2 &= \{0011,\,0101,\,0110,\,1001,\,1010,\,1100,\,0111,\,1011,\,1101,\,1000,\,1110\},\\[1mm]
P_3 &= \{1111\}.
\end{align*}
Here we have $d(P_1,P_2)=1$, $d(P_2,P_3)=1$, and $d(P_1,P_3)=3$. For the existence
of a full-size clique, there must exist vectors $x_1\in P_1$, $x_2\in P_2$, and
$x_3\in P_3$ that simultaneously realize these distances. However, by the triangle
inequality,
$$
d(x_1,x_3) \le d(x_1,x_2) + d(x_2,x_3) = 1 + 1 = 2,
$$
which makes it impossible to have $d(x_1,x_3)=3$. Therefore, no full-size clique
exists in the partition graph of the partition $\mathcal{P}$.

Define the set $U = \{0001,\,0101,\,1101,\,1111\},$
and define $\phi:\mathbb{F}_2^4 \to U$ by
$$
\phi(x)=
\begin{cases}
0001, & x\in P_1,\\[1mm]
0101, & x\in \{0011,0101,0110,1001,1010,1100,1000\},\\[1mm]
1101, & x\in \{0111,1011,1101,1110\},\\[1mm]
1111, & x = 1111.
\end{cases}
$$
Then $(U,\phi)$ is a block-preserving contraction of $\mathcal{P}$.

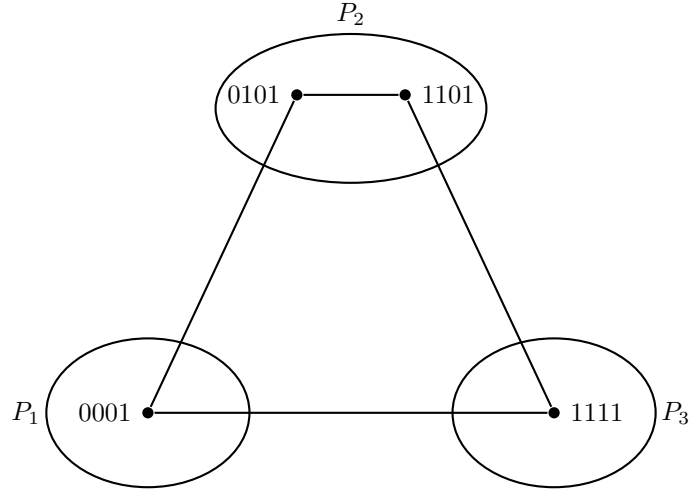
\begin{figure}[!h]
\centering
\begin{tikzpicture}[thick,scale=0.9]


\draw (0,3) ellipse (2.0 and 1.1);
\node at (0,4.4) {$P_2$};

\draw (-3,-1.5) ellipse (1.5 and 1.1);
\node at (-4.8,-1.5) {$P_1$};

\draw (3,-1.5) ellipse (1.5 and 1.1);
\node at (4.8,-1.5) {$P_3$};


\node[circle,fill=black,inner sep=1.5pt,label=left:{$0001$}] (u1) at (-3,-1.5) {};

\node[circle,fill=black,inner sep=1.5pt,label=left:{$0101$}] (u2) at (-0.8,3.2) {};
\node[circle,fill=black,inner sep=1.5pt,label=right:{$1101$}] (u3) at (0.8,3.2) {};

\node[circle,fill=black,inner sep=1.5pt,label=right:{$1111$}] (u4) at (3,-1.5) {};


\draw (u1) -- (u2);
\draw (u2) -- (u3);
\draw (u3) -- (u4);

\draw (u1) -- (u4);

\end{tikzpicture}
\caption{Representation of block-preserving contraction $(U,\phi)$ given in Example~\ref{ex12}.}
\label{fig:ex12}
\end{figure}

\end{example}

The following theorem shows that a block-preserving contraction is sufficient to represent the optimal redundancy.

\begin{theorem}\label{thm:BPC}
Let $\mathcal{P}$ be a partition of $\mathbb{F}_q^k$ and suppose that there exists a block-preserving contraction $(U,\phi)$ of $\mathcal{P}$.
Then
$$
    r_{\mathcal{P}}(k,t)
    = N\big(\mathcal{D}_{\mathcal{P}}(t, u : u \in U)\big).
$$
\end{theorem}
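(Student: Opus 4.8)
The strategy is to prove the equality by establishing the two inequalities separately, using the characterization of $r_{\mathcal{P}}(k,t)$ in terms of $\mathcal{D}$-codes for partition distance requirement matrices together with the general bounds in Theorem~\ref{thm:LUB_partition}. Write $\mathcal{D}_U := \mathcal{D}_{\mathcal{P}}(t, u : u \in U)$ for the PDRM indexed by the vectors of $U$.

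For the inequality $r_{\mathcal{P}}(k,t) \ge N(\mathcal{D}_U)$: this is immediate from the lower bound half of Theorem~\ref{thm:LUB_partition}, since $U \subseteq \mathbb{F}_q^k$ and the lower bound $N(\mathcal{D}_{\mathcal{P}}(t,u_1,\ldots,u_m)) \le r_{\mathcal{P}}(k,t)$ holds for \emph{any} choice of vectors $\{u_1,\ldots,u_m\} \subseteq \mathbb{F}_q^k$; applying it to the vectors of $U$ gives exactly $N(\mathcal{D}_U) \le r_{\mathcal{P}}(k,t)$. No use of the contraction $\phi$ is needed here.

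For the reverse inequality $r_{\mathcal{P}}(k,t) \le N(\mathcal{D}_U)$, the plan is to take an optimal $\mathcal{D}_U$-code $\{z_u : u \in U\}$ of length $r = N(\mathcal{D}_U)$ and use the map $\phi$ to build a $(\mathcal{P},t)$-encoding $\mathcal{C}:\mathbb{F}_q^k \to \mathbb{F}_q^{k+r}$. The natural definition is $\mathcal{C}(w) = (w, z_{\phi(w)})$ for each $w \in \mathbb{F}_q^k$ — that is, append to $w$ the redundancy block assigned to its image $\phi(w)$ in $U$. This is clearly systematic. To verify the distance condition, take $w \in P_i$ and $w' \in P_j$ with $i \ne j$. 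By the block-preserving property, $\phi(w) \in P_i$ and $\phi(w') \in P_j$, so $\phi(w)$ and $\phi(w')$ lie in different blocks and the corresponding PDRM entry is $[\mathcal{D}_U]_{\phi(w),\phi(w')} = \max(2t+1 - d(\phi(w),\phi(w')), 0)$; hence $d(z_{\phi(w)}, z_{\phi(w')}) \ge 2t+1 - d(\phi(w),\phi(w'))$. Then, using that $\phi$ does not increase inter-block distances,
\begin{align*}
d(\mathcal{C}(w), \mathcal{C}(w')) &= d(w,w') + d(z_{\phi(w)}, z_{\phi(w')}) \\
&\ge d(w,w') + 2t+1 - d(\phi(w),\phi(w')) \\
&\ge d(\phi(w),\phi(w')) + 2t+1 - d(\phi(w),\phi(w')) = 2t+1.
\end{align*}
One small case to handle: if $\phi(w) = \phi(w')$ then $d(\phi(w),\phi(w')) = 0$; this is consistent with the above since then $w$ and $w'$ would have to satisfy $d(w,w') \ge d(\phi(w),\phi(w')) = 0$ trivially, but we also need $d(z_{\phi(w)},z_{\phi(w')}) = 0 \ge 2t+1 - 0$, which forces the $\max$ to contribute $0$ only when $2t+1 \le 0$ — impossible. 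In fact this case cannot arise for $w,w'$ in distinct blocks, precisely because $\phi$ is block-preserving: $\phi(w) \in P_i$ and $\phi(w') \in P_j$ with $i \ne j$ means $\phi(w) \ne \phi(w')$. So the computation above is always valid. This shows $\mathcal{C}$ is a $(\mathcal{P},t)$-encoding with redundancy $r = N(\mathcal{D}_U)$, hence $r_{\mathcal{P}}(k,t) \le N(\mathcal{D}_U)$, and combining with the first inequality completes the proof.

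The only subtlety — and the step I would be most careful about — is ensuring that $\mathcal{C}$ as defined is genuinely systematic (the first $k$ coordinates are exactly $w$, which they are) and that the inequality $d(\phi(w),\phi(w')) \le d(w,w')$ in the definition of block-preserving contraction is exactly what is quoted for pairs in distinct blocks; since the contraction's second bullet is stated precisely for "all $u,v$ belonging to different blocks," and the distance condition in a $(\mathcal{P},t)$-encoding only concerns such pairs, there is no gap. Everything else is a routine application of the DRM/$\mathcal{D}$-code machinery already developed.
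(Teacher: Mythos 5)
Your proof is correct and follows essentially the same route as the paper: the lower bound is immediate from Theorem~\ref{thm:LUB_partition} applied to the vectors of $U$, and the upper bound is obtained by taking an optimal $\mathcal{D}_{\mathcal{P}}(t, u : u \in U)$-code $\{z_u\}$ and defining $\mathcal{C}(x) = (x, z_{\phi(x)})$, with the same two-step distance estimate. Your extra remark that $\phi(w) \ne \phi(w')$ whenever $w,w'$ lie in distinct blocks is a sensible (if implicit in the paper) sanity check and introduces no discrepancy.
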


\begin{proof}
Let $N\big(\mathcal{D}_{\mathcal{P}}(t, u : u \in U)\big)=r'$. From Theorem \ref{thm:LUB_partition}, we have 
$r' \le r_{\mathcal{P}}(k,t).$
Now we construct a $(\mathcal{P},t)$-encoding with redundancy $r'$ to prove the reverse inequality.
By the definition of $N\big(\mathcal{D}_{\mathcal{P}}(t, u : u \in U)\big)$, there exists a $\mathcal{D}_{\mathcal{P}}(t, u : u \in U)$-code $\{z_u : u \in U\}$ with redundancy $r'$.
We use the block-preserving contraction $(U,\phi)$ to obtain an encoding $\mathcal{C} : \mathbb{F}_q^k \rightarrow \mathbb{F}_q^{k+r'}$ defined as 
$$
\mathcal{C}(x) = \bigl(x,\, z_{\phi(x)}\bigr) \quad \text{for all } x \in \mathbb{F}_q^k.
$$
Thus, each $x$ is assigned the same redundancy vector as its contracted image $\phi(x)$.

We claim that $\mathcal{C}$ is a $(\mathcal{P},t)$-encoding.
Let $x,y \in \mathbb{F}_q^k$ belong to different blocks of $\mathcal{P}$, say $x \in P_i$, $y \in P_j$ with $i \ne j$.
Since $(U,\phi)$ is a block-preserving contraction, we have $\phi(x) \in P_i, \ \phi(y) \in P_j,$ and 
$$d(\phi(x), \phi(y)) \le d(x,y).$$
So $\phi(x)$ and $\phi(y)$ also lie in different blocks of $\mathcal{P}$, and both lie in $U$. By the definition of a $\mathcal{D}$-code, we have 
$$d(z_{\phi(x)}, z_{\phi(y)}) \geq 2t+1-d(\phi(x), \phi(y)) \geq 2t+1- d(x, y).$$
This means  
$$d(\mathcal{C}(x),\mathcal{C}(y))=d(x,y)+d(z_{\phi(x)}, z_{\phi(y)})\geq 2t+1.$$
This holds for all $x,y$ in different blocks of $\mathcal{P}$, so $\mathcal{C}$ is a $(\mathcal{P},t)$-encoding with redundancy $r'$.
Thus
$$
r_{\mathcal{P}}(k,t) \le  r' = N\big(\mathcal{D}_{\mathcal{P}}(t, u : u \in U)\big),
$$
and we have $r_{\mathcal{P}}(k,t)  = N\big(\mathcal{D}_{\mathcal{P}}(t, u : u \in U)\big).$
\end{proof}

In Example \ref{ex12}, it is not possible to reduce the problem size to $E=3$
vectors corresponding to a full-size clique in the partition graph. However, the
block-preserving contraction $(U,\phi)$ still allows a reduction of the problem size
from the entire space of $16$ vectors to a set of $4$ vectors, which serves as the next best
alternative when a full-size clique does not exist.

\begin{remark}
     The notion of a block-preserving contraction generalizes the full-size clique 
condition of Theorem~\ref{thm:clique} (equivalently, the set of representatives 
condition for FCCs in Corollary~\ref{col1}). 
In particular, if a partition admits a block-preserving contraction $(U,\phi)$ 
with $|U| = E$, then the set $U$ forms a full-size clique. 
In other words, any full-size clique 
$\mathcal{C} = \{u_1, u_2, \ldots, u_E\}$ is a block-preserving contraction of the minimum possible size, with the associated map $\phi_{\mathcal{C}}$ defined by $\phi_{\mathcal{C}}(x) = u_i$ whenever $x$ lies in the same block of the partition as $u_i$.
\end{remark}

The following lemma clarifies how block-preserving contractions transfer from a finer partition to a coarser one.

\begin{lemma}\label{lem:BPC1}
Suppose $\mathcal{P}$ and $\mathcal{Q}$ are two partitions of $\mathbb{F}_q^k$ and 
$\mathcal{P}$ is a refinement of $\mathcal{Q}$.  
Then any block-preserving contraction $(U,\phi)$ of $\mathcal{P}$ is also a block-preserving contraction of $\mathcal{Q}$.
\end{lemma}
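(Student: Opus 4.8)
The plan is to unwind the two defining conditions of a block-preserving contraction of $\mathcal{P}$ and check that each one still holds when we replace $\mathcal{P}$ by the coarser partition $\mathcal{Q}$. The only fact I will use is that $\mathcal{P}$ refines $\mathcal{Q}$, i.e., every block of $\mathcal{P}$ is contained in some block of $\mathcal{Q}$; equivalently, for each block $P\in\mathcal{P}$ there is a unique block $Q\in\mathcal{Q}$ with $P\subseteq Q$.

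\begin{proof}
Let $(U,\phi)$ be a block-preserving contraction of $\mathcal{P}$; we verify the two conditions for $\mathcal{Q}$.

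First, let $Q\in\mathcal{Q}$ and $u\in Q$. Since $\mathcal{P}$ is a partition of $\mathbb{F}_q^k$, there is a block $P\in\mathcal{P}$ with $u\in P$, and because $\mathcal{P}$ refines $\mathcal{Q}$, we have $P\subseteq Q'$ for some $Q'\in\mathcal{Q}$; as $u\in P\subseteq Q'$ and $u\in Q$, the blocks $Q$ and $Q'$ intersect and hence $Q'=Q$, so $P\subseteq Q$. By the first property of $(U,\phi)$ for $\mathcal{P}$, we get $\phi(u)\in P\subseteq Q$. Thus $\phi(u)\in Q$ whenever $u\in Q$.

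Second, suppose $u,v\in\mathbb{F}_q^k$ lie in different blocks of $\mathcal{Q}$, say $u\in Q_1$ and $v\in Q_2$ with $Q_1\neq Q_2$. Let $P_1,P_2\in\mathcal{P}$ be the blocks of $\mathcal{P}$ containing $u$ and $v$, respectively. If $P_1=P_2$, then $u$ and $v$ lie in a common block of $\mathcal{P}$, which by the refinement is contained in a single block of $\mathcal{Q}$, contradicting $Q_1\neq Q_2$. Hence $P_1\neq P_2$, so $u$ and $v$ lie in different blocks of $\mathcal{P}$, and the second property of $(U,\phi)$ for $\mathcal{P}$ gives
$$
d(\phi(u),\phi(v))\le d(u,v).
$$
Therefore $(U,\phi)$ satisfies both conditions for $\mathcal{Q}$, and it is a block-preserving contraction of $\mathcal{Q}$.
\end{proof}

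There is essentially no obstacle here: the argument is a direct bookkeeping exercise with partition blocks, and the only point requiring a line of justification is that two distinct $\mathcal{Q}$-blocks cannot share a common $\mathcal{P}$-block, which is immediate from the definition of refinement. The same pattern of reasoning already appears in the proof of Lemma~\ref{lem1}, so this lemma can be viewed as its contraction-level analogue.
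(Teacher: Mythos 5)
Your proof is correct and follows essentially the same route as the paper's: verify the two defining conditions of a block-preserving contraction for $\mathcal{Q}$ by locating, for each vector, the $\mathcal{P}$-block contained in its $\mathcal{Q}$-block. The only difference is that you explicitly justify that vectors in distinct $\mathcal{Q}$-blocks lie in distinct $\mathcal{P}$-blocks (since $\mathcal{Q}$-blocks are disjoint), a step the paper leaves implicit; this is a minor improvement in rigor, not a different argument.
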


\begin{proof}
Let $(U,\phi)$ be a block-preserving contraction of $\mathcal{P}=\{P_1,\ldots,P_m\}$.  
We need to show that it is also a block-preserving contraction of $\mathcal{Q}=\{Q_1,\ldots,Q_s\}$.

\noindent
{\textbf{(1)}} Let $u \in Q_i$ for some $i \in [s]$.  
Since $\mathcal{P}$ is a refinement of $\mathcal{Q}$, 
there exists $P_r$ such that $u \in P_r \subseteq Q_i$.  
By the definition of block-preserving contraction, we have
$$
\phi(u) \in P_r \subseteq Q_i.
$$

\noindent
{\textbf{(2)}} Let $u,v \in \mathbb{F}_q^k$ be from different blocks of $\mathcal{Q}$.  
Say $u \in Q_i$ and $v \in Q_j$ with $i \ne j$.  
Since $\mathcal{P}$ is a refinement of $\mathcal{Q}$, 
there exist $P_r$ and $P_s$ such that
$$
u \in P_r \subseteq Q_i, \qquad v \in P_s \subseteq Q_j.
$$
As $(U,\phi)$ is a block-preserving contraction of $\mathcal{P}$, we have
$
d(\phi(u),\phi(v)) \le d(u,v).
$

Therefore, $(U,\phi)$ is also a block-preserving contraction of $\mathcal{Q}$.
\end{proof}

As we have seen, any block-preserving contraction $(U,\phi)$ of a partition 
$\mathcal{P}$ is also a block-preserving contraction of a coarser partition 
$\mathcal{Q}$. However, $\mathcal{Q}$ can have another block-preserving 
contraction $(U',\phi')$ such that $|U'| < |U|$. 
For example, the support partition $\mathcal{S}$ is a refinement of the weight 
partition $\mathcal{W}$, and $\mathcal{S}$ has a clique
\[
U = \{u_A : A \subseteq [k]\}, \qquad 
(u_A)_i = 0 \text{ if } i \notin A,\; (u_A)_i = 1 \text{ if } i \in A.
\]
By the Lemma \ref{lem:BPC1}, $(U,\phi_U)$ is a block-preserving contraction of $\mathcal{W}$, 
where $\phi_{U}(x) = u_A$ whenever $x$ lies in the same block of $\mathcal{S}$ 
as $u_A$.
However, $\mathcal{W}$ (a coarser partition) has a clique of its own defined as
\[
U' = \{v_0,v_1,\ldots,v_k\},
\qquad
v_i = (\underbrace{1,1,\ldots,1}_{i\ \text{times}},0,0,\ldots,0),
\]
such that $|U'| = k+1 < |U| = 2^k$.  
For our purpose, we want a block-preserving contraction of size as small as 
possible.

The Lemma~\ref{lem:BPC1} leads to the following general statements.

\begin{corollary}\label{col:BPC1}
Using Lemma~\ref{lem:BPC1}, we have the following results.
\begin{enumerate}
    \item 
    The domain partition of any function 
    $f:\mathbb{F}_q^k \to \mathrm{Im}(f)$ of the form 
    $f(x)=g(\mathrm{supp}(x))$ has $(U,\phi_U)$ as a block-preserving contraction, 
    where
    \[
    U = \{u_A : A \subseteq [k]\}, \qquad 
    (u_A)_i = 0 \text{ if } i \notin A,\; (u_A)_i = 1 \text{ if } i \in A,
    \]
    and $\phi_U(x)=u_A$ whenever $x$ lies in the same block of $\mathcal{S}$ as $u_A$.

    \item 
    Let $\mathcal{G}=\{G_1,G_2,\ldots,G_m\}$ be a grouped weight partition of 
    $\mathbb{F}_q^k$. Since the weight partition is finer than any grouped weight partition, $(U',\phi_{U'})$ is a block-preserving contraction of $\mathcal{G}$, where 
    $$
    U'=\{v_i : i=0,1,\ldots,k\}, \qquad v_i = (\underbrace{1,1,\ldots,1}_{i\ \text{times}},0,0,\ldots,0),
    $$
   and $\phi_{U'}(x)=v_{\mathrm{wt}(x)}$ for all $x\in \mathbb{F}_q^k$.

    \item 
    The domain partition of any function 
    $f:\mathbb{F}_q^k \to \mathrm{Im}(f)$ of the form 
    $f(x)=g(\mathrm{wt}(x))$ has $(U',\phi_{U'})$ as a block-preserving contraction.
\end{enumerate}
\end{corollary}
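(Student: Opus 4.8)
The plan is to obtain all three parts as direct consequences of Lemma~\ref{lem:BPC1}, by exhibiting in each case a finer partition for which a full-size clique — equivalently, a block-preserving contraction of minimum size — has already been constructed. The enabling observation is the one recorded in the Remark following Theorem~\ref{thm:BPC}: if $\mathcal{C}=\{u_1,\ldots,u_E\}$ is a full-size clique of $G_{\mathcal{P}}$ and $\phi_{\mathcal{C}}$ sends each $x$ to the unique clique vertex lying in the same block as $x$, then $(\mathcal{C},\phi_{\mathcal{C}})$ is a block-preserving contraction of $\mathcal{P}$. This is immediate: $\phi_{\mathcal{C}}$ preserves blocks by construction, and for $x\in P_i$, $y\in P_j$ with $i\neq j$ one has $d(\phi_{\mathcal{C}}(x),\phi_{\mathcal{C}}(y))=d(u_i,u_j)=d(P_i,P_j)\le d(x,y)$, since $u_i,u_j$ are adjacent in $G_{\mathcal{P}}$ and $d(P_i,P_j)$ is the minimum inter-block distance. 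Thus, once a full-size clique is available, the contraction property comes for free.

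For part~(2), I would first note that the weight partition $\mathcal{W}$ refines every grouped weight partition $\mathcal{G}$, since each block $G_j=\bigcup_{i\in S_j}W_i$ is a union of weight blocks. Lemma~\ref{lem:HWF} (with $a=1$) furnishes the full-size clique $U'=\{v_0,\ldots,v_k\}$ of $G_{\mathcal{W}}$, whose associated map is exactly $\phi_{U'}(x)=v_{\mathrm{wt}(x)}$; hence $(U',\phi_{U'})$ is a block-preserving contraction of $\mathcal{W}$, and Lemma~\ref{lem:BPC1} transfers it to the coarser $\mathcal{G}$.

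For parts~(1) and~(3) I would argue that a function of the form $f=g\circ\mathrm{supp}$ (respectively $f=g\circ\mathrm{wt}$) has domain partition coarser than $\mathcal{S}$ (respectively $\mathcal{W}$): two vectors with equal support (equal weight) receive the same $f$-value, so each block of $\mathcal{P}_f$ is a union of blocks of $\mathcal{S}$ (of $\mathcal{W}$); this is precisely the refinement step inside the proof of Lemma~\ref{lem:composition} applied with $h=\mathrm{supp}$ (with $h=\mathrm{wt}$). Then, for part~(1), Lemma~\ref{lem:SP} (with $a=1$) provides the full-size clique $U=\{u_A:A\subseteq[k]\}$ of $G_{\mathcal{S}}$, with associated map $\phi_U(x)=u_{\mathrm{supp}(x)}$, so $(U,\phi_U)$ is a block-preserving contraction of $\mathcal{S}$; applying Lemma~\ref{lem:BPC1} with $\mathcal{S}$ refining $\mathcal{P}_f$ gives part~(1). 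Part~(3) is the identical argument with $\mathcal{W}$, $U'$, and Lemma~\ref{lem:HWF} in place of $\mathcal{S}$, $U$, and Lemma~\ref{lem:SP}.

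The only delicate point — and the closest thing to an obstacle — is purely notational: one must match the clique representatives produced in Lemmas~\ref{lem:SP} and \ref{lem:HWF}, which are stated for an arbitrary nonzero symbol $a\in\mathbb{F}_q$, with the specific vectors $u_A$ and $v_i$ fixed by taking $a=1$ in the statement of the corollary, and confirm that the maps $\phi_U$ and $\phi_{U'}$ named there coincide with the canonical clique maps $\phi_{\mathcal{C}}$ from the Remark. Both identifications are routine, and no new distance estimate is needed anywhere.
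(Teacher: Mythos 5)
Your proposal is correct and follows essentially the same route the paper intends: the Remark after Theorem~\ref{thm:BPC} identifies full-size cliques (from Lemmas~\ref{lem:SP} and~\ref{lem:HWF}) as minimum-size block-preserving contractions of $\mathcal{S}$ and $\mathcal{W}$, and Lemma~\ref{lem:BPC1} transfers these to the coarser partitions in each of the three parts. The refinement observations and the $a=1$ identification you flag are exactly the (routine) steps the paper leaves implicit.
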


The following corollary holds directly from Theorem \ref{thm:BPC} and Corollary \ref{col:BPC1}.

\begin{corollary}\label{col:BPC_GWP}
    For any grouped weight partition $\mathcal{G}$ of $\mathbb{F}_q^k$, 
    $$r_{\mathcal{G}}(k,t)=N_q(\mathcal{D}_{\mathcal{G}}(t; u_0,u_1,\ldots, u_k)),$$
    where $u_i = (\underbrace{1,1,\ldots,1}_{i\ \text{times}},0,0\ldots,0)$ for all $i\in \{0,1,\ldots, k\}$.
\end{corollary}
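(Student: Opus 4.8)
The plan is to combine the block-preserving contraction furnished by Corollary~\ref{col:BPC1}(2) with the redundancy characterization of Theorem~\ref{thm:BPC}. First I would recall that the weight partition $\mathcal{W}=\{W_0,\ldots,W_k\}$ is finer than every grouped weight partition $\mathcal{G}$, since by definition each block of $\mathcal{G}$ is a union of weight blocks $W_i$. By Lemma~\ref{lem:HWF}, the set $U=\{u_0,u_1,\ldots,u_k\}$ with $u_i=(\underbrace{1,\ldots,1}_{i},0,\ldots,0)$ forms a full-size clique in the partition graph of $\mathcal{W}$; taking $\phi_U(x)=u_{\mathrm{wt}(x)}$, the pair $(U,\phi_U)$ is therefore a block-preserving contraction of $\mathcal{W}$. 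Applying Lemma~\ref{lem:BPC1} to the refinement $\mathcal{W}$ of $\mathcal{G}$ shows that $(U,\phi_U)$ is also a block-preserving contraction of $\mathcal{G}$, which is exactly the content of Corollary~\ref{col:BPC1}(2).

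Next I would invoke Theorem~\ref{thm:BPC}: since $\mathcal{G}$ admits the block-preserving contraction $(U,\phi_U)$, the optimal redundancy satisfies
$$
    r_{\mathcal{G}}(k,t) = N\big(\mathcal{D}_{\mathcal{G}}(t, u : u \in U)\big).
$$
Finally, since $U=\{u_0,u_1,\ldots,u_k\}$ with the $u_i$ as in the statement, the matrix $\mathcal{D}_{\mathcal{G}}(t, u : u\in U)$ is precisely the $(k+1)\times(k+1)$ PDRM $\mathcal{D}_{\mathcal{G}}(t; u_0,u_1,\ldots,u_k)$, which gives the claimed equality.

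Since every ingredient has already been established, there is no real obstacle here; the only point requiring a moment's care is to make explicit that the vectors $u_i$ in the statement coincide with the clique vectors of Lemma~\ref{lem:HWF}, so that the PDRM over $U$ is literally the matrix in the statement. One could alternatively give a self-contained argument by directly checking the two block-preserving contraction axioms for $(U,\phi_U)$ on $\mathcal{G}$: membership in a block $G_j=\bigcup_{i\in S_j}W_i$ depends only on the weight, so $\phi_U(x)=u_{\mathrm{wt}(x)}$ stays in the same block as $x$; and for $x,y$ in distinct blocks, $d(u_{\mathrm{wt}(x)},u_{\mathrm{wt}(y)})=|\mathrm{wt}(x)-\mathrm{wt}(y)|\le d(x,y)$. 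Applying Theorem~\ref{thm:BPC} then finishes it, though this just reproduces the reasoning behind Corollary~\ref{col:BPC1}(2).
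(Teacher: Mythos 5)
Your proof is correct and follows exactly the paper's route: the paper derives this corollary directly from Corollary~\ref{col:BPC1}(2) (the block-preserving contraction $(U',\phi_{U'})$ of any grouped weight partition) together with Theorem~\ref{thm:BPC}. Your added observation that the $u_i$ in the statement are the clique vectors of Lemma~\ref{lem:HWF} with the choice $a=1$ is a harmless and accurate clarification.
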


Since the domain partition of a Hamming weight distribution function $\Delta_T$ is a grouped weight partition, Lemma 3.1 of \cite{GXZZ2025}, which provides the redundancy expression for $\Delta_T$, appears as a special case of the above corollary. 
More generally, Corollary \ref{col:BPC_GWP} applies to any function that depends only on the Hamming weight of its input, i.e., to any function of the form $f(u)=g(wt(u)), \ u \in \mathbb{F}_q^k$.

As shown in Lemma~\ref{lem:HWDF}, for certain consecutive grouped weight
partitions, a full-size clique does not exist. However, from the second part of
Corollary~\ref{col:BPC1}, such partitions still admit a block-preserving
contraction of size $k+1$.


Now, we give a condition for a coset partition so that it has a block-preserving contraction.
For any subset $J \subseteq [k]$, define
$$
\mathfrak{S}_J
= \{u \in \mathbb{F}_q^k : \operatorname{supp}(u) \subseteq J\}=\bigcup_{A\subseteq J} S_A,
$$
where $S_A$ denotes a block of support partition that contains all vectors of $\mathbb{F}_q^k$ with support $A$. 

\begin{lemma} \label{lem:BPC2}
Let $V \le \mathbb{F}_q^k$ be a subspace. Suppose there exists a subset
$J \subseteq [k]$ such that $\mathfrak{S}_{[k]\setminus J} \subseteq V$.
Then $(\mathfrak{S}_J,\phi_J)$ is a block-preserving contraction of the coset partition of $V$, where
$\phi_J : \mathbb{F}_q^k \to \mathfrak{S}_J$ is defined as follows: for all
$u=(u_1,u_2,\ldots,u_k) \in \mathbb{F}_q^k$,
$$
(\phi_J(u))_i =
\begin{cases}
u_i, & \text{if } i \in J,\\
0, & \text{if } i \notin J,
\end{cases}
\qquad \text{for } i \in [k].
$$
\end{lemma}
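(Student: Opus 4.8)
The plan is to verify directly that the pair $(\mathfrak{S}_J,\phi_J)$ satisfies the two defining conditions of a block-preserving contraction, where the relevant partition is the coset partition $\mathcal{P}=\{x+V : x\in\mathbb{F}_q^k\}$ of $V$. As a preliminary observation, by construction every coordinate of $\phi_J(u)$ outside $J$ is zero, so $\mathrm{supp}(\phi_J(u))\subseteq J$ and hence $\phi_J(u)\in\mathfrak{S}_J$; thus $\phi_J$ indeed maps $\mathbb{F}_q^k$ into $\mathfrak{S}_J$.

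First I would establish the block-preservation property, namely that $\phi_J(u)$ lies in the same coset of $V$ as $u$. Consider the difference $u-\phi_J(u)$: its $i$-th coordinate equals $u_i-u_i=0$ for $i\in J$ and equals $u_i$ for $i\notin J$. Therefore $\mathrm{supp}(u-\phi_J(u))\subseteq[k]\setminus J$, which means $u-\phi_J(u)\in\mathfrak{S}_{[k]\setminus J}$. By the hypothesis $\mathfrak{S}_{[k]\setminus J}\subseteq V$, we conclude $u-\phi_J(u)\in V$, i.e.\ $\phi_J(u)\in u+V$. Hence whenever $u$ belongs to a block $x+V$ of $\mathcal{P}$, so does $\phi_J(u)$.

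Next I would check that $\phi_J$ does not increase distances (in fact for all pairs, not only those in distinct blocks). For arbitrary $u,v\in\mathbb{F}_q^k$, the $i$-th coordinate of $\phi_J(u)-\phi_J(v)$ is $u_i-v_i$ when $i\in J$ and $0$ when $i\notin J$, so $\mathrm{supp}(\phi_J(u)-\phi_J(v))=\{i\in J:u_i\neq v_i\}\subseteq\{i\in[k]:u_i\neq v_i\}$. Taking cardinalities,
$$
d(\phi_J(u),\phi_J(v))=\bigl|\{i\in J:u_i\neq v_i\}\bigr|\le\bigl|\{i\in[k]:u_i\neq v_i\}\bigr|=d(u,v).
$$
In particular this inequality holds whenever $u$ and $v$ lie in different blocks of $\mathcal{P}$, which is exactly what the definition requires. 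Together with the block-preservation property, this shows that $(\mathfrak{S}_J,\phi_J)$ is a block-preserving contraction of the coset partition of $V$.

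I do not expect a genuine obstacle here; the argument is essentially coordinatewise bookkeeping. The only subtlety is recognizing that $\mathfrak{S}_{[k]\setminus J}$ is precisely the set of vectors whose support avoids $J$, which is exactly where the discarded part $u-\phi_J(u)$ always lives — this is what makes $\mathfrak{S}_{[k]\setminus J}\subseteq V$ the natural hypothesis to impose. Once the contraction is in hand, Theorem~\ref{thm:BPC} immediately yields $r_{\mathcal{P}}(k,t)=N\big(\mathcal{D}_{\mathcal{P}}(t,u:u\in\mathfrak{S}_J)\big)$, reducing the search space from $q^k$ to $|\mathfrak{S}_J|=q^{|J|}$.
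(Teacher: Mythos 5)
Your proof is correct and follows essentially the same route as the paper: block membership via $u-\phi_J(u)\in\mathfrak{S}_{[k]\setminus J}\subseteq V$, and the distance inequality because projecting onto the coordinates in $J$ cannot create new disagreements. Your version just spells out the coordinatewise bookkeeping slightly more explicitly.
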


\begin{proof}
    Let the coset partition of $V$ be denoted by $\mathcal{V}$. 
    \begin{itemize}
        \item For any $u\in \mathbb{F}_q^k$, we have $\mathrm{supp}(u-\phi_J(u)) \subseteq [k]\setminus J$, which means $u-\phi_J(u) \in \mathfrak{S}_{[k]\setminus J} \subseteq V$. Therefore, $u$ and $\phi_J(u)$ belong to the same coset of $V$.
    \item  For any two vectors $u,v \in \mathbb{F}_q^k$, we have
$$d(\phi_J(u), \phi_J(v))\leq d(u,v),$$
as $\phi_J$ only deletes coordinates from a fixed set $[k]\setminus J$, and does not  create new disagreements.
    \end{itemize}

    Therefore, $(\mathfrak{S}_J, \phi_J)$ is a block-preserving contraction of $\mathcal{V}$.
\end{proof}

In particular, if $\dim(V)=\ell$ and there exists a subset $J\subseteq[k]$ with
$|J|=\ell$ satisfying $\mathfrak{S}_{[k]\setminus J}\subseteq V$, then the
block-preserving contraction $(\mathfrak{S}_J,\phi_J)$ yields a clique of full
size, since $|\mathfrak{S}_J|=q^{\ell}$ equals the number of cosets in the coset
partition $\mathcal{V}$.

\begin{example}
Consider the space $\mathbb{F}_2^5$ and the subspace
$$
V =\{(a,a,a,b,c): a,b,c\in\mathbb{F}_2\}\;\le\;\mathbb{F}_2^5 .
$$
Then $\dim(V)=3$, and the coset partition $\mathcal{V}$ of $V$ has
$4$ cosets. For instance, taking representatives $00000$, $10000$, $01000$, $00100$, the four cosets are
$$
\renewcommand{\arraystretch}{1.05}
\begin{array}{c|c|c|c}
V & 10000+V & 01000+V & 00100+V\\ \hline
00000 & 10000 & 01000 & 00100\\
00001 & 10001 & 01001 & 00101\\
00010 & 10010 & 01010 & 00110\\
00011 & 10011 & 01011 & 00111\\
11100 & 01100 & 10100 & 11000\\
11101 & 01101 & 10101 & 11001\\
11110 & 01110 & 10110 & 11010\\
11111 & 01111 & 10111 & 11011
\end{array}
$$
It can be verified that for any two distinct cosets of $\mathcal{V}$ there exist
vectors at Hamming distance~$1$. Consequently, the existence of a clique of full
size $|\mathcal{V}|=4$ would require four vectors in $\mathbb{F}_2^5$ that are
pairwise at Hamming distance~$1$, which is impossible.

Now take $J=\{1,2,3\}$.
Then $[k]\setminus J =\{4,5\}$, and
$$
\mathfrak{S}_{[k]\setminus J}
=\{x\in\mathbb{F}_2^5 : x_J=0\}
=\{(0,0,0,b,c): b,c\in\mathbb{F}_2\}
\subseteq W,
$$
Hence, by Lemma~\ref{lem:BPC2},
$(\mathfrak{S}_J,\phi_J)$ is a block-preserving contraction of $\mathcal{V}$, where $(\phi_J(u))_i = u_i$ for $i \in J$ and $(\phi_J(u))_i = 0$ for $i \in [k]\setminus J$.
Moreover,
$$
\mathfrak{S}_J
=\{x\in\mathbb{F}_2^5 : x_{4}=x_{5}=0\}
=\{00000,10000,01000,00100,11000,10100,01100,11100\},
$$
so $|\mathfrak{S}_J|=2^{|J|}=8$. In particular,
\[
|\mathcal{V}|=4 \;\le\; |\mathfrak{S}_J|=8 \;\le\; 2^5.
\]
\end{example}

\section{Function Class Privacy in FCPCs}
\label{sec:privacy}

In this section, we discuss a conceptual advantage of FCPCs over FCCs, namely a form of \emph{function class privacy}
with respect to the function being computed. Here, a \emph{function class} refers to the collection of all functions on the domain $\mathbb{F}_q^k$ that induce the same domain partition. This notion of function class privacy arises naturally from the fact that an FCPC depends only on the domain partition induced by the function, and not on the function itself.

\subsection{Partial privacy via partition-based encoding}

Recall that any function $f:\mathbb{F}_q^k \to S$ induces a domain partition
$\mathcal{P}_f$, where two vectors belong to the same block if and only if
they are mapped to the same function value. This partition naturally defines
a \emph{function class}, consisting of all functions on $\mathbb{F}_q^k$ that
induce the same domain partition $\mathcal{P}_f$.
In FCCs, the transmitter knows the exact function $f$, and hence no ambiguity regarding the function is present. In contrast, an FCPC requires the transmitter to know only the induced partition $\mathcal{P}_f$, and therefore only the corresponding function class is revealed not the exact function.
In this sense, FCPCs provide 
\emph{function class privacy}, i.e., the transmitter cannot distinguish between
functions belonging to the same function class.

If the codomain is unrestricted, infinitely many functions belong to the
function class corresponding to a given partition. Even when the codomain is
known and its size $|S|=H$ is fixed, a partition with $E$ blocks can correspond
to many distinct functions, obtained by assigning elements of $S$ to the blocks
in different ways. As $H \ge E$, the size of the function class corresponding to
a fixed partition with $E$ blocks is
$$\binom{H}{E} E! = \frac{H!}{(H-E)!}.$$ 
Consequently, partitions with a larger number of blocks admit a larger function
class, leading to a higher degree of function ambiguity. The following example
illustrates this.

\begin{example}
\label{ex13}
Consider the partition $\mathcal{P}$ of $\mathbb{F}_2^4$ defined as
\begin{align*}
\mathcal{P}=\{&
P_1=\{0000,0001,0110,0111\},\;
P_2=\{0010,0011,0100,0101\},\\
&
P_3=\{1000,1001,1110,1111\},\;
P_4=\{1100,1101,1010,1011\}\}.
\end{align*}
This partition is the same as the one considered in Example~\ref{ex2}. For a
fixed codomain $\mathbb{F}_2^2=\{00,01,10,11\}$, there are $4!=24$ distinct functions whose
 domain partition is exactly $\mathcal{P}$. Among these functions, some
are linear while others are nonlinear.

Three linear functions inducing the partition $\mathcal{P}$ are given in
Example~\ref{ex2}. As an illustration of function ambiguity, we now give a
nonlinear function $f:\mathbb{F}_2^4 \to S$ that induces the same partition,
defined by assigning
\[
f(x)=
\begin{cases}
01, & x\in P_1,\\
00, & x\in P_2,\\
11, & x\in P_3,\\
10, & x\in P_4.
\end{cases}
\]
Although this function is nonlinear, it induces the same domain partition
$\mathcal{P}$ as the linear functions in Example~\ref{ex2}. From the
transmitter’s perspective, these functions are indistinguishable when only the
partition $\mathcal{P}$ is revealed.
\end{example}

\subsection{Function class privacy and special functions}

For certain functions, such as locally binary functions, additional structure can be exploited to obtain particularly simple or efficient code constructions. In such cases, the domain partition may possess properties that enable very small $\mathcal{D}$-codes. As a construction for $2t$-locally binary functions is given in \cite{LBWY2023} achieving the lower bound $2t$, which only
require two vectors in the corresponding $\mathcal{D}$-code.

When only the partition is revealed, the transmitter may not be able to determine whether the partition arises from a special function, especially when the domain $\mathbb{F}_q^k$ is large. However, this
does not affect the existence or correctness of an optimal FCPC for a given partition. If desired, the receiver may reveal limited additional structural information about the function along with the partition. This information does not disclose the exact function, but enables the transmitter to employ constructions tailored to that special function.

It is possible for two distinct functions to induce the same domain partition and to exhibit the same structural characteristics, while differing in their precise mappings. From the transmitter's perspective, such functions are indistinguishable when only the partition and possibly limited structural information are revealed. The same FCPC applies to both functions,
and decoding remains correct for the receiver. Here is one such example.

\begin{example}
Consider a setting in which the receiver reveals to the transmitter a partition
$\mathcal{P}$ of $\mathbb{F}_q^{15}$ together with the information that the
desired function is a locally $(2t,2)$-bounded function. Equivalently, the receiver specifies that the partition $\mathcal{P}$ is a locally $(2t,2)$-bounded partition for $t=1$.
The partition is given by
\begin{multline*}
\mathcal{P}= \{P_1=\{u\in \mathbb{F}_q^{15} : 0\le \mathrm{wt}(u) \le 4 \},\ 
P_2=\{u\in \mathbb{F}_q^{15} : 5\le \mathrm{wt}(u) \le 9 \},\\
P_3=\{u\in \mathbb{F}_q^{15} : 10\le \mathrm{wt}(u) \le 14 \},\ 
P_4=\{u\in \mathbb{F}_q^{15} : \mathrm{wt}(u)=15\}\}.
\end{multline*}

This partition arises as the domain partition of the Hamming weight distribution
function $\Delta_5$ on $\mathbb{F}_q^{15}$. At the same time, the same partition
$\mathcal{P}$ is induced by other locally $(2t,2)$-bounded functions. For instance, consider the function
$f:\mathbb{F}_q^{15}\to \mathbb{F}_2^3$ defined by
$$
f(u)=\big(\mathbf{1}\{\mathrm{wt}(u)\ge 5\},\ 
\mathbf{1}\{\mathrm{wt}(u)\ge 10\},\ 
\mathbf{1}\{\mathrm{wt}(u)\ge 15\}\big),
$$
where the indicator function $\mathbf{1}\{\mathcal{E}\}$ equals $1$ if
$\mathcal{E}$ is true and $0$ otherwise. The function $f$ is also a locally $(2t,2)$-bounded function and induces the same domain partition $\mathcal{P}$.

Since $\mathcal{P}$ is locally $(2t,2)$-bounded for $t=1$, every Hamming ball of
radius $2$ intersects at most two blocks of $\mathcal{P}$. Consequently,
$\mathcal{P}$ admits a $(\mathcal{P},t)$-encoding with optimal redundancy for locally $(2t,2)$-bounded
partitions, as given in Construction~\ref{cons:Pt-encoding}.

From the transmitter’s perspective, the functions $\Delta_5$ and $f$ are
indistinguishable when only the partition $\mathcal{P}$ and the information
that it is locally $(2t,2)$-bounded are revealed. Consequently, the encoding
given by Construction~\ref{cons:Pt-encoding} applies to both functions, and the
decoding remains correct for the receiver.
\end{example}

The privacy provided by FCPCs is function class rather than
cryptographic in nature. It does not aim to protect the function completely, but instead limits the
information revealed about the exact function being computed. Most importantly, this
privacy does not interfere with the characterization of optimal redundancy or
the existence of optimal codes, since these depend only on the domain partition.

Function class privacy therefore emerges as a natural byproduct of the
partition-based encoding, highlighting an additional conceptual
distinction between FCCs and FCPCs.

\section{Conclusion}
\label{conclusion}

In this work, we introduced function-correcting partition codes as a natural generalization of function-correcting codes by shifting the focus from the function values to the domain partition induced by the function. By focusing on the domain partition rather than the function values, we showed that every FCC is a special case of an FCPC and that a single $(\mathcal{P},t)$-encoding can protect several functions simultaneously through the join of their domain partitions. This naturally led to the notion of partition gains, quantifying the savings in terms of redundancy and rate, compared to straightforward solutions. We established general redundancy bounds for such codes. Then we focus on how one code can be used to protect multiple linear functions through the intersection of their kernels. We also provide explicit FCPC constructions for locally bounded partitions and grouped weight partitions. Further, we introduce the partition graph to characterize when the problem size to obtain optimal redundancy can be reduced. We also discuss the partial privacy offered by this partition based encoding approach. Overall, the partition-based viewpoint offers a flexible and efficient approach to multi-function error protection.

\section{Acknowledgement}
This work was supported by a joint project grant to Aalto University and Chalmers University of Technology (PIs A. Graell i Amat and C. Hollanti) from the Wallenberg AI, Autonomous Systems and Software Program, and additionally by the Science and Engineering Research Board (SERB) of the Department of Science and Technology (DST), Government of India, through the J.C. Bose National Fellowship to Prof. B. Sundar Rajan.

\end{document}